\definecolor{green}{rgb}{0.0, 0.5, 0.0}
\newtheorem{thm}{Theorem}[section]
\newtheorem{dfn}[thm]{Definition}
\newtheorem{prop}[thm]{Proposition}
\newtheorem{cor}[thm]{Corollary}
\newtheorem{lem}[thm]{Lemma}
\newcommand\restr[2]{{
  \left.\kern-\nulldelimiterspace 
  #1 
  \right|_{#2} 
}}
\renewcommand{\d}{\mathrm{d}}
\newcommand{\inn}{i}
\newcommand{\Lie}{\mathscr{L}}
\DeclareMathAlphabet{\mathpzc}{OT1}{pzc}{m}{it}
\def\d{\mathrm{d}}
\let\oldemph\emph
\let\emph\textbf
\title{{\sffamily From Variational Principles to Geometry}}
\author{{\sffamily 
Jordi Gaset Rifà%
\thanks{e-mail:
   jordi.gaset@cunef.edu \ ORCID: 0000-0001-8796-3149}\ ,\
}
\\[1ex]
\normalsize\itshape\sffamily 
Department of Mathematics,
CUNEF Univ.,
Madrid, Spain.
}
\date{}
\begin{document}

\maketitle

\begin{abstract}
    A method to construct a geometric structure with the same solutions as a given variational principle is presented. The method applies to large families of variational principles. In particular, the known results that assign cosymplectic geometry to Hamilton's principle and cocontact geometry to Herglotz's principle for regular Lagrangians are recovered. The unified Lagrangian-Hamiltonian formalism is also recovered via the absorption of the holonomy conditions. The method is applied to singular time-dependent Lagrangians, proving that they can always be described with a (pre)cosymplectic structure, although it is not always given by the Lagrangian $2$-form. When applied to singular action-dependent Lagrangians, the method does not always lead to (pre)cocontact geometry. In these cases, the resulting geometry associated with the Herglotz's variational principle is new.
\end{abstract}




{\setcounter{tocdepth}{2}
\def\baselinestretch{1}
\small
\def\addvspace#1{\vskip 1pt}
\parskip 0pt plus 0.1mm
\tableofcontents
}

\newpage

\section{Introduction}

Geometric mechanics studies dynamical systems with geometric objects. The discipline stems from analytic mechanics and the study of physical problems. The prototypical theory in geometric mechanics is symplectic geometry. It is the modern incarnation of classical mechanics. It describes in a unified geometric language Newtonian mechanics, Hamilton's variational principle, Lagrangian and Hamiltonian mechanics and Noether's theorem, among other topics. 

The geometric description provides an intrinsic formulation and leads to more powerful results. For instance, the Meyer-Marsden-Weinstein reduction theorem \cite{MW} shows how to reduce a system to a lower-dimensional one using its symmetries. If the system is integrable, the Liouville-Mineur-Arnold theorem \cite{AM-78} states that one can find canonical coordinates adapted to the dynamics. In some instances, a complete classification of integrable dynamical systems is possible through the use of the momentum map and polytopes \cite{Atiyah}.

However, not all dynamical systems can be described with a symplectic structure. To overcome this
problem, new geometric structures have been developed as needed. When the Lagrangian is not regular, the corresponding structure is presymplectic geometry. Another natural generalization of symplectic geometry is cosymplectic geometry, which is the natural framework when the system is explicitly time-dependent. See \cite{munoz-lecanda_geometry_2024} for a recent book presenting these structures. When the symplectic manifold has a distinguished hypersurface, due to having a boundary or singularities, it is natural to use $b$-symplectic geometry \cite{GMPS-2015}. For field theories, several extensions of symplectic geometry exist, like $k$-symplectic and multisymplectic (see \cite{LeSaVi2016, Roman-Roy:2005vwe} and references therein, respectively).

Symplectic geometry is inherently adapted to describe
conservative systems, therefore, open systems are best described with other geometries. In particular, contact geometry has recently received a lot of attention \cite{Bravetti2017, LL-2018, GGMRR-2019b}. It has been extended to precontact geometry \cite{LL-2019} for singular action-dependent Lagrangians and cocontact geometry \cite{LGGMR-2022} for time and action-dependent Lagrangians. Contact analogues of $b$-symplectic \cite{MO-2023}, $k$-symplectic \cite{GGMRR-2019,GGMRR-2020} and multisymplectic \cite{LGMRR-2022b,deLeon:2024ztn, SurveyAdFT, bracketsmultico} geometries have been recently proposed. 

Powerful geometric techniques, like the Meyer-Marsden-Weinstein theorem, the Liouville-Mineur-Arnold theorem or the momentum map explained before, have to be adapted for each geometric structure. For instance, the Liouville-Mineur-Arnold theorem for contact systems needs a non-trivial modification \cite{CLLL}. This situation compartmentalizes the field of geometric mechanics, as each geometric structure is developed independently.

The existence of different compartments inside geometric mechanics poses new questions. How are the different structures related? Are they disjoint? A result in this line of thought is the Jacobi structure introduced by Lichnerowicz \cite{Lichnerowicz}, which encompasses several structures, including symplectic, contact and Poisson. More recently, there have been studies on the relation between (pre)cosymplectic and (pre)contact geometries \cite{deLeon:2016vms,grabowska}.

The geometric structure associated with a dynamical system is especially relevant if it faithfully reflects its properties; hence, we must be able to identify the correct geometric structure for it. Moreover, the particular geometry identified should give the principal properties of the system. In other words, the assignment of geometric structures should provide a broad classification of dynamical systems. 

It could happen that different parts of a dynamical system behave in qualitatively different ways. For instance, when some technical conditions hold, the dynamics of a contact system can be split into Reeb dynamics and Liouville dynamics \cite{BLMP-2020}. There also exist the processes of contactification and symplectification, where one can embed a contact system into a symplectic system, and viceversa, by adding or eliminating a variable \cite{GGKM-2024,Sloan:2024kzb} (see \cite{bracketsmultico, Bell:2025kxx} for a field theoretical analogue). These works are a small selection of the ongoing debate on the relevance of contact dynamics with respect to symplectic dynamics.

Another relevant question is how to assign a geometry to a new dynamical system. So far, this process is done case-by-case; selecting a geometric structure and then attempting to solve the corresponding inverse problem \cite{C1981,LGL-2021}. Even when a Lagrangian and a variational principle are known, the geometry can not always be recovered. For instance, the Lagrangians
\begin{align}\label{eq:lagraros}
    L(t,q,v)=tv\,\quad\text{ and }\quad L(q,v,s)=sv 
\end{align}
fail to conform to the standard precosymplectic and precontact geometries, respectively, against expectation.

In this work, we will give a method to assign a geometry to a dynamical system that has a variational principle. We do not prove that the associated geometry is unique. Nevertheless, for the systems where there is an agreed-upon geometry, the method gives the same result. As an application, we will derive a geometric structure for the Lagrangians \eqref{eq:lagraros}.

The paper is structured as follows. In section \ref{sec:2} we present variational principles with constraints, and show how to rewrite them in geometric terms. In sections \ref{sec:trans}, \ref{sec:nonhol}, \ref{sec:vak} and \ref{sec:nhvak} we present the first step of the method in different situations depending on the constraints. The last step is presented in section \ref{sec:abs}, where we show how to absorb the constraints. Finally, in sections \ref{sec:singtime} and \ref{sec:singaction} we apply the method to uncover the geometry of Lagrangians of the form \eqref{eq:lagraros}.

\subsection*{Notation}
Consider the fiber bundle $\pi:E\rightarrow \mathbb{R}$. Let $t$ be the coordinate of $\mathbb{R}$.
\begin{itemize} 
\item $\hat\tau=\d t$ and $\tau=\pi^*\hat\tau$.
\item $I_\gamma$: domain of the section $\gamma$. Along the text we will assume they are open intervals of $\mathbb{R}$.
    \item $\Gamma_U(\pi)$: local sections of $\pi$ whose domain contains $U\subset \mathbb{R}$.
    
    \item $V(\pi)$: vertical distribution of $\pi$.
    \begin{align}
        V(\pi)=\bigcup_{x\in E}V_x(\pi)=\bigcup_{x\in E}\{v\in T_x E\,|\,\d\pi(v)=0\}\,.
    \end{align}
    \item  $B(\pi)$: $\pi$-basic functions of $E$.
        \begin{align}
    B(\pi)=\{g\in C^\infty(E)| 
    \,\forall x\in E \text{ and }\,\forall v\in V_x(\pi)\,,\,\inn_v\d g=0\}\,.       
\end{align}
    
    \item $\Gamma(D)$: sections (vector fields) of the distribution $D\subset TE$.    
    \item$\langle \alpha_i \rangle_{i=1,\cdots,k}$: all linear combinations of the elements $\alpha_i$ with coefficients in $C^\infty(E)$.
    
    \item$\langle S \rangle^{B(\pi)}$: all finite linear combinations of the elements of $S$ with coefficients in $B(\pi)$.

\end{itemize}

\section{From a Variational Formulation to a Geometric Formulation}\label{sec:2}

In this section we will see how to write variational problems in a geometric language and state the aim of the method presented in this work.

The curves $\gamma$ will be understood as local sections of fiber bundles over $\mathbb{R}$, $\pi:E\rightarrow \mathbb{R}$. They are trivial and we have a global coordinate on $\mathbb{R}$ which we will denote $t$. The corresponding volume form will be denoted $\hat{\tau}=\d t$, and $\tau=\pi^*\hat{\tau}$. Moreover, there exists a (non-unique) global vector field $R_t\in \mathfrak{X}(E)$ such that $d\pi(R_t)=\frac{\d}{\d t}$. In particular, $\inn_{R_t}\tau=1$. The domain of the curves will be denoted $I_\gamma\subset\mathbb{R}$, and we will assume it is an open interval.

We will use the fiber bundle approach in this work because it conforms better with geometric mechanics literature, it is well-suited to time-dependent systems and it is more amenable to future generalizations.

\subsection{Variational Problems}

Let $\mathcal{K}$ be a compact subset of $\mathbb{R}$. Given a $1$-form $\theta$ on $E$, we have the action $\mathcal A(\mathcal{K},\cdot):\Gamma_\mathcal{K}(\pi)\rightarrow \mathbb{R}$ given by
$$
\mathcal A(\mathcal{K},\gamma)=\int_\mathcal{K} \gamma^*\theta\,.
$$

A variation of a field $\gamma$ will be given by $\gamma_s=\psi_s\circ\gamma$, where $\psi_s$ is the element of the one-parameter group of diffeomorphisms of $E$ defined by the smooth vector field $\xi$ corresponding to parameter value $s$. The variation of $\mathcal{A}$ on $\gamma$ in the direction of $\xi$ over $\mathcal{K}$ is (see, for instance, \cite{Krupka,munoz-lecanda_geometry_2024}) 
\begin{align}
    \left.\frac{\d}{\d s}\right|_{s=0}\mathcal A(\mathcal{K},\gamma_s)=\int_\mathcal{K} \gamma^*\Lie_{\xi}\theta\,.
\end{align}

The direction of the variation $\xi$, also called just a variation, is,  more generally, a vector field along a curve $\gamma$ which vanishes at the boundary of $\mathcal{K}$  \cite{gracia_geometric_2003}. In this work, it will be enough to consider the variations as vector fields on $\mathfrak{X}(E)$. It is important to note that, depending on the variational problem, the variations represent only a subset of vector fields $\textup{Adm}\subset\mathfrak{X}(E)$. The elements of Adm are called the \textbf{admissible variations}. Given a curve $\gamma$ with domain $I_\gamma$ and a compact set $\mathcal{K}\subset I_\gamma$, we define
\begin{align}
    \textup{Adm}(\gamma,\mathcal{K}):=\{\xi\in\textup{Adm} \,|\, \xi(\gamma(\partial\mathcal{K}))=0\}
\end{align}
A similar object is defined in \cite{Krupka}. As a consequence that the variations vanish at the boundary, in this work we will not concern ourselves with boundary terms. This will make the exposition clearer. Boundary conditions are included in a variational problem in diverse ways \cite{TTW-2000}. They are relevant in applications \cite{DH-2009} and can be implemented geometrically \cite{Margalef-Bentabol:2020teu}.



Last but not least, it is common to impose extra conditions on the sections, called constraints. The constraints considered in this work are given by functions or $1$-forms on $E$, which we will denote $J\subset\Omega^\bullet(E)$. The constraints play another crucial role, as commonly the admissible variations are defined as those vector fields compatible with the constraints.

After these considerations, in this work we will consider the following notion of variational principle:

\begin{dfn}A \textbf{variational problem or principle} (VP) on a fiber bundle $\pi:E\rightarrow M$ is a triple $(\theta,J,\textup{Adm})$, where $\theta\in\Omega^1(E)$, $J\subset \Omega^\bullet(E)$ and $\textup{Adm}\subset\mathfrak{X}(E)$.
\end{dfn}

\begin{dfn}\label{dfn:solVP}
A section $\gamma$ of $\pi$ with domain $I_\gamma\subset\mathbb{R}$ is a \textbf{solution} of the variational problem $(\theta,J,\textup{Adm})$ if
\begin{enumerate}
    \item $\gamma^*\alpha=0$ for all $\alpha\in J$.
    \item For all compact sets $\mathcal{K}\subset I_\gamma$,
$$
\int_\mathcal{K} \gamma^*\Lie_\xi\theta=0\,, \quad \forall\xi\in \textup{Adm}(\gamma,\mathcal{K})\,. 
$$
\end{enumerate}
\end{dfn}

The set of all solutions will be denoted $\text{Sol}(\theta,J,\textup{Adm})$.

\subsection{Geometric Formulation of Variational Problems}

In the geometric study of dynamical systems, a different formulation closer to the spirit of differential geometry is used. It hinges on the fundamental lemma of the calculus of variations.

The set of $\pi$-basic functions will be denoted:
\begin{align}
    B(\pi)=\{g\in C^\infty(E)| 
    \,\forall x\in E \text{ and }\,\forall v\in V_x(\pi)\,,\,\inn_v\d g=0\}\,.
\end{align}

\begin{thm}\label{thm:base} Consider the VP $(\theta,J,\textup{Adm})$. If Adm is $B(\pi)$-linear, then a local section $\gamma$ of $\pi$ is a solution of $(\theta,J,\textup{Adm})$ if and only if
\begin{enumerate}
    \item $\gamma^*\alpha=0$, for all $\alpha\in J$.
    \item $\gamma^*\inn_{\xi}\d\theta=0$, for all $\xi\in\textup{Adm}$.
\end{enumerate}  
\end{thm}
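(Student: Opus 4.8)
The plan is to reduce the equivalence to the classical fundamental lemma of the calculus of variations by first rewriting the variation integrand $\gamma^*\Lie_\xi\theta$ using Cartan's magic formula, and then exploiting the one-dimensionality of the base together with the $B(\pi)$-linearity hypothesis on $\mathrm{Adm}$.

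First I would observe that for any $\xi\in\mathfrak{X}(E)$ we have $\Lie_\xi\theta=\inn_\xi\d\theta+\d\inn_\xi\theta$, so that $\gamma^*\Lie_\xi\theta=\gamma^*\inn_\xi\d\theta+\d(\gamma^*\inn_\xi\theta)$. Since $\gamma$ is a curve (a section over $\mathbb{R}$), each of these is a $1$-form on the interval $I_\gamma$; integrating over a compact $\mathcal{K}\subset I_\gamma$ the second term becomes a boundary term $\int_{\partial\mathcal{K}}\gamma^*\inn_\xi\theta$, which vanishes whenever $\xi\in\mathrm{Adm}(\gamma,\mathcal{K})$ because such $\xi$ vanish on $\gamma(\partial\mathcal{K})$. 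Hence condition 2 of Definition \ref{dfn:solVP} is equivalent to $\int_\mathcal{K}\gamma^*\inn_\xi\d\theta=0$ for all $\xi\in\mathrm{Adm}(\gamma,\mathcal{K})$ and all compact $\mathcal{K}\subset I_\gamma$. The $(\Rightarrow)$ direction of the theorem is then immediate: if $\gamma^*\inn_\xi\d\theta=0$ pointwise for all $\xi\in\mathrm{Adm}$, then in particular its integral over any $\mathcal{K}$ vanishes, and condition 1 is identical in both formulations.

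For the $(\Leftarrow)$ direction I would write, in the global coordinate $t$ on $\mathbb{R}$, $\gamma^*\inn_\xi\d\theta = f_\xi(t)\,\d t$ for a smooth function $f_\xi$ on $I_\gamma$ depending on $\xi$. The hypothesis gives $\int_\mathcal{K} f_\xi(t)\,\d t=0$ for every compact subinterval $\mathcal{K}$, and by the fundamental lemma of the calculus of variations (or simply by letting $\mathcal{K}$ shrink to a point) this forces $f_\xi\equiv 0$ on $I_\gamma$ — i.e. $\gamma^*\inn_\xi\d\theta=0$ — for every $\xi\in\mathrm{Adm}(\gamma,\mathcal{K})$ for some $\mathcal{K}$. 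The remaining point, and the one where the $B(\pi)$-linearity hypothesis is essential, is to upgrade this from "for all $\xi\in\mathrm{Adm}$ that vanish somewhere on $\gamma$" to "for all $\xi\in\mathrm{Adm}$". Given an arbitrary $\xi\in\mathrm{Adm}$ and a point $t_0\in I_\gamma$, I would choose a $\pi$-basic function $g\in B(\pi)$ (equivalently, a function of $t$ alone, pulled back) that vanishes on a neighbourhood of $t_0$ but is $1$ outside a slightly larger neighbourhood; then $g\xi\in\mathrm{Adm}$ by $B(\pi)$-linearity, $g\xi$ vanishes near $t_0$ hence lies in $\mathrm{Adm}(\gamma,\mathcal{K})$ for suitable $\mathcal{K}$, and $\gamma^*\inn_{g\xi}\d\theta=(g\circ\gamma)\,\gamma^*\inn_\xi\d\theta$ vanishes; since $g\circ\gamma=1$ away from $t_0$ this shows $\gamma^*\inn_\xi\d\theta=0$ away from $t_0$, and letting $t_0$ vary covers all of $I_\gamma$.

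The main obstacle I anticipate is bookkeeping the localization carefully: one must check that the cutoff functions can genuinely be taken $\pi$-basic (which is fine here since $B(\pi)$ contains all pullbacks $\pi^*h$ for $h\in C^\infty(\mathbb{R})$, and bump functions on $\mathbb{R}$ exist) and that multiplying an admissible variation by such a function keeps it admissible and keeps it vanishing at the relevant boundary points — both of which are exactly what $B(\pi)$-linearity and the definition of $\mathrm{Adm}(\gamma,\mathcal{K})$ deliver. No deep analysis is needed beyond the standard fundamental lemma; the content is organizing the reduction so that the boundary term drops and the pointwise conclusion propagates across $I_\gamma$.
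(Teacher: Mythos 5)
Your reduction via Cartan's formula and the easy direction (pointwise conditions $\Rightarrow$ vanishing of the integrals for variations in $\textup{Adm}(\gamma,\mathcal{K})$) are correct and agree with the paper. The gap is in the hard direction. First, the claim ``the hypothesis gives $\int_\mathcal{K} f_\xi(t)\,\d t=0$ for every compact subinterval $\mathcal{K}$'' is not what Definition \ref{dfn:solVP} provides: for a fixed $\xi$ the integral identity is only available over those $\mathcal{K}$ for which $\xi\in\textup{Adm}(\gamma,\mathcal{K})$, i.e.\ $\xi(\gamma(\partial\mathcal{K}))=0$, and for a generic $\xi$ this is at best a single scalar identity, from which $f_\xi\equiv 0$ does not follow. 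Second, the localization you use to repair this fails for a one-dimensional base: if $g$ vanishes on a neighbourhood $U$ of $t_0$ and is $1$ outside, then (absent further zeros of $\xi\circ\gamma$) the condition $g\xi\in\textup{Adm}(\gamma,\mathcal{K})$ forces $\partial\mathcal{K}\subset U$; since $\min\mathcal{K}$ and $\max\mathcal{K}$ lie in $\partial\mathcal{K}$ and $U$ is an interval, this gives $\mathcal{K}\subset U$, where $g\equiv 0$, so the resulting identity $\int_\mathcal{K}(g\circ\gamma)f_\xi\,\d t=0$ is vacuous. Moreover, even granting a suitable $\mathcal{K}$, membership of $g\xi$ in $\textup{Adm}(\gamma,\mathcal{K})$ only yields the integral identity over that $\mathcal{K}$, not the pointwise statement $\gamma^*\inn_{g\xi}\d\theta=0$ that your argument then multiplies out; that pointwise statement is exactly what is being proved, so the step is circular.

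The repair is to orient the cutoff the other way, which is what the paper does: fix $\xi\in\textup{Adm}$ and a compact $\mathcal{K}\subset I_\gamma$, and for an \emph{arbitrary} $\hat g\in C^\infty(\mathbb{R})$ with $\supp(\hat g)$ contained in (the interior of) $\mathcal{K}$, set $g=\hat g\circ\pi\in B(\pi)$. Then $g\xi\in\textup{Adm}$ by $B(\pi)$-linearity and $g\xi\in\textup{Adm}(\gamma,\mathcal{K})$ because $\hat g$ vanishes at $\partial\mathcal{K}$; the boundary term drops and one obtains
\begin{align}
0=\int_\mathcal{K}\hat g\,\gamma^*\inn_\xi\d\theta=\int_{I_\gamma}\hat g\,f_\xi\,\d t
\end{align}
for \emph{every} compactly supported $\hat g$, and now the fundamental lemma legitimately yields $f_\xi\equiv 0$ on $I_\gamma$. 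So the test-function role should be played by the basic cutoff itself, ranging over all bumps, rather than by shrinking the integration domain.
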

\begin{proof}
The first condition is the same as the first condition of \ref{dfn:solVP}, so we will concentrate on the second one.
Let $\gamma$ be a solution of the variational problem, with domain $I_\gamma$. Let $\hat{g}\in C^\infty(\mathbb{R})$ be a compactly supported function with $Supp(\hat{g})\subset\mathcal{K}\subset I_\gamma$, for some compact set $\mathcal{K}$ inside $I_\gamma$. Define $g=\hat{g}\circ\pi\in C^\infty(E)$. It is straightforward from its definition that $g\in B(\pi)$ and $g\circ\gamma=\hat{g}$. For any $\xi\in\textup{Adm}$, the product $g\xi$ is also an admissible variation because Adm is $B(\pi)$-lineal. Moreover, $g\xi\in\textup{Adm}(\gamma,\mathcal{K})$ because $\hat{g}$ has support inside $\mathcal{K}$. The corresponding variation on the solution $\gamma$ over $\mathcal{K}$ is:
\begin{align}
  0&=\int_\mathcal{K} \gamma^*\Lie_{g\xi}\theta
  =\int_\mathcal{K} \gamma^*\left(\inn_{g\xi}\d\theta+\d\inn_{g\xi}\theta\right)
=\int_\mathcal{K} \gamma^*\inn_{g\xi}\d\theta+\int_{\partial \mathcal{K}} \gamma^*\inn_{g\xi}\theta
\\
&=\int_\mathcal{K}\hat{g} \gamma^*\inn_{\xi}\d\theta+\int_{\partial \mathcal{K}} \hat{g}\gamma^*\inn_{\xi}\theta
=\int_\mathcal{K}\hat{g} \gamma^*\inn_{\xi}\d\theta
=\int_{I_\gamma}\hat{g} \gamma^*\inn_{\xi}\d\theta
\end{align}
Since this holds for all compactly supported smooth functions $\hat{g}$ on $I_\gamma$, according to the fundamental lemma of the calculus of variations, it must be that
\begin{align}\label{eq:segonacondicio}
    \gamma^*\inn_\xi\d\theta=0\,,\,\text{ for all } \xi\in\textup{Adm}.
\end{align}

Conversely, let $\gamma$ be a curve that satisfies \eqref{eq:segonacondicio}. For any compact set $\mathcal{K}\subset I_\gamma$ and any vector field $\xi\in \textup{Adm}(\gamma,\mathcal{K})$:
\begin{align}
  \int_\mathcal{K} \gamma^*\Lie_{\xi}\theta
  =\int_\mathcal{K} \gamma^*\left(\inn_{\xi}\d\theta+\d\inn_{\xi}\theta\right)
=\int_\mathcal{K} \gamma^*\inn_{\xi}\d\theta+\int_{\partial \mathcal{K}} \gamma^*\inn_{\xi}\theta=0\,.
\end{align}

\end{proof}

This result motivates the following definitions.

\begin{dfn}A \textbf{geometric variational problem or principle} (GVP) on a fiber bundle $\pi:E\rightarrow \mathbb{R}$ is given by the triple $(\Omega,J,\textup{Adm})$, where $\Omega\in\Omega^2(E)$, $J\subset \Omega^\bullet(E)$ and $\textup{Adm}\subset\mathfrak{X}(E)$.
\end{dfn}

Notice that we are not requiring that $\Omega$ is a closed form.

\begin{dfn}\label{dfn:solsec}
A local section $\gamma$ of $\pi$ is a \textbf{solution} of the geometric variational problem $(\Omega,J,\textup{Adm})$ if
\begin{enumerate}
    \item $\gamma^*\alpha=0$ for all $\alpha\in J$.
    \item $\gamma^*\inn_{\xi}\Omega=0$, for all $\xi\in\textup{Adm}$.
\end{enumerate}
\end{dfn}

The set of all solutions will be denoted $\text{Sol}(\Omega,J,\textup{Adm})$. In the geometric approach it is convenient to think about solutions as vector fields. See, for instance, the Fourth Postulate of Hamiltonian mechanics in \cite{munoz-lecanda_geometry_2024}. Recall that, if $\gamma$ is an integral section of $Z$, then the time-translated curve $\gamma_a(t):=\gamma(t+a)$ is also an integral section of $Z$ for all $a\in\mathbb{R}$.

\begin{dfn}\label{dfn:solvector}
A vector field $Z\in \mathfrak{X}(E)$ is a \textbf{solution} of the geometric variational problem $(\Omega,J,\textup{Adm})$ if its integral curves are, up to time-translation, solutions of $(\Omega,J,\textup{Adm})$. 
\end{dfn}

\begin{lem}
A vector field $Z\in \mathfrak{X}(E)$ is a \textbf{solution} of the geometric variational problem $(\Omega,J,\textup{Adm})$ if and only if
\begin{enumerate}
    \item $\inn_Z\alpha=0$ for all $\alpha\in J$.
    \item $\inn_Z\inn_{\xi}\Omega=0$, for all $\xi\in\textup{Adm}$.
    \item $\inn_Z\tau=1$.
\end{enumerate}
\end{lem}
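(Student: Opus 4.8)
The plan is to reduce the statement about a vector field $Z$ to Definition \ref{dfn:solsec} applied to its integral curves, using the fact that a solution section is by definition (after time-translation) an integral curve of $Z$. I would argue by unpacking what it means for \emph{every} integral section $\gamma$ of $Z$ to be a solution, and then translating the pointwise conditions on $\gamma$ back into pointwise conditions on $Z$.

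First I would fix a point $x\in E$ and let $\gamma$ be an integral curve of $Z$ passing through $x$, with domain an open interval. Since $Z$ is a vector field on $E$, such a curve exists through every point, and $Z_{\gamma(t)}=\dot\gamma(t)$. The key observation is that for any differential form $\beta$ on $E$, the pullback $\gamma^*\beta$ vanishes identically if and only if $\inn_{\dot\gamma(t)}\beta_{\gamma(t)}=0$ for all $t$ (pullback of a form along a curve into $\mathbb{R}$ is just its contraction with the velocity, times $\d t$, for $1$-forms; for higher forms the pullback is automatically zero, but the relevant $J$-condition and the $\inn_\xi\Omega$-condition both produce $1$-forms so this is exactly what we need). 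Hence $\gamma^*\alpha=0$ for all $\alpha\in J$ along every integral curve is equivalent to $\inn_Z\alpha=0$ on all of $E$ (for the $1$-form constraints), and $\gamma^*\inn_\xi\Omega=0$ for all $\xi\in\textup{Adm}$ along every integral curve is equivalent to $\inn_Z\inn_\xi\Omega=0$ on $E$. This gives conditions (1) and (2).

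The remaining point is condition (3), $\inn_Z\tau=1$, which I would get from the time-translation clause in Definition \ref{dfn:solvector} together with the structure of $\pi:E\to\mathbb{R}$. A solution section $\gamma$ is a \emph{section} of $\pi$, so $\pi\circ\gamma=\id_{I_\gamma}$, which forces $\d\pi(\dot\gamma(t))=\frac{\d}{\d t}$, i.e. $\inn_{\dot\gamma(t)}\tau=1$; since $Z$ restricted to the image of an integral curve that is a reparametrized section equals $\dot\gamma$, this yields $\inn_Z\tau=1$ along that curve, and as such curves cover $E$ (through every point there is an integral curve, and the solution condition with time-translation means it can be taken to be a section up to shift), we get $\inn_Z\tau=1$ everywhere. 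Conversely, if (1)--(3) hold, then the condition $\inn_Z\tau=1$ guarantees that an integral curve $\gamma$ of $Z$ satisfies $\frac{\d}{\d t}(\pi\circ\gamma)=1$, so after composing with a suitable translation of $\mathbb{R}$ it becomes a genuine section of $\pi$; then (1) and (2) give exactly conditions (1) and (2) of Definition \ref{dfn:solsec}, so $\gamma$ (up to time-translation) is a solution section, i.e. $Z$ is a solution.

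The main obstacle I anticipate is being careful about the quantifier structure and the role of time-translation: a priori an integral curve of $Z$ need not be a section of $\pi$ at all (its image in $\mathbb{R}$ could be a nontrivial reparametrization), so one must show that condition (3) is precisely what rectifies this, and that ``up to time-translation'' is the exact slack needed. A second minor subtlety is justifying the equivalence between pullback-vanishing along all integral curves and pointwise contraction-vanishing: this requires that through every point of $E$ there passes an integral curve of $Z$ (true, by existence of solutions of ODEs) and that the forms involved are at most $1$-forms after the relevant contraction, so that $\gamma^*$ faithfully detects the contraction with $Z$. Once these two points are handled, the rest is a direct unwinding of the definitions.
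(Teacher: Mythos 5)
Your proposal is correct and follows essentially the same route as the paper: you unwind Definition \ref{dfn:solvector} along integral curves, use the identity $\gamma^*\varepsilon(\tfrac{\d}{\d t})=\inn_Z\varepsilon\circ\gamma$ for $1$-forms together with the existence of an integral curve through every point to obtain conditions (1) and (2), and identify $\inn_Z\tau=1$ as exactly the condition making an integral curve a section up to time-translation. The paper's proof is the same argument, including the same handling of the translation constant $a$.
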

\begin{proof}
   Let $\gamma$ be an integral curve of $Z$, and $t\in I_\gamma$. We can write any vector $v\in T_t\mathbb{R}$  as $v=\rho \left.\frac{\d}{\d t}\right|_t$ for a certain real number $\rho$. Given a $1$-form $\varepsilon\in\Omega^1(E)$ 
   \begin{align}\label{eq:camposproof}
       \gamma^*\varepsilon_t(v)=\varepsilon_{\gamma(t)}(\d \gamma_t(v))=\varepsilon_{\gamma(t)}(\rho Z_{\gamma(t)})=\rho(\inn_Z\varepsilon)_{\gamma(t)}\,.
   \end{align}
   Since $Z$ is smooth, it has an integral curve passing through any point of $E$. This proves that the first and second equations are equivalent to the first and second equations of \ref{dfn:solsec}.

   Moreover, from \eqref{eq:camposproof} with $\varepsilon= \tau$ we deduce that
   \begin{align*}
       \inn_Z\tau=1\Longleftrightarrow \gamma^*\tau=\hat\tau\,. 
   \end{align*}
   If $\gamma^*\tau=\hat\tau$, then, for all $t\in I_\gamma$, $\gamma^*t=t-a$ for some constant $a\in\mathbb{R}$ independent of $t$. Then the time-translated curve $\gamma_a$ is a section of $\pi$.
\end{proof}

The equations labelled as 2 in \ref{dfn:solsec} and in \ref{dfn:solvector} are called the \textbf{equations of motion}.

\subsection{Relations between Variational Problems}

Let $Sol_1$ and $Sol_2$ be the sets of solutions of two variational problems $V_1$ and $V_2$ on the bundles $\pi_1:E_1\rightarrow\mathbb{R}$ and $\pi_2:E_2\rightarrow\mathbb{R}$ respectively. The problems $V_1$ and $V_2$ can be VP or GVP. An smooth bundle map $\mu:E_1\rightarrow E_2$ acts on local sections over a set $U\subset\mathbb{R}$  by composition:
\begin{align}
    \tilde{\mu}:\Gamma_U(\pi_1)&\rightarrow \Gamma_U(\pi_2)
    \\
    \gamma&\mapsto \tilde{\mu}(\gamma)=\mu\circ\gamma
\end{align}

and $I_{\tilde{\mu}(\gamma)}=I_\gamma$.

\begin{dfn}\label{dfn:equal} Consider two (geometric) variational problems $V_1$, $V_2$:
\begin{itemize}
    \item $V_1$ is \textbf{dynamically equivalent} to $V_2$ if $Sol_1=Sol_2$.
    \item $V_1$ is a \textbf{subsystem} of $V_2$ if $Sol_1\subset Sol_2$.  In this situation, we also say that $V_2$ is a \textbf{supersystem} of $V_1$.
    \item $V_1$ is \textbf{dynamically $\mu$-equivalent} to $V_2$ if there exists a smooth bundle map $\mu:E_1\rightarrow E_2$ such that $\tilde{\mu}(Sol_1)=Sol_2$ and $\tilde{\mu}$ is injective on $Sol_1$. If, moreover, $\mu$ is a diffeomorphism, then they are called \textbf{diffeomorphic}; in particular, $E_1\approx E_2$. 
    \item $V_1$ is an \textbf{extension} of $V_2$ if there exists a smooth map $\mu:E_1\rightarrow E_2$ such that $\tilde{\mu}(Sol_1)= Sol_2$ although $\tilde\mu$ may not be injective on $Sol_1$. In this situation, we also say that $V_2$ is a \textbf{reduction} of $V_1$.
    \item $V_1$ is a \textbf{restriction} of $V_2$ if there exists a smooth bundle map $\mu:E_1\rightarrow E_2$ such that $\tilde{\mu}(Sol_1)\subset Sol_2$ and $\tilde{\mu}$  is injective on $Sol_1$. 

\end{itemize}
\end{dfn}

The proposed definition of dynamically equivalent systems is closely related to other notions of ``equivalence'' present in the literature. For instance,  equivalent Lagrangians and Lepage equivalence \cite{Krupka} refer to different formulations that lead to the same differential equations (see also, for instance, \cite{GBVP,LGL-2021}). Moreover, the adjective ``equivalent'' is commonly used in the literature in its non-technical meaning. Therefore, we decided to use ``dynamically equivalent'' to avoid confusion. In the same fashion, we propose the adjective ``dynamically $\mu$-equivalent''. In \cite{LGL-2021} ``dynamically $\mu$-equivalent'' is denoted as ``dynamically similar'', but this name is ambiguous \cite{Bell:2025kxx}. Notice that, in general, even if a transformation relates the solutions one-to-one, both systems can have qualitatively different dynamics unless the transformation also conserves relevant structures (see, for instance, natural transformation in Lagrangian mechanics \cite{munoz-lecanda_geometry_2024}).

Now we have the language to state that more variations imply more equations and less solutions.
\begin{lem}\label{lem:subvar} Consider the GVPs $(\Omega,J,\textup{Adm}_1)$ and $(\Omega,J,\textup{Adm}_2)$ over the same bundle. If $\textup{Adm}_1\subset\textup{Adm}_2$, then $(\Omega,J,\textup{Adm}_2)$ is a subsystem of $(\Omega,J,\textup{Adm}_1)$\,.
\end{lem}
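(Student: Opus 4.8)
The plan is to unwind the definitions on both sides and observe that the solution conditions for a GVP are indexed by the admissible variations, so enlarging the set of variations can only shrink the solution set. Concretely, I would let $\gamma$ be a local section of $\pi$ and suppose $\gamma\in\text{Sol}(\Omega,J,\textup{Adm}_2)$. By Definition \ref{dfn:solsec}, this means $\gamma^*\alpha=0$ for all $\alpha\in J$, and $\gamma^*\inn_\xi\Omega=0$ for all $\xi\in\textup{Adm}_2$. Since $\textup{Adm}_1\subset\textup{Adm}_2$, the second condition in particular holds for every $\xi\in\textup{Adm}_1$, and the first condition does not involve the admissible variations at all. Hence $\gamma$ satisfies both conditions of Definition \ref{dfn:solsec} for the problem $(\Omega,J,\textup{Adm}_1)$, i.e. $\gamma\in\text{Sol}(\Omega,J,\textup{Adm}_1)$.

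This shows $\text{Sol}(\Omega,J,\textup{Adm}_2)\subset\text{Sol}(\Omega,J,\textup{Adm}_1)$, which by the first bullet of Definition \ref{dfn:equal} (the notion of subsystem, $Sol_1\subset Sol_2$, applied with the roles of the two problems as in the statement) says precisely that $(\Omega,J,\textup{Adm}_2)$ is a subsystem of $(\Omega,J,\textup{Adm}_1)$. If one wishes to phrase the conclusion at the level of solution vector fields rather than sections, the same monotonicity argument applies verbatim to the three conditions of the Lemma characterizing solution vector fields: conditions 1 and 3 are independent of the admissible variations, and condition 2, $\inn_Z\inn_\xi\Omega=0$ for all $\xi\in\textup{Adm}_2$, trivially restricts to all $\xi\in\textup{Adm}_1$.

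There is essentially no obstacle here; the statement is a direct set-theoretic consequence of the fact that a larger index set of constraints yields a smaller (or equal) solution set. The only point requiring a modicum of care is bookkeeping the direction of the inclusion — more variations means \emph{fewer} solutions — so that the subsystem/supersystem terminology of Definition \ref{dfn:equal} is applied in the correct order; this is exactly the informal slogan ``more variations imply more equations and less solutions'' stated just before the Lemma. No transversality, regularity, or closedness hypothesis on $\Omega$ is needed, consistent with the remark that $\Omega$ is not assumed closed.
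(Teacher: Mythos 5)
Your proof is correct and follows exactly the same (essentially unique) argument as the paper: the constraint condition is unaffected by the choice of admissible variations, and the equations of motion indexed by $\textup{Adm}_1$ form a subset of those indexed by $\textup{Adm}_2$, so $\text{Sol}(\Omega,J,\textup{Adm}_2)\subset\text{Sol}(\Omega,J,\textup{Adm}_1)$. The direction of the subsystem terminology is also applied correctly.
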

\begin{proof}
The result follows from the observation that both GVPs have the same constraints and any equation of motion of $(\Omega,J,\textup{Adm}_1)$ is also an equation of motion of $(\Omega,J,\textup{Adm}_2)$. 
\end{proof}

Notice that this result also holds for VPs.

\begin{lem}\label{lem:subvar} The GVP $(\Omega,J,\textup{Adm})$ is dynamically equivalent to the GVP $(\Omega,J,\langle\textup{Adm}\rangle)$.
\end{lem}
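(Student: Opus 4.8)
The plan is to verify the equality of solution sets $\text{Sol}(\Omega,J,\textup{Adm})=\text{Sol}(\Omega,J,\langle\textup{Adm}\rangle)$ directly from Definition \ref{dfn:solsec}, and then conclude by Definition \ref{dfn:equal}. One inclusion is immediate: since $\textup{Adm}\subseteq\langle\textup{Adm}\rangle$, the preceding lemma shows that $(\Omega,J,\langle\textup{Adm}\rangle)$ is a subsystem of $(\Omega,J,\textup{Adm})$, i.e. every solution of the former is a solution of the latter. So only the reverse inclusion needs work.

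For that, let $\gamma$ be a solution of $(\Omega,J,\textup{Adm})$ with domain $I_\gamma$. The two problems share the same constraint set $J$, so condition (1) of Definition \ref{dfn:solsec} holds for $\gamma$ with respect to $\langle\textup{Adm}\rangle$ as well, and it only remains to check condition (2). Pick any $\eta\in\langle\textup{Adm}\rangle$ and write it as a finite sum $\eta=\sum_i f_i\,\xi_i$ with $f_i\in C^\infty(E)$ and $\xi_i\in\textup{Adm}$. Since the interior product is $C^\infty(E)$-linear in the vector-field slot and pullback of forms satisfies $\gamma^*(f\omega)=(f\circ\gamma)\,\gamma^*\omega$, we get
\begin{align}
\gamma^*\inn_\eta\Omega=\gamma^*\left(\sum_i f_i\,\inn_{\xi_i}\Omega\right)=\sum_i (f_i\circ\gamma)\,\gamma^*\inn_{\xi_i}\Omega=0,
\end{align}
because each $\gamma^*\inn_{\xi_i}\Omega$ vanishes by hypothesis ($\xi_i\in\textup{Adm}$ and $\gamma$ is a solution). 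Hence $\gamma$ satisfies condition (2) for every $\eta\in\langle\textup{Adm}\rangle$, so $\gamma\in\text{Sol}(\Omega,J,\langle\textup{Adm}\rangle)$, and the two solution sets coincide.

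There is no genuine obstacle here: the content of the statement is just the observation that enlarging the admissible variations to their $C^\infty(E)$-span only produces equations of motion that are $C^\infty(E)$-linear combinations of the original ones, hence redundant after pulling back along a section, while the constraint part $J$ is unchanged. The same computation applied to the vector-field characterization of solutions (using that $\inn_Z$ is also $C^\infty(E)$-linear) gives the corresponding equivalence at the level of solution vector fields.
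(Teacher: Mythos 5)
Your proof is correct and follows essentially the same route as the paper: one inclusion from the subsystem lemma, the other from $C^\infty(E)$-linearity of the interior product together with $\gamma^*(f\omega)=(f\circ\gamma)\gamma^*\omega$. If anything, you are slightly more careful than the paper, which writes a generic element of $\langle\textup{Adm}\rangle$ as a single product $f\xi_0$ rather than a finite sum $\sum_i f_i\xi_i$.
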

\begin{proof}
Since $\textup{Adm}\subset \langle\textup{Adm}\rangle$, by the previous lemma $Sol(\Omega,J,\langle\textup{Adm}\rangle)\subset Sol(\Omega,J,\textup{Adm})$. Let $\gamma$ be a solution of $(\Omega,J,\textup{Adm})$ and $\xi\in \langle\textup{Adm}\rangle$. By construction, there exists a function $f\in C^\infty(E)$ and a vector field $\xi_0\in\textup{Adm}$ such that $\xi=f\xi_0$. Then,
\begin{align}
    \gamma^*\inn_\xi\Omega=\gamma^*\inn_{f\xi_0}\Omega=\gamma^*(f)\gamma^*\inn_{\xi_0}\Omega=0\,.
\end{align}
Therefore, $Sol(\Omega,J,\langle\textup{Adm}\rangle) =Sol(\Omega,J,\textup{Adm})$.
\end{proof}

\subsection{The Geometric Mechanics Approach}

The GVPs frames the variational problems in a geometric language, but the underlying structure is not apparent because the dynamical information is encoded in three different objects: the $2$-form $\Omega$, the constraints $J$ and the admissible variations Adm. We want to find a new GVP that is dynamically equivalent to the original one and such that all the dynamical information is contained only in one of the objects. 

An example of this idea is the Exterior Differential Systems approach \cite{Griffiths}. It is based on the following observation. The equation of motion in \ref{dfn:solsec} can be thought of as imposing the constraint $\inn_\xi\Omega$. Therefore, a GVP $(\Omega,J,\textup{Adm})$ is dynamically equivalent to the GVP $(0,J\cup\{\inn_\xi\Omega\}_{\xi\in\textup{Adm}},\emptyset)$. Consequently, in the Exterior Differential Systems approach, the central objects of study are differential ideals.

On the other hand, in symplectic systems the object of interest is the symplectic form. This leads to a different approach to study the dynamics of a variational problem which we call the \textbf{geometric mechanics approach}. There are two levels to this approach. First, to find a $2$-form ${\Omega}_1\in\Omega^2(E)$ such that
\begin{align}
 (\Omega,J,\textup{Adm})\text{ is dynamically equivalent to }({\Omega}_1, J, \mathfrak{X}(E))\,.   
\end{align}

Secondly, to find a $2$-form ${\Omega}_2\in\Omega^2(E)$ such that
\begin{align}
 (\Omega,J,\textup{Adm})\text{ is dynamically equivalent to }({\Omega}_2, \emptyset, \mathfrak{X}(E))\,.   
\end{align}

In this way, the dynamics of the variational problem is completely characterized by the $2$-form ${\Omega}_2$, or the pair $(\Omega_1, J)$. This is the process of constructing a geometric structure for a variational problem.

In this work we will show how to construct ${\Omega}_1$ in four relevant situations, depending on how the constraints select the admissible variations: transversality constraints (Section \ref{sec:trans}), nonholonomic constraints (Section \ref{sec:nonhol}), vakonomic constraints (Section \ref{sec:vak}) and a mix of all of them (Section \ref{sec:nhvak}). A detailed presentation of some of these constraints and their relation to variational principles can be found here \cite{gracia_geometric_2003}. We will defined them in each section, as the definitions used in this work are modifications of the ones present in \cite{gracia_geometric_2003} to adapt them to our setting.

In section \ref{sec:abs} we show how to construct $\Omega_2$, at the cost of adding new variables and only obtaining an extension, not a dynamical equivalence (in the sense of definition \ref{dfn:equal}).

\section{Transversality Constraint}\label{sec:trans}

Consider the fiber bundle $\pi:E\rightarrow \mathbb{R}$. Remember that $\hat\tau$ is a volume form of $\mathbb{R}$ and $\tau=\pi^*\hat{\tau}\in \Omega^1(E)$. The solutions of a GVP or VP are sections of $\pi$, which imposes a ``constraint'' on $\gamma$. Namely, $\pi\circ\gamma=Id_{I_\gamma}$.

Commonly in a variational principle, the variation of a section is still a section \cite{gracia_geometric_2003,munoz-lecanda_geometry_2024}. In our formulation this is encoded as demanding that the admissible variations are vertical vector fields: Adm$\subset\Gamma(V(\pi))$. Notwithstanding, it is common in symplectic geometry to consider all vector fields. Notice that symplectic geometry describe autonomous systems: the importance of taking only vertical vector fields  becomes relevant for non-autonomous systems.

The objective of this section is, given a GVP $(\Omega,J,\Gamma(V(\pi)))$, to find a $2$-form $\bar{\Omega}$ such that the GVP $(\bar{\Omega},J,\mathfrak{X}(E))$ is dynamically equivalent. It turns out that, imposing a small number of extra conditions, we can take $\bar{\Omega}=\Omega$.

Take a vector field $R_t\in\mathfrak{X}(E)$ such that $\inn_{R_t}\tau=1$ everywhere, and consider the decomposition $\Omega= \omega-\sigma_t\wedge\tau$, where 
\begin{align}\label{eq:formstrans}
    \sigma_t=\inn_{R_t}\Omega\,,\quad\text{ and }\quad\omega=\Omega+\sigma_t\wedge\tau\,.
\end{align}
By construction, $\inn_{R_t}\sigma_t=0$ and $\inn_{R_t}\omega=0$. 

\begin{thm} \label{theo:trans}The GVP  $(\Omega,J,\Gamma(V(\pi)))$ is dynamically equivalent to the GVP $(\Omega,J,\mathfrak{X}(E))$ and they have a vector field solution, if and only if there exists $X\in \Gamma(V(\pi))$ such that $\inn_X\omega=\sigma_t$ and $\inn_{R_t-X}\alpha=0$ for all $\alpha\in J$.
\end{thm}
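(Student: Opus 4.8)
The plan is to prove both implications by translating the statement about vector-field solutions into the algebraic conditions provided by the decomposition $\Omega = \omega - \sigma_t\wedge\tau$. First I would make the elementary observation that, since $\textup{Adm}_1 = \Gamma(V(\pi)) \subset \mathfrak{X}(E) = \textup{Adm}_2$, Lemma \ref{lem:subvar} already gives that $(\Omega,J,\mathfrak{X}(E))$ is a subsystem of $(\Omega,J,\Gamma(V(\pi)))$. So the content of the dynamical-equivalence part is the reverse inclusion: every solution of $(\Omega,J,\Gamma(V(\pi)))$ must also satisfy $\gamma^*\inn_Y\Omega=0$ for \emph{all} $Y\in\mathfrak{X}(E)$, not just vertical ones. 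The key point is that any $Y\in\mathfrak{X}(E)$ decomposes along the section as a vertical part plus a multiple of $R_t$ (indeed $Y - (\inn_Y\tau)R_t$ is vertical), so the whole question reduces to controlling $\gamma^*\inn_{R_t}\Omega = \gamma^*\sigma_t$ on solutions.

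Next I would characterize the solutions using the preceding Lemma (the vector-field version of Definition \ref{dfn:solsec}): a vector field $Z$ is a solution of $(\Omega,J,\Gamma(V(\pi)))$ iff $\inn_Z\alpha=0$ for all $\alpha\in J$, $\inn_Z\inn_\xi\Omega=0$ for all $\xi\in\Gamma(V(\pi))$, and $\inn_Z\tau=1$. Writing $Z = R_t + V_0$ with $V_0$ vertical (possible since $\inn_Z\tau=1=\inn_{R_t}\tau$), the middle condition becomes $\inn_{R_t+V_0}\inn_\xi\Omega = 0$ for all vertical $\xi$; using $\Omega=\omega-\sigma_t\wedge\tau$, $\inn_{R_t}\omega=0$, $\inn_{R_t}\sigma_t=0$, and $\inn_\xi\tau=0$ for vertical $\xi$, this collapses to $\inn_{V_0}\inn_\xi\omega = \inn_\xi\sigma_t$, i.e. $\inn_\xi(\inn_{V_0}\omega - \sigma_t)=0$ for all vertical $\xi$. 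The hypothesis that there exists $X\in\Gamma(V(\pi))$ with $\inn_X\omega=\sigma_t$ and $\inn_{R_t-X}\alpha=0$ for all $\alpha\in J$ is then exactly what is needed to exhibit such a $Z$, namely $Z = R_t - X$: it satisfies the constraint equations by the second clause of the hypothesis, it satisfies $\inn_Z\tau=1$ since $X$ is vertical, and $\inn_{-X}\inn_\xi\omega = -\inn_\xi\inn_X\omega = -\inn_\xi\sigma_t$ matches the required identity. For the forward direction, the existence of a vector-field solution $Z=R_t+V_0$ forces $\inn_\xi(\inn_{V_0}\omega-\sigma_t)=0$ for all vertical $\xi$; since $\inn_{R_t}(\inn_{V_0}\omega - \sigma_t)=0$ as well (both $\omega$ and $\sigma_t$ kill $R_t$), the $2$-form $\inn_{V_0}\omega - \sigma_t$ — which is a $1$-form — annihilates every vector field, hence vanishes, giving $X := -V_0$ with $\inn_X\omega = \sigma_t$; and $\inn_{R_t - X}\alpha = \inn_Z\alpha = 0$ gives the constraint clause. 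It remains to check that, \emph{given} this $X$, the dynamical equivalence holds for all sections: a solution $\gamma$ of $(\Omega,J,\Gamma(V(\pi)))$ satisfies $\gamma^*\inn_Z\Omega$-type relations, and one shows $\gamma^*\sigma_t = \gamma^*\inn_{R_t}\Omega$ is already determined — concretely, since $R_t - X$ is a solution and the equations of motion for vertical variations force $\gamma^*(\inn_{V_0}\omega - \sigma_t)=0$ for the actual tangent field $V_0$ of $\gamma$, one gets $\gamma^*\sigma_t = \gamma^*\inn_{V_0}\omega$, and then for arbitrary $Y$, $\gamma^*\inn_Y\Omega = \gamma^*\inn_{Y-(\inn_Y\tau)R_t}\Omega + \gamma^*(\inn_Y\tau)\inn_{R_t}\Omega$ reduces via the vertical case and the identity just derived to $0$.

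I expect the main obstacle to be organizing the logical structure cleanly: the statement bundles together \emph{two} conclusions (dynamical equivalence \emph{and} existence of a vector-field solution) in an "if and only if", so I must be careful that the hypothesis is used to produce both, and that in the forward direction I extract the hypothesis from the assumed existence of a vector-field solution rather than from the equivalence alone. A secondary subtlety is the passage between the "section" formulation (Definition \ref{dfn:solsec}) and the "vector field" formulation — I will want to invoke the lemma relating them, and be attentive to the time-translation caveat, though since we only compare solution \emph{sets} of two GVPs over the same bundle this should cause no real trouble. The differential-geometric computations themselves (contracting $\Omega = \omega - \sigma_t\wedge\tau$ with vertical fields and with $R_t$) are routine once the bookkeeping identities $\inn_{R_t}\omega = 0$, $\inn_{R_t}\sigma_t = 0$, $\inn_{R_t}\tau = 1$ are in hand.
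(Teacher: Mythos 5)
Your proposal is correct and follows essentially the same route as the paper's proof: decompose $\Omega=\omega-\sigma_t\wedge\tau$, note that the hypothesis $\inn_X\omega=\sigma_t$ with $X$ vertical gives $\inn_X\Omega=\inn_{R_t}\Omega$ (so that $\gamma^*\inn_{R_t}\Omega=\gamma^*\inn_X\Omega=0$ for any solution of the vertical problem, and $Z=R_t-X$ is a vector field solution), and conversely extract $X=R_t-Z$ from a vector field solution by contracting $\inn_Z\Omega=0$ with $R_t$. Two small points to tidy up when writing it out: your interior-product signs are internally inconsistent (since $\inn_{V_0}\inn_\xi\omega=-\inn_\xi\inn_{V_0}\omega$, the middle condition reads $\inn_\xi\left(\inn_{V_0}\omega+\sigma_t\right)=0$, which is what makes $X=-V_0$ satisfy $\inn_X\omega=\sigma_t$), and the final equivalence check is cleaner without the detour through ``the tangent field of $\gamma$'' --- simply take $\xi=X$ in the vertical equations of motion to get $\gamma^*\sigma_t=\gamma^*\inn_X\Omega=0$ and then decompose an arbitrary $Y$ as a vertical field plus $(\inn_Y\tau)R_t$.
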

\begin{proof}  Since $\Gamma(V(\pi))\subset\mathfrak{X}(E)$, using lemma \ref{lem:subvar} we have that 
    \begin{align}
        \text{Sol}(\Omega, J, \Gamma(V(\pi)))\supset\text{Sol}(\Omega, J, \mathfrak{X}(E))\,.    
    \end{align}
Assume there exists $X\in \Gamma(V(\pi))$ such that $\inn_X\omega=\sigma_t$. Contracting with $X$ we deduce that $\inn_X\sigma_t=0$. Then,
\begin{align}
 \inn_X\omega=\sigma_t\Rightarrow\inn_X\omega-\inn_X\sigma_t\tau=\sigma_t\Rightarrow \inn_X\Omega=\inn_{R_t}\Omega\,.    
\end{align}
Let $\gamma$ be a solution of $(\Omega,J,\Gamma(V(\pi)))$, then
\begin{align}
\gamma^*\inn_{R_t}\Omega=\gamma^*\inn_X\Omega=0\,.
\end{align}
Since any vector field $Y$ can be decomposed as $Y=S+\inn_Y\tau R_t$ for some vertical vector field $S$, we deduce that $\gamma$ is a solution of $(\Omega,J,\mathfrak{X}(E))$. Therefore, $(\Omega,J,\mathfrak{X}(E))$ is dynamically equivalent to $(\Omega,J,\Gamma(V(\pi)))$. Moreover, $\inn_{R_t-X}\tau=1$, $\inn_{R_t-X}\alpha=0$ for all $\alpha\in I$ and $\inn_{R_t-X}\Omega=0$, and thus both variational problems have a vector field solution.

Conversely, let $Z$ be a vector field solution. It can be decomposed as $Z=R_t-X$, with $X=R_t-Z$ vertical. Then, since it is a solution of $(\Omega,J,\mathfrak{X}(E))$
$$
\inn_Z\Omega=0\Rightarrow -\inn_X\omega+\inn_X\sigma_t\tau+\sigma_t=0\,.
$$
Contracting with $R_t$ we derive $\inn_X\sigma_t=0$. It follows that $ \inn_X\omega=\sigma_t$.
\end{proof}

With the hypotheses of theorem \ref{theo:trans}, the equations of motion for a vector field $Z$ can be equivalently written as:
$$
\inn_Z\Omega=0\Leftrightarrow\inn_Z\omega=-\sigma_t.
$$
In particular, we have that 
\begin{align}\label{eq:conservationLaw}
\inn_Z\sigma_t=0\,.    
\end{align}
For symplectic systems, this is the energy conservation law and, for contact systems, it is the energy dissipation law.

Since $\pi:E\rightarrow\mathbb{R}$ is trivial, we have that $E=\mathbb{R}\times N$, with $N$ a smooth manifold. The projection to $N$ will be denoted $\pi_N:E\rightarrow N$. Define $\omega$ and $\sigma_t$ as in \eqref{eq:formstrans} with $R_t=\frac{\partial}{\partial t}$. We say that a GVP $(\Omega,\emptyset,\Gamma(V(\pi)))$ is  \textbf{autonomous} if $\Lie_{R_t}\omega=0$ and $ \Lie_{R_t}\sigma_t=0$. Then, we can project $\omega$ and $\sigma_t$ to forms $\hat{\omega}$ and $\hat{\sigma_t}$ on $N$ 

\begin{lem} Let $(\Omega,\emptyset,\Gamma(V(\pi)))$ be an autonomous variational problem over a trivial bundle $E=\mathbb{R}\times N$. Assume that $\hat{\omega}$ is a non-degenerate $2$-form of $N$. Then, the GVP  $(\Omega,\emptyset,\Gamma(V(\pi)))$ has a unique vector field solution with the form $Z=R_t-X$, where $X$ is $\pi_N$ related to a vector field $\hat{X}\in\mathfrak{X}(N)$ that satisfies the equation:
$$
\inn_{\hat{X}}\hat{\omega}=\hat{\sigma_t}.
$$
\end{lem}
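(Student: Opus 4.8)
The plan is to reduce the statement to an application of Theorem \ref{theo:trans} together with the standard fact that a nondegenerate $2$-form on $N$ induces an isomorphism between vector fields and $1$-forms. First I would observe that since the problem is autonomous, the forms $\omega$ and $\sigma_t$ (defined with $R_t = \partial/\partial t$) are $R_t$-invariant, and since by construction $\inn_{R_t}\omega = 0$ and $\inn_{R_t}\sigma_t = 0$, they are basic with respect to $\pi_N\colon E = \mathbb{R}\times N \to N$; hence there exist unique $\hat\omega \in \Omega^2(N)$ and $\hat\sigma_t \in \Omega^1(N)$ with $\omega = \pi_N^*\hat\omega$ and $\sigma_t = \pi_N^*\hat\sigma_t$.

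Next, since $\hat\omega$ is nondegenerate, the bundle map $\hat X \mapsto \inn_{\hat X}\hat\omega$ from $\mathfrak{X}(N)$ to $\Omega^1(N)$ is an isomorphism, so there is a unique $\hat X \in \mathfrak{X}(N)$ solving $\inn_{\hat X}\hat\omega = \hat\sigma_t$. Let $X \in \mathfrak{X}(E)$ be the vector field $\pi_N$-related to $\hat X$ (and tangent to the $N$-fibers, i.e.\ with no $\partial/\partial t$ component); then $X \in \Gamma(V(\pi))$ and, pulling back the relation along $\pi_N$, $\inn_X\omega = \pi_N^*(\inn_{\hat X}\hat\omega) = \pi_N^*\hat\sigma_t = \sigma_t$. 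Since $J = \emptyset$, the condition $\inn_{R_t - X}\alpha = 0$ for all $\alpha \in J$ is vacuous, so the hypotheses of Theorem \ref{theo:trans} are met: the GVP has the vector field solution $Z = R_t - X$, and $(\Omega,\emptyset,\Gamma(V(\pi)))$ is dynamically equivalent to $(\Omega,\emptyset,\mathfrak{X}(E))$.

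For uniqueness, suppose $Z'$ is another vector field solution. By the converse direction of Theorem \ref{theo:trans}, $Z'$ decomposes as $Z' = R_t - X'$ with $X' \in \Gamma(V(\pi))$ and $\inn_{X'}\omega = \sigma_t$. Writing $X' = \hat X' + h\,\partial/\partial t$ for some function $h$: contracting $\inn_{X'}\omega = \sigma_t$ with $R_t$ and using $\inn_{R_t}\omega = 0$, $\inn_{R_t}\sigma_t = 0$ gives no constraint on $h$ directly, but $X'$ is vertical so $h = 0$ automatically, and then the fiberwise components of $\inn_{X'}\omega = \sigma_t$ read $\inn_{\hat X'}\hat\omega = \hat\sigma_t$ after projecting, whence $\hat X' = \hat X$ by nondegeneracy, so $X' = X$ and $Z' = Z$.

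I expect the only mildly delicate point to be the bookkeeping that makes ``$\omega$ and $\sigma_t$ descend to $N$'' precise: one must check that $R_t$-invariance plus $\inn_{R_t}(\cdot) = 0$ is exactly the condition for a form on $\mathbb{R}\times N$ to be the pullback of a form on $N$ (a Cartan-calculus computation of $\Lie_{R_t}$ via $\d$ and $\inn_{R_t}$), and that the vertical lift $X$ of $\hat X$ is well defined given the product structure. Everything after that is a direct invocation of Theorem \ref{theo:trans} and linear algebra, so there is no real obstacle beyond this translation.
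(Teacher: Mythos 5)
Your proof is correct and follows essentially the same route as the paper: construct $\hat X$ from the nondegeneracy of $\hat\omega$, lift it to a $\pi$-vertical $X$ with $\inn_X\omega=\sigma_t$, and get uniqueness from the fact that $\ker\omega\cap\ker\tau=0$. The only cosmetic difference is that you obtain the solution $Z=R_t-X$ by invoking Theorem \ref{theo:trans} (legitimately, since $J=\emptyset$ makes its second hypothesis vacuous), whereas the paper verifies $\inn_Z\Omega=0$ by the same direct computation that appears inside the proof of that theorem.
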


\begin{proof}
    Since $\hat{\omega}$ is non-degenerate, there exists $\hat{X}$ such that $\inn_{\hat{X}}\hat{\omega}=\hat{\sigma_t}$. In particular, $\inn_{\hat{X}}\hat{\sigma_t}=0$. Let $X$ be the lift of $\hat X$ to $\mathbb{R}\times N$ with the $0$-section of $\pi_N$. Then, $Z=R_t-X$ is a vector field solution. Indeed,
    \begin{align*}
        \inn_Z\Omega=-\inn_X\omega-\sigma_t+\inn_X\sigma_t=-\pi_N^*(\inn_{\hat{X}}\hat{\omega}+\hat{\sigma_t}-\inn_{\hat{X}}\hat{\sigma_t})=0\,.
    \end{align*}
    In order to prove uniqueness, assume there exists another $Z_2\in\mathfrak{X}(E)$ such that $\inn_{Z_2}\tau=1$ and $\inn_{Z_2}\Omega=0$. Then, $\inn_{Z-Z_2}\tau=0$ and $\inn_{Z-Z_2}\omega=0$. Since $\hat{\omega}$ is non-degenerate, $\ker(\omega)=V(\pi_N)$. Therefore, $Z-Z_2=0$.
\end{proof}

\paragraph{Example: Hamiltonian symplectic and co-symplectic systems.} Consider the bundle $\pi:E=\mathbb{R}\times T^*Q\rightarrow \mathbb{R}$. Let $\hat{\theta}$ be the canonical $1$-form of $T^*Q$ and denote by $\theta$ its pullback to $E$. Consider a function $H:E\rightarrow \mathbb{R}$. Let $\mathcal{K}$ be a compact subset of $\mathbb{R}$. The variational problem for Hamiltonian systems is given by the action:
$$
\mathcal{A}(\mathcal{K},\gamma)=\int_\mathcal{K} \gamma^*(\theta-H\tau)\,.
$$
By theorem \ref{thm:base} and lemma \ref{lem:subvar}, the associated GVP is $((\d\theta-\d H\wedge\tau),\emptyset,\Gamma(V(\pi))$. 

If $R_t=\frac{\partial}{\partial t}$, then $\sigma_t=\d H-R_t(H)\tau$ and $\omega=\d\theta$. Since $\d\hat\theta$ is non-degenerate, there always exists a vector field $X\in\Gamma(V(\pi))$ such that $\inn_X\d\theta=\sigma_t\,.$
Therefore, the variational problem $(\Omega,\emptyset,\Gamma(V(\pi))$ is dynamically equivalent to $(\Omega,\emptyset,\mathfrak{X}(E))$.

The equations for a vector field $Z\in\mathfrak{X}(E)$ are (recall definition \ref{dfn:solvector}) 
\begin{align}
\inn_Z\tau=1\,,\quad \inn_Z\omega=\d H-R_t(H)\tau\,.   
\end{align}

They correspond to co-symplectic Hamiltonian dynamics \cite{munoz-lecanda_geometry_2024}. 

The form $\theta$ always projects to the $1$-form $\hat{\theta}$ of $T^*Q$ . If the Hamiltonian does not depend on $t$, both itself, and the form $\sigma_t$, also projects to $\hat{H}$ and $\hat{\sigma_t}$ on $T^*Q$, respectively. Then, a vector field solution is of the form $Z=R_t-X$, where $X$ is the pushforward by the $0$-section of a vector field $\hat{X}$ such that:
$$
\inn_{\hat{X}}\hat{\omega}=\d \hat{H}\,.
$$
Therefore, the solutions of the variational problem can be recovered from those of symplectic Hamiltonian dynamics.

\section{Nonholonomic Constraints}\label{sec:nonhol}
Suppose we have a set of constraints given by $1$-forms $I^{nh}_1\subset\Omega^1(E)$ and also constraints given by functions $I_0\subset C^\infty(E)$. We will denote the constraints by $(I_0,I^{nh}_1)$, instead of $J=I_0\cup I^{nh}_1$, to explicitly distinguish the different kinds of constraints. The nonholonomic implementation of $I^{nh}_1$ demands that the admissible variations must satisfy (See Definition 6 in \cite{gracia_geometric_2003})
\begin{align}
    \inn_\xi\eta\in I_0\,,\quad\forall\eta\in I^{nh}_1\,.
\end{align}

Together with the transversality constraint, we have that,
\begin{align}\label{eq:admNonholo}
    \textup{Adm}=\{\xi\in\mathfrak{X}(E)|\inn_\xi\tau=0, \inn_\xi\eta\in I_0,\,\forall \eta\in I_1\}.
\end{align}

Notice that Adm is $C^\infty(E)$-lineal. The objective of this section is, for any  $2$-form $\Omega$, to construct another $2$-form $\bar{\Omega}\in\Omega^2(E)$ such that
$(\Omega, (I_0,I_1), \textup{Adm})$ is dynamically equivalent to $(\bar{\Omega}, (I_0,I_1), \Gamma(V(\pi))$. To do so, some regularity will be assumed.

\begin{dfn}
    The nonholonomic constraints $I_1^{nh}$ are \textbf{co-oriented} if they are generated by a collection of $1$-forms $\{\eta^\alpha\}\subset\Omega^1(E)$ which, together with $\tau$, are linearly independent everywhere.
\end{dfn}

If $I_1^{nh}=\langle\eta^\alpha\rangle_{\alpha=1,\dots,k}$ is co-oriented, then there exists $R_t,R^\alpha\in\mathfrak{X}(E)$ such that:
\begin{align}\label{eq:reeb}
    \inn_{R_t}\tau=1,\quad
    \inn_{R_\alpha}\tau=0\,,\quad\inn_{ R_\alpha}\eta^\beta=\delta^\beta_\alpha\,, \text{ for all } \alpha,\beta=1,\dots,k\,.
\end{align}
Then, consider
$$
\bar{\Omega}=\Omega+\sigma_\alpha\wedge \eta^\alpha,
$$
with $\sigma_\alpha=\inn_{R_\alpha}\Omega$.

\begin{thm}\label{theo:nh}
    If $I_1^{nh}$ is co-oriented, the GVP $(\Omega, (I_0,I_1), \textup{Adm})$ is dynamically equivalent to the GVP $(\bar{\Omega}, (I_0,I_1), \Gamma(V(\pi))$.
\end{thm}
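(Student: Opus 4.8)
The plan is to show the two GVPs have the same solutions by comparing their equations of motion, exploiting the decomposition of an arbitrary vector field along the frame $\{R_t, R_\alpha\}$ and the co-orientation hypothesis. First I would observe that $\textup{Adm}$ as defined in \eqref{eq:admNonholo} is $C^\infty(E)$-linear (hence $B(\pi)$-linear), so Theorem \ref{thm:base} applies and solutions are characterized by $\gamma^*\alpha=0$ for $\alpha\in (I_0,I_1)$ together with $\gamma^*\inn_\xi\Omega=0$ for all $\xi\in\textup{Adm}$; the constraint part is identical for both problems, so everything reduces to matching the second conditions. Note also that, since $\Gamma(V(\pi))$ is $C^\infty(E)$-linear, Lemma \ref{lem:subvar} lets me work with generators.

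The key computation is to relate $\inn_\xi\Omega$ and $\inn_\xi\bar\Omega$ on admissible $\xi$. By co-orientation, any $\xi\in\mathfrak{X}(E)$ decomposes as $\xi = \inn_\xi\tau\, R_t + (\inn_\xi\eta^\alpha - \inn_\xi\tau\,\inn_{R_t}\eta^\alpha)R_\alpha + \xi_0$ where $\xi_0$ lies in the common kernel of $\tau$ and all $\eta^\alpha$; for $\xi\in\textup{Adm}$ we have $\inn_\xi\tau=0$, so $\xi = (\inn_\xi\eta^\alpha) R_\alpha + \xi_0$ with $f^\alpha := \inn_\xi\eta^\alpha \in I_0$. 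Now $\bar\Omega = \Omega + \sigma_\alpha\wedge\eta^\alpha$, so
\begin{align}
\inn_{\xi_0}\bar\Omega = \inn_{\xi_0}\Omega + (\inn_{\xi_0}\sigma_\alpha)\eta^\alpha - \sigma_\alpha(\inn_{\xi_0}\eta^\alpha) = \inn_{\xi_0}\Omega + (\inn_{\xi_0}\sigma_\alpha)\eta^\alpha
\end{align}
using $\inn_{\xi_0}\eta^\alpha=0$. Since $\gamma^*\eta^\alpha=0$ on any section satisfying the constraints, $\gamma^*\inn_{\xi_0}\bar\Omega = \gamma^*\inn_{\xi_0}\Omega$. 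This shows that, when restricted to the ``horizontal'' part $\xi_0\in\Gamma(V(\pi))\cap\ker\eta^\alpha$, the equations of motion of the two problems agree along constrained sections. It remains to handle the $R_\alpha$ directions: I would check that $\inn_{R_\alpha}\bar\Omega$ contributes terms proportional to $\eta^\beta$ (which pull back to zero) plus a piece that, evaluated on solutions, is forced to vanish by the $f^\alpha\in I_0$ condition combined with $\gamma^*\,\mathrm{d} f^\alpha$-type manipulations — more precisely, that for $\xi = f^\alpha R_\alpha$ with $f^\alpha\in I_0$, the extra term $\gamma^*\inn_{f^\alpha R_\alpha}(\sigma_\beta\wedge\eta^\beta)$ reduces to $\gamma^*(f^\alpha\sigma_\alpha) $ up to $\eta$-terms, and this is exactly what is needed so that $\gamma^*\inn_\xi\Omega=0$ for the nonholonomic admissible variations matches $\gamma^*\inn_{\xi_0}\bar\Omega=0$ over all vertical $\xi_0$.

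Assembling: for the forward inclusion, take $\gamma\in\text{Sol}(\Omega,(I_0,I_1),\textup{Adm})$; given any vertical $S\in\Gamma(V(\pi))$ I must produce an admissible $\xi$ with $\gamma^*\inn_S\bar\Omega = \gamma^*\inn_\xi\Omega$, which the decomposition above provides by choosing $\xi = S - (\inn_S\eta^\alpha)R_t$-correction so that $\inn_\xi\eta^\alpha\in I_0$; conversely, given $\gamma\in\text{Sol}(\bar\Omega,(I_0,I_1),\Gamma(V(\pi)))$ and an admissible $\xi$, I split $\xi$ as above and use that each piece's contribution to $\gamma^*\inn_\xi\Omega$ is captured by a vertical vector field tested against $\bar\Omega$. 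The main obstacle I anticipate is the bookkeeping around the $R_\alpha$-components: ensuring that the cross terms $\inn_{R_\alpha}\sigma_\beta$ and $\inn_{R_t}\eta^\alpha$ that appear when re-expressing an arbitrary admissible $\xi$ in the frame do not spoil the matching — this requires using that $\sigma_\alpha = \inn_{R_\alpha}\Omega$ precisely cancels the unwanted terms, and that the $I_0$-valued (rather than merely $C^\infty$-valued) nature of $\inn_\xi\eta^\alpha$ is what makes the leftover terms pull back to zero along constrained sections. The transversality result (Theorem \ref{theo:trans}) is the $k=0$ special case and can be invoked as a sanity check and, if convenient, as the base of an inductive argument adding one $\eta^\alpha$ at a time.
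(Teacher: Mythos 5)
Your proposal is correct and follows essentially the same route as the paper: decompose a vertical field as $Y = X + (\inn_Y\eta^\alpha)R_\alpha$ with $X$ in the common kernel of $\tau$ and the $\eta^\alpha$, and use $\sigma_\alpha=\inn_{R_\alpha}\Omega$ together with $\gamma^*\eta^\alpha=0$ and $\inn_\xi\eta^\alpha\in I_0$ to kill the extra terms when comparing $\inn_\xi\bar\Omega$ with $\inn_\xi\Omega$ along constrained sections. Two cosmetic slips that do not affect the argument: the correction in your final paragraph should be $-(\inn_S\eta^\alpha)R_\alpha$ rather than an $R_t$-correction, and Theorem \ref{theo:trans} is not literally the $k=0$ case of this statement (it removes the transversality constraint and needs an extra existence hypothesis, whereas the $k=0$ case here is trivial with $\bar\Omega=\Omega$).
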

\begin{proof}
Let $\gamma\in\text{Sol}(\bar{\Omega}, (I_0,I_1), \Gamma(V(\pi))$, and consider a vector field $X\in\textup{Adm}$. Since $\textup{Adm}\subset\Gamma(V(\pi))$:
$$
0=\gamma^*(\inn_X\bar{\Omega})=\gamma^*(\inn_X\Omega+(\inn_X\sigma_\alpha)\eta^\alpha-(\inn_X\eta^\alpha)\sigma_\alpha)=\gamma^*(\inn_X\Omega)\,.
$$
Therefore,

    $$\text{Sol}(\Omega, (I_0,I_1), \textup{Adm})\supset\text{Sol}(\bar{\Omega}, (I_0,I_1), \Gamma(V(\pi))\,.
    $$
    Each vector field $Y\in\Gamma(V(\pi))$ can be decomposed as
    $$
    Y=X+\inn_Y\eta^\alpha R_\alpha,
    $$
    where $X=Y-\inn_Y\eta^\alpha R_\alpha$ has the properties $\inn_X\tau=0$ and $\inn_X\eta^\alpha=0$ for all $\alpha=1,\dots,k$.
    Let $\gamma\in\text{Sol}(\Omega, (I_0,I_1), \textup{Adm})$ and $Y\in\Gamma(V(\pi))$ :
    
\begin{align*}
 \gamma^*\inn_Y\bar{\Omega}&=\gamma^*[\inn_Y\Omega+(\inn_Y\sigma_\alpha)\eta^\alpha-\sigma_\alpha\inn_Y\eta^\alpha]
 \\
 &=\gamma^*[\inn_X\Omega+\inn_Y\eta^\alpha\inn_{R_\alpha}\Omega+(\inn_Y\sigma_\alpha)\eta^\alpha-\inn_Y\eta^\alpha\,\inn_{R_\alpha}\Omega ]=\gamma^*\inn_X\Omega=0.   
\end{align*}

Therefore, $\gamma\in\text{Sol}(\bar{\Omega}, (I_0,I_1), \Gamma(V(\pi))$.

\end{proof}
Notice that $\bar{\Omega}$ is not unique as it depends of the $R_\alpha$ chosen. As a corollary of the theorem, all possible elections of $R_\alpha$ lead to dynamically equivalent GVP.

\paragraph{Example: Hamiltonian (co)contact systems.} Consider $\pi:E=\mathbb{R}\times T^*Q\times\mathbb{R}\rightarrow \mathbb{R}$.  Let $\hat{\theta}$ be the canonical $1$-form of $T^*Q$ and denote by $\theta$ its pullback to $E$. In adapted local coordinates $(t,q^i,p_i,s)$ they have the expression:
$$
\theta=p_i\d q^i\,.
$$
Given a function $H:\mathbb{R}\times T^*Q\times\mathbb{R}\rightarrow \mathbb{R}$, the Hamiltonian version of the Herglotz variational problem consists on finding the critical sections $\gamma:\mathbb{R}\rightarrow E$ of the action
$$
\mathcal{A}(\mathcal{K},\gamma)=\int_{\mathcal{K}} \gamma^*(H\tau-\theta)
$$
with the nonholonomic constraint $\eta=H\d t+\d s-p_i\d q^i$. That is, $I_0=\emptyset$ and $I_1=\langle\eta\rangle$. The first to present the Herglotz's variational principle as a constrained variational problem were M.C. Muñoz-Lecanda and collaborators \cite{LLM-2021,LLM-2023}.

The solutions must satisfy $\gamma^*\eta=0$, and the admissible variations are
$$
\textup{Adm}=\{\xi\in\mathfrak{X}(E)|\inn_\xi\tau=0\,, \inn_\xi\eta=0\}
$$
Since $\Omega=\d H\wedge\tau-\d\theta$, the Hamiltonian Herglotz Variational problem is $(\d H\wedge\tau-\d\theta,(\emptyset,\langle\eta\rangle),\textup{Adm})$.
 
To find the associated geometry, we can take 
$$
R_t=\frac{\partial}{\partial t}\,,R_s=\frac{\partial}{\partial s}\,,
$$
that satisfy $\inn_{R_t}\tau=1\,,\inn_{R_s}\tau=0\,\text{ and }\inn_{R_s}\eta=1\,.$
Then, $\sigma_s=\inn_{R_s}\Omega=R_s(H)\d t$ and 
$$
\bar{\Omega}=\d H\wedge \d t-\d p_i\wedge\d q^i+R_s(H)\d t\wedge (H\d t+\d s-p_ i\d q^i).
$$
We can rewrite $\bar{\Omega}$ as
$$
\bar{\Omega}=\d \eta+\sigma_s\wedge \eta.
$$
By theorem \ref{theo:nh} the Hamiltonian Herglotz Variational problem is dynamically equivalent to $(\bar{\Omega},(\emptyset,\langle\eta\rangle),\Gamma(V(\pi)))$.

We can also reduce the transversality constraints as in section \ref{sec:trans}. The vector field $R_t=\frac{\partial}{\partial t}$ induces the decomposition
$$
\bar{\Omega}=\omega-\sigma_t\wedge\tau\,,
$$
with 
\begin{align}
 \sigma_t&=\inn_{R_t}\bar{\Omega}=-\d H+R_t(H)\d t+R_s(H)(\d s-p_i\d q^i)=-\frac{\partial H}{\partial p_i}\d p_i-\left(\frac{\partial H}{\partial q^i}+p_i\frac{\partial H}{\partial s}\right)\d q^i\,,
 \\
\omega&=\bar{\Omega}+\sigma_t\wedge\tau=-\d p_i\wedge \d q^i=-\d \theta\,.  
\end{align}
The vector field $X\in\Gamma(V(\pi))$
$$
X=-\frac{\partial H}{\partial p_i}\frac{\partial}{\partial q^i}+\left(\frac{\partial H}{\partial q^i}+p_i\frac{\partial H}{\partial s}\right)\frac{\partial}{\partial p_i}-\left(p_ i\frac{\partial H}{\partial p^i}-H\right)\frac{\partial}{\partial s}\,,
$$
satisfies:
$$
\inn_X\omega=\sigma_t\,, \quad \inn_{R_t-X}\eta=0\,.
$$
Therefore, the hypotheses of theorem \ref{theo:trans} hold and, hence, the variational problem \\ $(\bar{\Omega},(\emptyset,\langle\eta\rangle),\Gamma(V(\pi)))$ is dynamically equivalent to $(\bar{\Omega},(\emptyset,\langle\eta\rangle),\mathfrak{X}(E))$.

To better understand the geometry associated with this variational problem, we look at the equations for vector fields appearing in definition \ref{dfn:solvector}: 
\begin{align}
    \inn_{Z}\bar{\Omega}=0\,,\quad \inn_{Z}\tau=1\,,\quad \inn_{Z}\eta=0\,.
\end{align}

Denoting 
\begin{align}\label{eq:hateta}
    \hat{\eta}=\d s-p_i\d q^i\,,
\end{align}
it becomes apparent that they are the cocontact Hamiltonian equations:

\begin{align}
    \inn_{Z}\d \hat{\eta}&=\d H-R_s(H)\hat{\eta}-R_t(H)\tau\,,
    \\
    \quad \inn_{Z}\hat{\eta}&=-H\,,
    \\
    \inn_Z\tau&=1\,.
\end{align}
Thus, the Hamiltonian version of the Herglotz variational problem leads to cocontact geometry. See \cite{LGGMR-2022} for the definition of cocontact geometry and Hamiltonian equations.

The relation (\ref{eq:conservationLaw}) is
$$
\inn_Z\sigma_t=0\Rightarrow Z(H)=-HR_s(H)+R_t(H)\,,
$$
which coincides with the energy dissipation law of cocontact systems \cite{GRL-2023}.

\section{Vakonomic Constraints}\label{sec:vak}

A set of differential forms $J\subset\Omega^\bullet{E}$ is implemented as vakonomic constraints if the admissible variations $\xi$ leave $J$ invariant. Namely,
$$
\Lie_\xi\alpha\in J,\text{ for all }\alpha\in J\,.
$$
This implies that, if a section $\gamma$ satisfies $\gamma^*\alpha=0$ for $\alpha\in J$, then also $(\psi_s\circ\gamma)^*\alpha=0$, where $\psi_s$ is the element of the one-parameter group of diffeomorphisms of $E$ defined by the smooth vector field $\xi$ corresponding to parameter value $s$. This is the same approach used in \cite{gracia_geometric_2003} (See Definition 4 therein).


The main problem with vakonomic constraints is that the set of admissible variations in general is not $B(\pi)$-linear. This implies that one cannot take general enough variations to use the fundamental lemma of the calculus of variations. Here we will focus on two important exceptions: constraints generated by functions and the holonomy condition of Lagrangian systems.

\subsection{Constraints Generated by Functions}\label{subsec:functions}

 Let $J$ be the ideal generated by independent functions $f^i$: $I_0=\langle f^i\rangle$. Then, the admissible variations $\xi$ satisfy:
 $$
\Lie_\xi f^i=\inn_\xi\d f^i\in I_0\,.
 $$
 Clearly, if $\xi\in$Adm, for any function $g\in C^\infty(E)$, also $g\xi\in$Adm. Therefore, Adm is $C^\infty(E)$-linear and, in particular, $B(\pi)$-lineal.

 Moreover, the conditions of being admissible are equivalent to implementing the constraints $I_0=\langle f^i\rangle$ and $I_1=\langle \d f^i\rangle$ in a nonholonomic way. Therefore, we can use the same procedure as in section \ref{sec:nonhol}.

\subsection{The Holonomy Condition of Lagrangians Systems}\label{sec:holLag}
Consider the first jet bundle $\pi:E=J^1\rho\rightarrow \mathbb{R}$ of the fiber bundle $\rho:\mathbb{R}\times Q\rightarrow \mathbb{R}$, with $\dim Q=n$. Let $\mathcal{K}$ be a compact subset of $\mathbb{R}$. Given a function $L:J^1\rho\rightarrow \mathbb{R}$, Hamilton's variational principle consists on finding the critical holonomic local sections $\gamma$ of $\pi$ of the action
\begin{align}\label{var:Lag}
\mathcal{A}(\mathcal{K},\gamma)=\int_\mathcal{K} \gamma^*(L\tau)\,.    
\end{align}

The holonomy condition can be imposed by implementing the Cartan codistribution of $J^1\rho$ as vakonomic constraints.   In adapted local coordinates $(t,q^i,v^i)$ where $\tau=\d t$, the Cartan codistribution is generated by:
$$\kappa^i=\d q^i-v^i\d t\,.
$$

Therefore, $I_0=\emptyset$ and $I_1^{vak}=\langle\kappa^i\rangle_{i=1,\dots,n}$. Then, the solutions must be local section that satisfy 
$$
\gamma^*\kappa^i=0\,,\quad i=1,\dots,n
$$
and the admissible variations are
$$
\textup{Adm}=\{\xi\in\mathfrak{X}(E)|\inn_\xi\tau=0\,, \Lie_\xi\kappa^i\in I_1^{vak}\,, i=1,\dots,n\}\,.
$$

Adm is not $B(\pi)$-linear. The classical derivation of the Euler-Lagrange equations manipulates the variation of the action with respect to admissible variations so that the fundamental lemma of the calculus of variations applies. A more geometric presentation of these computations shows that a modification of the form $L\tau$ is required. The Lepage equivalent theory \cite{Krupka} studies the different differential forms leading to the same differential equations in the presence of holonomy constraints generated by the Cartan codistribution. For completeness, we present here an alternative derivation.

Define the distribution
$$
\mathcal{C}=\bigcap_{i=1,\dots,n}\ker \kappa^i\cap\tau\,.
$$
In local coordinates, the sections of the distribution $\mathcal{C}$ are generated by:
$$
\Gamma(\mathcal{C})=\left\langle\frac{\partial}{\partial v^i}\right\rangle_{i=1,\dots,n}\,.
$$

Consider the differential form
\begin{align}\label{eq:PoincareCartan}
\Theta_L=L\d t+\frac{\partial L}{\partial v^i}\kappa^i\,.    
\end{align}

The $1$-form $\Theta_L$ is the \textbf{Poincaré-Cartan form}  \cite{PLG-1974}, that can be intrinsically defined using the canonical structures of the jet bundle \cite{echeverria-enriquez_geometry_1996,munoz-lecanda_geometry_2024}. It has the additional property that, if $Y\in\Gamma(C)$, and $\gamma$ is a holonomic section, then
\begin{align}
 \gamma^*\left[\Lie_Y\Theta_L\right]&
 =\gamma^*\left[\Lie_Y(L\d t)+\Lie_Y\left(\frac{\partial L}{\partial v^i}\right)\kappa^i+\frac{\partial L}{\partial v^i}\inn_Y\d\kappa^i\right]
 \\
 &=\gamma^*\left[\left(\Lie_YL-\frac{\partial L}{\partial v^i}\inn_Y\d v^i\right)\d t+\Lie_Y\left(\frac{\partial L}{\partial v^i}\right)\kappa^i\right]
 =\gamma^*\left[\Lie_Y\left(\frac{\partial L}{\partial v^i}\right)\kappa^i\right]
 =0\,.   
\end{align}

\begin{prop}
    The VP $(L\tau,(\emptyset,I_1^{vak}),\textup{Adm})$ is dynamically equivalent to the VP
    \\$(\Theta_L,(\emptyset,I_1^{vak}),\langle\textup{Adm}\rangle^{B(\pi)})$.
\end{prop}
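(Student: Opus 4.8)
The plan is to use that both problems impose the same first condition, holonomy $\gamma^*\kappa^i=0$, and to funnel everything through a single pointwise condition. First I would reduce the right-hand problem: $B(\pi)$ is closed under products (if $\inn_v\d g=\inn_v\d h=0$ for every vertical $v$, then $\inn_v\d(gh)=0$), so $\langle\textup{Adm}\rangle^{B(\pi)}$ is $B(\pi)$-linear and Theorem~\ref{thm:base} applies to $(\Theta_L,(\emptyset,I_1^{vak}),\langle\textup{Adm}\rangle^{B(\pi)})$: a section is a solution iff it is holonomic and $\gamma^*\inn_\zeta\d\Theta_L=0$ for all $\zeta\in\langle\textup{Adm}\rangle^{B(\pi)}$. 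Expanding $\zeta=\sum_a g_a\xi_a$ with $g_a\in B(\pi)$ and using $\gamma^*\inn_{g\xi}\d\Theta_L=(g\circ\gamma)\,\gamma^*\inn_\xi\d\Theta_L$, this last condition is equivalent to
\begin{align*}
    \gamma^*\inn_\xi\d\Theta_L=0\qquad\text{for all }\xi\in\textup{Adm},
\end{align*}
which I call $(\star)$. Thus it remains to prove that $\gamma$ solves $(L\tau,(\emptyset,I_1^{vak}),\textup{Adm})$ if and only if $\gamma$ is holonomic and satisfies $(\star)$.

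The connecting tool is the identity $\gamma^*\Lie_\xi(L\tau)=\gamma^*\Lie_\xi\Theta_L$, valid for holonomic $\gamma$ and $\xi\in\textup{Adm}$: since $\Theta_L-L\tau=\tfrac{\partial L}{\partial v^i}\kappa^i$, the difference $\Lie_\xi\Theta_L-\Lie_\xi(L\tau)=\big(\Lie_\xi\tfrac{\partial L}{\partial v^i}\big)\kappa^i+\tfrac{\partial L}{\partial v^i}\Lie_\xi\kappa^i$ pulls back to $0$ along a holonomic section — the first summand because $\gamma^*\kappa^i=0$, the second because $\Lie_\xi\kappa^i\in I_1^{vak}=\langle\kappa^j\rangle$ by the definition of Adm, hence is a $C^\infty(E)$-combination of the $\kappa^j$. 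Given this, the implication ``holonomic $+$ $(\star)$ $\Rightarrow$ solution of $(L\tau,(\emptyset,I_1^{vak}),\textup{Adm})$'' is quick: for any compact $\mathcal{K}\subset I_\gamma$ and any $\xi\in\textup{Adm}(\gamma,\mathcal{K})$, Cartan's formula yields $\int_\mathcal{K}\gamma^*\Lie_\xi(L\tau)=\int_\mathcal{K}\gamma^*\inn_\xi\d\Theta_L+\int_{\partial\mathcal{K}}\gamma^*\inn_\xi\Theta_L=0$, the bulk term by $(\star)$ and the boundary term since $\xi$ vanishes on $\gamma(\partial\mathcal{K})$.

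The converse is the substantive part, and \textbf{here I expect the main obstacle}. From a solution $\gamma$ the identity above gives $\int_\mathcal{K}\gamma^*\Lie_\xi\Theta_L=0$ for all $\xi\in\textup{Adm}(\gamma,\mathcal{K})$, and I want to upgrade this to the pointwise $(\star)$ via the fundamental lemma of the calculus of variations. The difficulty is that Adm is \emph{not} $B(\pi)$-linear: for $\xi\in\textup{Adm}$ and $g=\hat{g}\circ\pi$ one has $\Lie_{g\xi}\kappa^i=g\,\Lie_\xi\kappa^i+(\inn_\xi\kappa^i)\,\d g$, and the last term is not in $I_1^{vak}$ in general, so one cannot simply localize a given admissible variation by a cutoff in $t$. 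My remedy is to add a correction living in $\mathcal{C}$. Fix $\xi\in\textup{Adm}$, a compact interval $\mathcal{K}\subset I_\gamma$, and $\hat{g}\in C^\infty(\mathbb{R})$ with support in the interior of $\mathcal{K}$; writing $\d g=(\hat{g}'\circ\pi)\,\tau$, set
\begin{align*}
    Y_g:=(\hat{g}'\circ\pi)(\inn_\xi\kappa^i)\,\frac{\partial}{\partial v^i}\in\Gamma(\mathcal{C}),\qquad \eta_g:=g\,\xi+Y_g.
\end{align*}
Because $\inn_{Y_g}\kappa^i=0$ and $\d\kappa^i=-\d v^i\wedge\tau$, one gets $\Lie_{Y_g}\kappa^i=\inn_{Y_g}\d\kappa^i=-(\hat{g}'\circ\pi)(\inn_\xi\kappa^i)\,\tau$, which exactly cancels the offending term, so $\Lie_{\eta_g}\kappa^i=g\,\Lie_\xi\kappa^i\in I_1^{vak}$ while $\inn_{\eta_g}\tau=0$; hence $\eta_g\in\textup{Adm}$, and $\eta_g$ vanishes on $\gamma(\partial\mathcal{K})$ because $\hat{g}$ and $\hat{g}'$ do, i.e.\ $\eta_g\in\textup{Adm}(\gamma,\mathcal{K})$. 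Inserting $\eta_g$ in $\int_\mathcal{K}\gamma^*\Lie_{\eta_g}\Theta_L=0$ and discarding the vanishing boundary contribution gives $\int_\mathcal{K}\gamma^*\inn_{\eta_g}\d\Theta_L=0$; writing $\inn_{\eta_g}\d\Theta_L=g\,\inn_\xi\d\Theta_L+\inn_{Y_g}\d\Theta_L$ and using that $Y_g\in\Gamma(\mathcal{C})$ forces $\inn_{Y_g}\Theta_L=0$, the property of $\Theta_L$ recorded right before the statement ($\gamma^*\Lie_Y\Theta_L=0$ for $Y\in\Gamma(\mathcal{C})$ and $\gamma$ holonomic) gives $\gamma^*\inn_{Y_g}\d\Theta_L=\gamma^*\Lie_{Y_g}\Theta_L=0$. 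Therefore $\int_\mathcal{K}\hat{g}\,\gamma^*\inn_\xi\d\Theta_L=0$ for every such $\hat{g}$, and since $\gamma^*\inn_\xi\d\Theta_L$ is a $1$-form on the one-dimensional $I_\gamma$ the fundamental lemma forces it to vanish on the interior of $\mathcal{K}$; as $\mathcal{K}$ and $\xi$ were arbitrary, $(\star)$ holds. Combining the two implications with the reduction from Theorem~\ref{thm:base} identifies the two solution sets, which is the claimed dynamical equivalence.
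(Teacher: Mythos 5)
Your proof is correct and follows essentially the same route as the paper: the heart of both arguments is the same correction vector field $Y_{g,\xi}=f^i\frac{\d g}{\d t}\frac{\partial}{\partial v^i}\in\Gamma(\mathcal{C})$ (your $(\hat g'\circ\pi)(\inn_\xi\kappa^i)\frac{\partial}{\partial v^i}$ is the same object) together with the property $\gamma^*\Lie_Y\Theta_L=0$ for $Y\in\Gamma(\mathcal{C})$ along holonomic sections. The only difference is organizational — you first pass to the pointwise condition $(\star)$ via Theorem \ref{thm:base} and run the fundamental lemma explicitly, whereas the paper compares the two integral variational conditions directly — and your version is in fact slightly more careful about the boundary vanishing of the corrected variation (choosing $\hat g$ supported in the interior of $\mathcal{K}$ so that $\hat g'$ also vanishes on $\partial\mathcal{K}$), a point the paper leaves implicit.
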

\begin{proof}

For any section $\gamma$ satisfying the constraints and for any $\xi\in$ Adm:
$$
\gamma^*\left[\Lie_\xi\Theta_L\right]=\gamma^*\left[\Lie_\xi(L\d t)+\Lie_\xi\left(\frac{\partial L}{\partial v^i}\kappa^i\right)\right]=\gamma^*[\Lie_\xi(L\d t)]\,.
$$

Therefore, $Sol(\Theta_L,(\emptyset,I_1^{vak}),\langle\textup{Adm}\rangle^{B(\pi)})\subset Sol(L\tau,(\emptyset,I_1^{vak}),\textup{Adm})$.

If $\xi\in$ Adm, its local expression is:
$$
f^i(t,q^i)\frac{\partial}{\partial q^i}+D_tf^i\frac{\partial}{\partial v^i}\,,
$$
where $D_t=\frac{\partial}{\partial t}+v^i\frac{\partial}{\partial q^i}$ is the total derivative. Given a $\pi$-basic function $g(t)\in B(\pi)$, the vector field $g\xi\notin$ Adm  in general. Nevertheless, there exists a vector field $Y_{g,\xi}\in\Gamma(\mathcal{C})$ such that $g\xi+Y_{g,\xi}\in$ Adm . Indeed, if 
$$
Y_{g,\xi}=f^i\frac{\d g}{\d t}\frac{\partial}{\partial v^i}\,,
$$
then
\begin{align*}
   g\xi+Y_{g,\xi}=gf^i\frac{\partial}{\partial q^i}+\left(gD_tf^i+f^i\frac{\d g}{\d t}\right)\frac{\partial}{\partial v^i}=gf^i\frac{\partial}{\partial q^i}+D_t\left(gf^i\right)\frac{\partial}{\partial v^i} \,.
\end{align*}

Therefore, for a holonomic section $\gamma$,
\begin{align*}
\gamma^*\left[\Lie_{g\xi}\Theta_L\right]
=\gamma^*\left[\Lie_{g\xi+Y_{g,\xi}}\Theta_L-\Lie_{Y_{g,\xi}}\Theta_L\right]
=\gamma^*\left[\Lie_{g\xi+Y_{g,\xi}}\Theta_L\right]=\gamma^*[\Lie_{g\xi+Y_{g,\xi}}(L\d t)]\,.
\end{align*}

Consequently,
 $Sol(L\tau,(\emptyset,I_1^{vak}),\textup{Adm})\subset Sol(\Theta_L,(\emptyset,I_1^{vak}),\langle\textup{Adm}\rangle^{B(\pi)})$.
    
\end{proof}

As a corolary of theorem \ref{thm:base}, we have that the VP $(\Theta_L,(\emptyset,I_1^{vak}),\langle\textup{Adm}\rangle^{B(\pi)})$ is dynamically equivalent to the GVP $(\d\Theta_L,(\emptyset,I_1^{vak}),\langle\textup{Adm}\rangle^{B(\pi)})$. Its equations of motion are
\begin{align}\label{eq:EL}
    \gamma^*\inn_{\xi}\d\Theta_L=0\,,\quad\forall \xi\in\langle\textup{Adm}\rangle^{B(\pi)}
\end{align}

which are the Euler-Lagrange equations.

The geometric equations \eqref{eq:EL} are not in the typical form one finds in the literature, where there is an equation for each vertical vector field, not only admissible (see, for instance, \cite{echeverria-enriquez_geometry_1996,munoz-lecanda_geometry_2024}). Both formulations are dynamically equivalent.

\begin{prop}
    The GVP $(\d \Theta_L,(\emptyset,I_1^{vak}),\langle\textup{Adm}\rangle^{B(\pi)})$ is dynamically equivalent to the GVP 
    \\$(\d \Theta_L,(\emptyset,I_1^{vak}),\Gamma(V(\pi)))$.
\end{prop}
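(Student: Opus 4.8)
The plan is to show that passing from the admissible variations $\langle\textup{Adm}\rangle^{B(\pi)}$ to all vertical vector fields $\Gamma(V(\pi))$ does not change the solution set. One inclusion is immediate: since $\langle\textup{Adm}\rangle^{B(\pi)}\subset\Gamma(V(\pi))$, Lemma \ref{lem:subvar} gives $\text{Sol}(\d\Theta_L,(\emptyset,I_1^{vak}),\Gamma(V(\pi)))\subset\text{Sol}(\d\Theta_L,(\emptyset,I_1^{vak}),\langle\textup{Adm}\rangle^{B(\pi)})$. So the content is the reverse inclusion: a holonomic section killing $\inn_\xi\d\Theta_L$ for all $\xi\in\langle\textup{Adm}\rangle^{B(\pi)}$ also kills $\inn_Y\d\Theta_L$ for every vertical $Y$.

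The key structural fact I would use is the decomposition of $\Gamma(V(\pi))$ relative to the Cartan distribution. In adapted coordinates, a vertical vector field has the form $Y=a^i\,\partial/\partial q^i+b^i\,\partial/\partial v^i$. I would split $Y=\xi_Y+W$, where $\xi_Y=a^i\,\partial/\partial q^i+D_t a^i\,\partial/\partial v^i$ is (a $B(\pi)$-combination of) admissible variations — note $a^i$ depends on $(t,q,v)$, so strictly one writes $\xi_Y$ as built from the generating admissible fields $\partial/\partial q^i+D_t(\cdot)\,\partial/\partial v^i$ with coefficients $a^i\in C^\infty(E)$; one must check these coefficients can be taken $\pi$-basic after a further reduction, or alternatively observe directly that $\inn_{\xi_Y}\d\Theta_L$ only involves $\gamma^*$-vanishing terms — and $W=(b^i-D_t a^i)\,\partial/\partial v^i\in\Gamma(\mathcal{C})$. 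The first piece is handled by the hypothesis. For the second piece I would invoke the property of the Poincaré–Cartan form already established in the excerpt: for $Y\in\Gamma(\mathcal{C})$ and $\gamma$ holonomic, $\gamma^*[\Lie_Y\Theta_L]=0$. Since $\Lie_Y\Theta_L=\inn_Y\d\Theta_L+\d\inn_Y\Theta_L$ and $\inn_Y\Theta_L=\inn_Y(L\,\d t+(\partial L/\partial v^i)\kappa^i)=0$ because $Y$ is vertical-in-velocities and annihilates both $\d t$ and $\kappa^i$, we get $\gamma^*\inn_Y\d\Theta_L=\gamma^*\Lie_Y\Theta_L=0$ directly. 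So $\gamma^*\inn_W\d\Theta_L=0$, and by linearity $\gamma^*\inn_Y\d\Theta_L=\gamma^*\inn_{\xi_Y}\d\Theta_L+\gamma^*\inn_W\d\Theta_L=0$.

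The main obstacle I anticipate is the bookkeeping around which ring the admissible variations are closed over. The field $\xi_Y$ above has coefficients $a^i\in C^\infty(E)$, not merely $B(\pi)$; so strictly $\xi_Y$ need not lie in $\langle\textup{Adm}\rangle^{B(\pi)}$. The clean way around this is to argue pointwise: fix $t\in I_\gamma$ and a holonomic $\gamma$; the value $(\gamma^*\inn_{\xi_Y}\d\Theta_L)_t$ depends only on the $1$-jet of the coefficients $a^i$ at $t$ along $\gamma$, and one can find an honest admissible variation (with $\pi$-basic-times-coordinate coefficients, or simply a genuine element of $\textup{Adm}$ with prescribed $1$-jet) agreeing with $\xi_Y$ to the relevant order there. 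This localization — essentially the same maneuver as in the proof of the preceding proposition, where $g\xi+Y_{g,\xi}$ was corrected back into $\textup{Adm}$ — is what makes the $B(\pi)$-span suffice. Once that point is absorbed, the rest is the routine splitting and the cited property of $\Theta_L$.
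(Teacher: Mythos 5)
Your proof is correct and rests on the same two ingredients as the paper's: the pointwise ($C^\infty(E)$-linear) nature of the contraction $\inn_Y\d\Theta_L$, and the vanishing of $\gamma^*\inn_W\d\Theta_L$ for $W\in\Gamma(\mathcal{C})$ along holonomic sections, which you correctly extract from $\gamma^*\Lie_W\Theta_L=0$ together with $\inn_W\Theta_L=0$. Where you diverge — and create unnecessary work — is in the treatment of the remaining piece $\xi_Y=a^i\,\tparder{}{q^i}+D_ta^i\,\tparder{}{v^i}$ and the ensuing jet-matching argument. The paper sidesteps this entirely: it decomposes a vertical field in the coordinate basis as $f^i\,\tparder{}{q^i}+g^i\,\tparder{}{v^i}$ and observes that the constant-coefficient fields $\tparder{}{q^i}$ are themselves elements of $\textup{Adm}$ (constant $f^i$, so $D_tf^i=0$), hence of $\langle\textup{Adm}\rangle^{B(\pi)}$; since $\inn_{\partial/\partial v^j}\d\Theta_L=\tparderr{L}{v^j}{v^i}\,\kappa^i$ is killed by $\gamma^*$, one gets $\gamma^*\inn_\xi\d\Theta_L=f^i(\gamma)\,\gamma^*\inn_{\partial/\partial q^i}\d\Theta_L=0$ directly. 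So the worry about which ring $\textup{Adm}$ is closed over never arises: you only need the $n$ fixed generators $\tparder{}{q^i}$, and the $C^\infty(E)$-linearity of the contraction (not of $\textup{Adm}$) supplies the rest. Your localization via prescribed $1$-jets is valid, but it solves a problem that the right choice of generators makes disappear.
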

\begin{proof}
    Since $\langle\textup{Adm}\rangle^{B(\pi)}\subset\Gamma(V(\pi))$, being the other elements equal, by lemma \ref{lem:subvar} we have that
    $$
\text{Sol}(\d \Theta_L,(\emptyset,I_1^{vak}),\langle\textup{Adm}\rangle^{B(\pi)})\supset\text{Sol}(\d \Theta_L,(\emptyset,I_1^{vak}),\Gamma(V(\pi)))\,.
    $$
The local expression of a generic vertical vector field $\Gamma(V(\pi))$ is
$$
f^i\frac{\partial}{\partial q^i}+g^i\frac{\partial}{\partial v^i}\,;\quad f^i,g^i\in C^{\infty}(E)
$$
Let $\gamma$ be a solution of $(\d \Theta_L,(\emptyset,I_1^{vak}),\langle\textup{Adm}\rangle^{B(\pi)})$. The equation of motion of $(\d \Theta_L,(\emptyset,I_1^{vak}),\Gamma(V(\pi)))$ are:
$$
0=\gamma^*\inn_\xi\d \Theta_L=f^i(\gamma)\gamma^*\inn_{\frac{\partial}{\partial q^i}}\d \Theta_L\Leftrightarrow \gamma^*\inn_{\frac{\partial}{\partial q^i}}\d \Theta_L=0\,,\forall i=1,\dots,n.
$$
The vector fields $\frac{\partial}{\partial q^i}$ are elements of $\langle\textup{Adm}\rangle^{B(\pi)}$. Therefore, the equation of motion generated by $\langle\textup{Adm}\rangle^{B(\pi)}$ are the same as those generated by $\Gamma(V(\pi))$.
\end{proof}

Next, we reduce the transversality condition using theorem \ref{theo:trans}. To streamline the exposition, we introduce two common notions in  Lagrangian mechanics~\cite{munoz-lecanda_geometry_2024}. The Lagrangian energy is
\begin{align}\label{eq:lagrangianenergy}
E_L=\frac{\partial L}{\partial v^i}v^i-L\,.
\end{align}
A Lagrangian $L$ is regular if 
$$
\frac{\partial^2 L}{\partial v^i\partial v^j}\,,
$$
is a regular matrix everywhere. Then, there exists a matrix $W^{ij}\in C^\infty(E)$ which is its inverse in the sense that, at each point:
    $$
W^{ij}\frac{\partial^2 L}{\partial v^j\partial v^k}=\delta^i_k\,.
    $$
\begin{cor}\label{cor:lag}
    If $L$ is regular, the GVP $(\d \Theta_L,(\emptyset,I_1^{vak}),\Gamma(V(\pi)))$ is dynamically equivalent to the GVP $(\d \Theta_L,(\emptyset,I_1^{vak}),\mathfrak{X}(E))$.
\end{cor}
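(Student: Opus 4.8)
The plan is to invoke Theorem~\ref{theo:trans} with $\Omega = \d\Theta_L$ and $J = I_1^{vak} = \langle \kappa^i\rangle$. By that theorem, dynamical equivalence of $(\d\Theta_L,(\emptyset,I_1^{vak}),\Gamma(V(\pi)))$ and $(\d\Theta_L,(\emptyset,I_1^{vak}),\mathfrak{X}(E))$ (together with existence of a vector field solution) is equivalent to producing a vertical vector field $X\in\Gamma(V(\pi))$ with $\inn_X\omega = \sigma_t$ and $\inn_{R_t-X}\kappa^i = 0$ for all $i$, where $\omega = \d\Theta_L + \sigma_t\wedge\tau$ and $\sigma_t = \inn_{R_t}\d\Theta_L$ for a chosen $R_t$ with $\inn_{R_t}\tau = 1$. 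The natural choice is $R_t = \tparder{}{t}$ in adapted coordinates $(t,q^i,v^i)$.

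First I would compute $\d\Theta_L$ explicitly from \eqref{eq:PoincareCartan}: $\d\Theta_L = \d L\wedge\d t + \d\bigl(\parder{L}{v^i}\bigr)\wedge\kappa^i + \parder{L}{v^i}\,\d\kappa^i$, and note $\d\kappa^i = -\d v^i\wedge\d t$. Expanding $\d L = \parder{L}{t}\d t + \parder{L}{q^i}\d q^i + \parder{L}{v^i}\d v^i$, one finds after collecting terms that $\d\Theta_L = \parderr{L}{v^i}{v^j}\,\d v^j\wedge\kappa^i + \bigl(\parderr{L}{v^i}{q^j} - \parderr{L}{v^j}{q^i}\bigr)\d q^j\wedge\kappa^i \;+\; \text{terms involving}\ \d t$, the precise form of which gives $\sigma_t = \inn_{\partial_t}\d\Theta_L$. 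A short calculation then identifies $\omega = \d\Theta_L + \sigma_t\wedge\tau$; the key structural fact is that $\omega$ has the block form where contracting with $\parder{}{v^i}$ yields $\parderr{L}{v^i}{v^j}\kappa^j$ (up to sign), so $\omega$ is ``non-degenerate in the $v$-directions'' precisely when the Hessian $\parderr{L}{v^i}{v^j}$ is invertible.

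Next, I would solve $\inn_X\omega = \sigma_t$ for a vertical $X = A^i\parder{}{q^i} + B^i\parder{}{v^i}$. The $\d v^i$-components of this equation, using regularity and the inverse matrix $W^{ij}$, determine the $A^i$ uniquely; then the $\kappa^i$-components (equivalently the $\d q^i$-components) determine the $B^i$. Concretely one expects $A^i = W^{ij}\bigl(\text{first-order data}\bigr)$, recovering the SODE/holonomy structure, and in fact $A^i = v^i$ will drop out, which is exactly what makes $\inn_{R_t - X}\kappa^i = \inn_{R_t}\kappa^i - \inn_X\kappa^i = -v^i - (-A^i) = 0$ hold — this is the second condition of Theorem~\ref{theo:trans}. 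So regularity yields existence (and uniqueness) of $X$, and simultaneously the compatibility with the constraints $\kappa^i$ comes for free because any solution of $\inn_X\omega = \sigma_t$ forces $A^i = v^i$ on the nose.

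The main obstacle is the bookkeeping in the second step: writing $\d\Theta_L$ and $\omega$ in a form transparent enough to read off that (i) $\omega$ restricted to the appropriate complement is governed by the Hessian, and (ii) the equation $\inn_X\omega = \sigma_t$ automatically forces the holonomy relation $A^i = v^i$. Both are standard facts in geometric Lagrangian mechanics (the non-degeneracy of the Poincaré--Cartan $2$-form on regular Lagrangians, and the SODE condition), so I would either carry out the coordinate computation directly or cite the intrinsic treatment in~\cite{echeverria-enriquez_geometry_1996,munoz-lecanda_geometry_2024}. Once $X$ is exhibited, Theorem~\ref{theo:trans} closes the argument immediately.
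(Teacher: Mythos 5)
Your proposal is correct and follows the same strategy as the paper: reduce to Theorem~\ref{theo:trans} and produce the vertical field $X$ by inverting the Hessian via $W^{ij}$. The one real difference is the choice of $R_t$: you take $R_t=\tparder{}{t}$, whereas the paper takes $R_t=\tparder{}{t}-W^{ij}\tparderr{L}{v^i}{t}\,\tparder{}{v^j}$, tuned precisely so that $\omega$ becomes the standard Lagrange $2$-form $\d\bigl(\parder{L}{v^i}\bigr)\wedge\d q^i$; with your choice the $\d t\wedge\d q^i$ terms instead cancel out of $\omega$ and you get a slightly different (but equally valid) representative, which is in fact the computation the paper carries out in Section~\ref{sec:singtime}. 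Both choices are legitimate inputs to Theorem~\ref{theo:trans}, and your key structural claim — that the $\d v^j$-components of $\inn_X\omega=\sigma_t$ together with regularity force the holonomy relation, so $\inn_{R_t-X}\kappa^i=0$ comes for free — is exactly right. One sign slip: since $\inn_X\kappa^i=+A^i$ and $\inn_{R_t}\kappa^i=-v^i$, the forced value is $A^i=-v^i$ (consistent with the $-v^i\tparder{}{q^i}$ term in the paper's explicit $X$), not $A^i=v^i$; the $\d v^j$-component computation $-\tparderr{L}{v^j}{v^i}A^i=\tparderr{L}{v^j}{v^i}v^i$ delivers this automatically, so the slip does not affect the validity of your argument.
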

\begin{proof}    
    Choose 
   \begin{align}\label{reeb:vak}
           R_t=\frac{\partial}{\partial t}-W^{ij}\frac{\partial^2L}{\partial v^i\partial t}\frac{\partial}{\partial v^j}\,.
   \end{align}
    Then, the forms defined in \eqref{eq:formstrans} are
   \begin{align}
    \sigma_t&=\d E_L-R_t(E_L)\tau\,,
    \\
    \omega&=\d \frac{\partial L}{\partial v^i}\wedge\d q^i\,.
\end{align}

The vector field
$$
X=-v^i\frac{\partial}{\partial q^i}-W^{ij}\left(\frac{\partial L}{\partial q^j}-v^k\frac{\partial^2 L}{\partial q^k\partial v^j}\right)\frac{\partial}{\partial v^i}
$$
satisfies:
$$
\inn_X\omega=\sigma_t\,,\quad\inn_{R_t-X}\tau=1\,,\quad\inn_{R_t-X}\kappa^i=0\,,\; i=1,\dots,n\,.
$$
Therefore, the hypotheses of theorem \ref{theo:trans} hold.

\end{proof}
From the same $\Theta_L$, one can choose different $R_t$ vector fields, with a corresponding $\omega$ (recall its definition in equations \eqref{eq:formstrans}), leading to the same equations. The $R_t$ in the corollary \ref{cor:lag} has been chosen so that $\omega$ coincides with (minus) the Lagrange $2$-form associated with $L$ \cite{munoz-lecanda_geometry_2024}.

To better understand the geometry associated with this variational problem, we look at the equations for vector fields: 
\begin{align}
    \inn_{Z}\d \Theta_L=0\,,\quad \inn_{Z}\tau=1\,,\quad \inn_{Z}\kappa^i=0\,.
\end{align}

The last two conditions are usually stated as demanding that the vector field $Z$ is holonomic. The first one is equivalent to the cosymplectic Lagrangian equations \cite{munoz-lecanda_geometry_2024}:

\begin{align}
    \inn_{Z}\omega&=\d {E_L}-R_t(E_L)\tau\,,
\end{align}
Thus, the Hamilton variational principle \eqref{var:Lag} for regular Lagrangians leads to the cosymplectic structure $(\omega,\tau)$. The different $\omega$ resulting from a different choice of $R_t$ generate different cosymplectic structures with the same dynamics. 

The relation (\ref{eq:conservationLaw}) is
$$
\inn_Z\sigma_t=0\Rightarrow Z(E_L)=R_t(H)\,,
$$
which coincides with the energy dissipation law for cosymplectic systems.

\section{Nonholonomic and Vakonomic Constraints}\label{sec:nhvak}

The main motivation of this work is to find the geometry for the Lagrangian Herglotz variational problem, which combines nonholonomic and vakonomic constraints. It has the convenient property that the constraints are ``compatible''. A similar compatibility condition will be used to adapt the arguments of section \ref{sec:vak}. 

Consider the first jet bundle $J^1\rho$ of the fiber bundle $\rho:\mathbb{R}\times Q\rightarrow \mathbb{R}$, with dim$Q=n$. Let $E=J^1\rho\times N$, for a manifold $N$ of dim$N=k+r$\,. We denote the projections $\pi:E\rightarrow \mathbb{R}$, $\pi_N:E\rightarrow N$ and $\pi_{j^1\rho}:E\rightarrow J^1\rho$. Let $\hat{\tau}$ be a volume form on $\mathbb{R}$ and denote by $\tau$ its pullback to $E$. We will use local adapted coordinates $(t,q^i,v^i,s^\alpha,u^a)$, with $\d t=\tau$, $i=1,\dots,n$, $\alpha=1,\dots,k$ and $a=1,\dots,r$. $(t,q^i,v^i)$ is a local chart of $J^1\rho$ and $(s^\alpha,u^a)$ of $N$ that we will make more precise later \eqref{eq:formscoordinates}.

\begin{equation*}
\xymatrix{
&\   &E=J^1\rho\times N \ar[rd]^{\pi_N}\ar[ld]_{\pi_{j^1\rho}}\ar[lddd]_{\pi}  \   &
\\
&J^1\rho \ar[d]\ &  \ &N
\\
&\mathbb{R}\times Q\ar[d]^{\rho}\ & \ &  
 \\
&\mathbb{R}\ &  \ & 
}
\end{equation*}

The vakonomic constraints are the Cartan codistribution of $J^1\rho$, pulled back to $E$, $I_1^{vak}=\langle \kappa^i\rangle$. In adapted local coordinates:
$$\kappa^i=\d q^i-v^i\d t\,.
$$

The nonholonomic constraints are $I_1^{nh}\subset\Omega^1(E)$. Moreover, we consider constraints generated by functions $I_0=\langle f_j\rangle_{j=1,\dots,m}\subset C^\infty(E)$. They can be equivalently incorporated nonholonomically or vakonomically, as explained in \ref{subsec:functions}. We will use the former, that is $\d f_j\in I^{nh}_1$.

The admissible variations are
\begin{align}\label{var:nhvak}
\textup{Adm}=\{\xi\in\mathfrak{X}(E)|\,\inn_\xi\tau=0,\,\inn_\xi\eta\in I_0,\,\Lie_ \xi\kappa\in \langle I_1^{vak}, I_1^{nh}\rangle;\,\forall \eta\in I^{nh}_1\,,\forall \kappa\in I^{vak}_1\}\,.
\end{align}

Here $ \langle I_1^{vak}, I_1^{nh}\rangle$ denotes all linear combinations of elements of $I_1^{vak}$ and/or $I_1^{nh}$, with coefficients in $C^\infty(E)$. The condition $\Lie_ \xi\kappa\in  \langle I_1^{vak}, I_1^{nh}\rangle$ has been chosen following the idea that objects vanish up to constraints.


Given a $1$-form $\theta\in\Omega^1(E)$, the VP is $(\theta,(I_0,I_1^{nh},I_1^{vak}),\textup{Adm})$. The main obstacle to deriving a geometry which induces the same dynamics as this variational problem is that Adm is not $B(\pi)$-linear. With some compatibility conditions that we are going to introduce next, one can apply the fundamental lemma of the calculus of variations, as and open the way to its geometrization.

Define the distribution:
$$
\mathcal{C}=\bigcap_{p\in E}\mathcal{C}_p=\bigcap_{p\in E}\{v\in T_p E\,|\,T_p\pi_N(v)=0,\tau_p(v)=0,\kappa_p^i(v)=0\,,i=1,\dots,n\}\,.
$$
In local coordinates, the sections of the distribution $\mathcal{C}$ are generated by:
$$
\Gamma(\mathcal{C})=\left\langle\frac{\partial}{\partial v^i}\right\rangle_{i=1,\dots,n}\,.
$$
\begin{dfn} The constraints $I^{nh}_1$ and $I^{vak}_1=\langle \kappa^i\rangle$ are \textbf{compatible} if:
\begin{enumerate}
    \item $I^{nh}_1$ is generated by $k$ $1$-forms: $I^{nh}_1=\langle \eta^\alpha\rangle_{\alpha=1,\dots,k}$.
    \item The forms $\eta^\alpha$, together with $\tau$, $\kappa^i$ and $\d \kappa^i$, are linearly independent with respect to the other forms:
    \begin{align*}        \tau\wedge\left(\bigwedge_{\alpha=1}^k\eta^\alpha\right)\wedge\left(\bigwedge_{i=1}^n\kappa^i\wedge\d\kappa^i\right)\neq0\,.
    \end{align*}
    \item $\displaystyle\mathcal{C}\subset \bigcap_{\eta\in I^{nh}_1}\ker\eta$   
\end{enumerate}    
\end{dfn}

If the constraints are compatible, then there exist $k$ vector fields $R_\alpha\in\mathfrak{X}(E)$ such that 
\begin{align}\label{eq:reebnhvak}
\inn_{R_\alpha}\tau=0\,,\quad \inn_{R_\alpha}\eta^\beta=\delta^\beta_\alpha\,.
    \end{align}
    
We also have adapted coordinates $(t,q^i,v^i,s^\alpha, u^a)$ such that:
\begin{align}\label{eq:formscoordinates}
    \tau=\d t\,,\quad \kappa^i=\d q^i-v^i\d t\,,\quad \eta^\alpha= \d s^\alpha+b_i^\alpha \d q^i+h^\alpha\d t +c^\alpha_a\d u^a\,,
\end{align}
for some functions $b^\alpha_ i,h^\alpha, c^\alpha_a\in C^\infty(E)$.

The condition $2$ precludes the presence of nontrivial constraints $f\in I_0$ projectable to $J^1\rho$. 

The construction of the geometry associated to the variational problem $(\theta,(I_0,I_1^{nh},I_1^{vak}),\textup{Adm})$ will be done in several steps, summarized in diagram \ref{fig:concatenation}.

\begin{figure}[H]\label{fig:concatenation}
\begin{equation*}
\xymatrix{
&(L\tau,(I_0,I_1^{nh},I_1^{vak}),\textup{Adm})\ar@{<->}[rrr]^{\text{Theorem }\ref{thm:vaknh1} }
\ &
\ &
\ &
(\Theta_L,(I_0,I_1^{nh},I_1^{vak}),\langle\textup{Adm}\rangle^{B(\pi)})\ar@{<->}[d]^{\text{Theorem }\ref{thm:base} }
\\
&(\overline\Omega,(I_0,I_1^{nh},I_1^{vak}),\Gamma(V(\pi)))\ar@{<->}[rrr]^{\text{Theorem }\ref{thm:vaknh2} }
\ &  
\ &
\ &(\d\Theta_L,(I_0,I_1^{nh},I_1^{vak}),\langle\textup{Adm}\rangle^{B(\pi)})
}
\end{equation*}\caption{Concatenation of dynamically equivalent systems}
\end{figure}

\begin{thm}\label{thm:vaknh1}
    Consider a VP $(\theta,(I_0,I_1^{nh},I_1^{vak}),\textup{Adm})$ such that
    \begin{enumerate}
        \item $\theta=L\tau$ for some function $L\in C^\infty(E)$.
        \item The constraints are compatible.
    \end{enumerate}
    Then, the VP $(\theta,(I_0,I_1^{nh},I_1^{vak}),\textup{Adm})$ is dynamically equivalent to the VP  $(\Theta_L,(I_0,I_1^{nh},I_1^{vak}),\langle\textup{Adm}\rangle^{B(\pi)})$,
    with \footnote{The form $\Theta_L$ can be defined intrinsically using the canonical structures of the jet bundle $J^1\rho$ pulled back to $E$ \cite{munoz-lecanda_geometry_2024}.}
    \begin{align}\label{eq:thetalvaknh}
        \Theta_L=L\tau+\frac{\partial L}{\partial v^i}\kappa^i\,.
    \end{align}
\end{thm}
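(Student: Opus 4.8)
The plan is to mirror the argument used in Section~\ref{sec:holLag} for the pure-holonomy case, but now carrying the extra nonholonomic data $I_1^{nh}$ along. The key structural fact is the same one exploited there: $\Theta_L-L\tau=\tfrac{\partial L}{\partial v^i}\kappa^i$ is built from the Cartan forms $\kappa^i$, so it pulls back to zero along any section satisfying the vakonomic constraints, and its Lie derivative along suitable vector fields also pulls back to zero on such sections. Concretely, I would first record that for a section $\gamma$ with $\gamma^*\kappa^i=0$ and any $\xi\in\textup{Adm}$, the condition $\Lie_\xi\kappa^i\in\langle I_1^{vak},I_1^{nh}\rangle$ together with $\gamma^*\eta^\alpha=0$ (forced by admissibility and $\gamma^*\alpha=0$ for $\alpha\in J$ — note $\eta^\alpha\in I_1^{nh}\subset J$) gives $\gamma^*\Lie_\xi\kappa^i=0$, hence
\begin{align*}
\gamma^*\Lie_\xi\Theta_L=\gamma^*\Lie_\xi(L\tau)+\gamma^*\Lie_\xi\!\left(\tfrac{\partial L}{\partial v^i}\kappa^i\right)=\gamma^*\Lie_\xi(L\tau),
\end{align*}
since $\gamma^*\big[\xi(\partial L/\partial v^i)\,\kappa^i+\tfrac{\partial L}{\partial v^i}\Lie_\xi\kappa^i\big]=0$. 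This already yields the inclusion $\mathrm{Sol}(\Theta_L,(I_0,I_1^{nh},I_1^{vak}),\langle\textup{Adm}\rangle^{B(\pi)})\subset\mathrm{Sol}(L\tau,(I_0,I_1^{nh},I_1^{vak}),\textup{Adm})$, because the constraints are literally the same and any $\xi\in\textup{Adm}$ lies in $\langle\textup{Adm}\rangle^{B(\pi)}$.

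For the reverse inclusion I would follow the correction trick of the holonomy proposition. Given $\xi\in\textup{Adm}$ and $g\in B(\pi)$, the vector field $g\xi$ need not be admissible (the holonomy-type condition $\Lie_{g\xi}\kappa^i\in\langle I_1^{vak},I_1^{nh}\rangle$ can fail), but one can add a correction $Y_{g,\xi}\in\Gamma(\mathcal{C})$ so that $g\xi+Y_{g,\xi}\in\textup{Adm}$: in the adapted coordinates \eqref{eq:formscoordinates}, writing $\xi=f^i\partial/\partial q^i+\dots$, the correction $Y_{g,\xi}=f^i\,\tfrac{\d g}{\d t}\,\partial/\partial v^i$ restores the holonomy condition exactly as before, and compatibility condition~3 ($\mathcal{C}\subset\bigcap_{\eta\in I_1^{nh}}\ker\eta$) guarantees that adding $Y_{g,\xi}\in\Gamma(\mathcal{C})$ does not spoil the nonholonomic condition $\inn_{g\xi+Y_{g,\xi}}\eta^\alpha=g\inn_\xi\eta^\alpha\in I_0$, nor the transversality $\inn_{g\xi+Y_{g,\xi}}\tau=0$. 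Then for a section $\gamma$ satisfying the constraints,
\begin{align*}
\gamma^*\Lie_{g\xi}\Theta_L=\gamma^*\Lie_{g\xi+Y_{g,\xi}}\Theta_L-\gamma^*\Lie_{Y_{g,\xi}}\Theta_L=\gamma^*\Lie_{g\xi+Y_{g,\xi}}(L\tau),
\end{align*}
using the previous paragraph's identity for the admissible vector field $g\xi+Y_{g,\xi}$ and the fact that $\gamma^*\Lie_{Y}\Theta_L=0$ for $Y\in\Gamma(\mathcal{C})$ along a section satisfying $\gamma^*\kappa^i=0$ — this last computation is identical to the one displayed before \eqref{eq:PoincareCartan} in Section~\ref{sec:holLag}, since $\Theta_L$ has the same form and $Y$ only involves $\partial/\partial v^i$. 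Running this for $g=\hat g\circ\pi$ with $\hat g$ compactly supported and invoking the fundamental lemma (as in Theorem~\ref{thm:base}) shows that if $\gamma\in\mathrm{Sol}(L\tau,\dots,\textup{Adm})$ then $\gamma^*\inn_\xi\d\Theta_L=0$ for all $\xi\in\textup{Adm}$, and $B(\pi)$-linearity of $\langle\textup{Adm}\rangle^{B(\pi)}$ together with Lemma~\ref{lem:subvar} upgrades this to all of $\langle\textup{Adm}\rangle^{B(\pi)}$; equivalently (running the Stokes computation backwards) $\gamma\in\mathrm{Sol}(\Theta_L,(I_0,I_1^{nh},I_1^{vak}),\langle\textup{Adm}\rangle^{B(\pi)})$.

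The main obstacle I anticipate is purely bookkeeping: verifying that the correction $Y_{g,\xi}$ lands in $\Gamma(\mathcal{C})$ and that adding it preserves \emph{all three} admissibility conditions simultaneously — in particular that the nonholonomic condition $\inn_\xi\eta^\alpha\in I_0$ is unaffected, which is exactly where compatibility condition~3 is needed, and that the local form \eqref{eq:formscoordinates} of the $\eta^\alpha$ (no $\partial/\partial v^i$ appearing in $R_\alpha$, equivalently $\eta^\alpha$ annihilates $\partial/\partial v^i$) is what makes $\Lie_{g\xi+Y_{g,\xi}}\kappa^i$ stay inside $\langle I_1^{vak},I_1^{nh}\rangle$ rather than merely inside a larger span. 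Everything else — the Lie-derivative identities, the Stokes/fundamental-lemma step — is a verbatim adaptation of the arguments already carried out for the holonomy-only case, so the proof should be short once the coordinate check is in place.
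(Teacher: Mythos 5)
Your proposal is correct and follows essentially the same route as the paper: the easy inclusion via $\gamma^*\Lie_\xi\bigl(\tfrac{\partial L}{\partial v^i}\kappa^i\bigr)=0$ on constrained sections, and the hard inclusion via the correction field $Y_{g,\xi}=f^i\tfrac{\d g}{\d t}\tfrac{\partial}{\partial v^i}\in\Gamma(\mathcal{C})$, with compatibility condition 3 ensuring the nonholonomic admissibility condition is preserved — exactly as in the paper's proof. The only cosmetic difference is that you fold the fundamental-lemma/Stokes step into the argument, which the paper instead defers to the corollary of Theorem \ref{thm:base} stated immediately after the theorem.
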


\begin{proof}
For any section $\gamma$ that satisfies the constraints, and for any vector field $\xi\in$ Adm:
\begin{align}
    \gamma^*\left[\Lie_\xi \left(L \d t+\frac{\partial L}{\partial v^i}\kappa^i\right)\right]=\gamma^*\left[ \Lie_\xi L \d t+\left(\Lie_\xi\frac{\partial L}{\partial v^i}\right)\kappa^i+\frac{\partial L}{\partial v^i}\Lie_\xi\kappa^i\right]=\gamma^*\Lie_\xi \left(L \d t\right)\,.
\end{align}
Therefore, $Sol(\Theta_L,(I_0,I_1^{nh},I_1^{vak}),\langle\textup{Adm}\rangle^{B(\pi)})\subset Sol(\theta,(I_0,I_1^{nh},I_1^{vak}),\textup{Adm})$.

Consider adapted coordinates $(t,q^i,v^i,s^\alpha, u^a)$ as in \ref{eq:formscoordinates}. A vertical vector field $\xi\in\Gamma(V(\pi))$ has local expression
    $$
\xi=f^i\frac{\partial}{\partial q^i}+F^i\frac{\partial}{\partial v^i}+f^\alpha\frac{\partial}{\partial s^\alpha}+f^a\frac{\partial}{\partial u^a}\,.
    $$
If it is in Adm, we have that $\inn_\xi\eta^\alpha\in I_0$ which implies:
\begin{align}\label{eq:adm1}
    f^\alpha=-b^\alpha_if^i-c^\alpha_af^a+T\,,
\end{align}
For some $T\in I_0$. Moreover, 
\begin{align}
    \Lie_ \xi\kappa^i&=\frac{\partial f^i}{\partial s^\alpha}\eta^\alpha
    +\left[ \frac{\partial f^i}{\partial q^j}-b^\alpha_j\frac{\partial f^i}{\partial s^\alpha}\right]\kappa^j
    +\frac{\partial f^i}{\partial v^j}\d v^j
    +\left[ \frac{\partial f^i}{\partial u^a}-c^\alpha_a\frac{\partial f^i}{\partial s^\alpha}\right]\d u^a
    \\
    &+\left[\frac{\partial f^i}{\partial t}+v^j\frac{\partial f^i}{\partial q^j}-\left(b^\alpha_jv^j+h^\alpha\right)\frac{\partial f^i}{\partial s^\alpha}-F^i\right]\d t\,.
\end{align}

If $\Lie_ \xi\kappa^i\in \langle I_1^{vak}, I_1^{nh}\rangle$, then
\begin{align}\label{eq:adm2}
    \frac{\partial f^i}{\partial v^j}=0\,, \quad \frac{\partial f^i}{\partial u^a}=c^\alpha_a\frac{\partial f^i}{\partial s^\alpha}\,,\quad
    F^i=\frac{\partial f^i}{\partial t}+v^j\frac{\partial f^i}{\partial q^j}-\left(b^\alpha_jv^j+h^\alpha\right)\frac{\partial f^i}{\partial s^\alpha}\,.
\end{align}

Equations \eqref{eq:adm1} and \eqref{eq:adm2} are the local conditions a vector field must satisfy to be in Adm. If $\xi\in$ Adm and $g\in B(\pi)$, then $g\xi\notin\textup{Adm}$ in general. Nevertheless, there exists a vector field $Y_{g,\xi}\in\Gamma(\mathcal{C})$ such that $g\xi+Y_{g,\xi}\in\textup{Adm}$. Indeed, if
\begin{align}
Y_{g,\xi}=f^i\frac{\d g}{\d t}\frac{\partial}{\partial v^i}
\end{align}

then,
\begin{align*}
    \inn_{g\xi+Y_{g,\xi}}\d t=0\quad \text{and}\quad  \inn_{g\xi+Y_{g,\xi}}\eta^\alpha=\inn_{g\xi}\eta^\alpha\in I_0\,\quad \text{forall }\alpha\,.
\end{align*}
Moreover, 

\begin{align*}
    \Lie_{g\xi+Y_{g,\xi}}\kappa^i=(\inn_{\xi}\kappa^i)\d g+g\Lie_\xi\kappa^i+ \Lie_{Y_{g,\xi}}\kappa^i=f^i\frac{\partial g}{\partial t}\d t-f^i\frac{\d g}{\d t}\d t+g\Lie_\xi\kappa^i\in \langle I_1^{vak},I_1^{nh}\rangle\,,\text{ for all } i\,.
\end{align*}

For a section $\gamma$ satisfying the constraints,
\begin{align*}
\gamma^*\left[\Lie_{g\xi}\Theta_L\right]
=\gamma^*\left[\Lie_{g\xi+Y_{g,\xi}}\Theta_L-\Lie_{Y_{g,\xi}}\Theta_L\right]
=\gamma^*\left[\Lie_{g\xi+Y_{g,\xi}}\Theta_L\right]=\gamma^*[\Lie_{g\xi+Y_{g,\xi}}(L\d t)]\,.
\end{align*}

Consequently, $Sol(\theta,(I_0,I_1^{nh},I_1^{vak}),\textup{Adm})\subset Sol(\Theta_L,(I_0,I_1^{nh},I_1^{vak}),\langle\textup{Adm}\rangle^{B(\pi)})$.
\end{proof}

 As a corolary of theorem \ref{thm:base}, we have that the VP $(\Theta_L,(I_0,I_1^{nh},I_1^{vak}),\langle\textup{Adm}\rangle^{B(\pi)})$ is dynamically equivalent to the GVP $(\d\Theta_L,(I_0,I_1^{nh},I_1^{vak}),\langle\textup{Adm}\rangle^{B(\pi)})$. Its equations of motion are
\begin{align}
    \gamma^*\inn_{\xi}\d\Theta_L=0\,,\quad\forall \xi\in\langle\textup{Adm}\rangle^{B(\pi)}\,.
\end{align}

After proving that a geometric version of $(\theta,(I_0,I_1^{nh},I_1^{vak}),\textup{Adm})$ exists, we proceed to extend it to more general variations.

\begin{thm}\label{thm:vaknh2}
    If the constraints are compatible, the GVP $(\d \Theta_L,(I_0,I_1^{nh},I_1^{vak}),\langle\textup{Adm}\rangle^{B(\pi)})$ is dynamically equivalent to the GVP $(\bar{\Omega},(I_0,I_1^{nh},I_1^{vak}),\Gamma(V(\pi))$, where
\begin{align}
\bar{\Omega}=\d\Theta_L+\sigma_\alpha\wedge \eta^\alpha,
\end{align}
with $\sigma_\alpha=\inn_{R_\alpha}\d\Theta_L$ and $R_\alpha$ satisfying \eqref{eq:reebnhvak}.
\end{thm}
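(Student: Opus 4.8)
The plan is to follow the template of Theorem~\ref{theo:nh} — splitting the claimed dynamical equivalence into the two inclusions of solution sets — while importing the Poincaré–Cartan argument of Section~\ref{sec:holLag} to deal with the vakonomic part. Throughout I would work in the adapted coordinates $(t,q^i,v^i,s^\alpha,u^a)$ of~\eqref{eq:formscoordinates} and use the vector fields $R_\alpha$ of~\eqref{eq:reebnhvak}, both of which exist because the constraints are compatible. Since the two problems carry the same constraints, only the equations of motion have to be compared.

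One inclusion is routine. As Adm consists of $\pi$-vertical fields, $\langle\textup{Adm}\rangle^{B(\pi)}\subset\Gamma(V(\pi))$, so for a solution $\gamma$ of $(\bar{\Omega},(I_0,I_1^{nh},I_1^{vak}),\Gamma(V(\pi)))$ and any $\xi\in\langle\textup{Adm}\rangle^{B(\pi)}$ I would expand $\inn_\xi\bar{\Omega}=\inn_\xi\d\Theta_L+(\inn_\xi\sigma_\alpha)\eta^\alpha-(\inn_\xi\eta^\alpha)\sigma_\alpha$ and pull back along $\gamma$: the term carrying $\eta^\alpha$ dies because $\gamma^*\eta^\alpha=0$, and $\inn_\xi\eta^\alpha$ belongs to $I_0$ — each generator $\xi_j$ of $\langle\textup{Adm}\rangle^{B(\pi)}$ satisfies $\inn_{\xi_j}\eta^\alpha\in I_0$ and $I_0$ is a $C^\infty(E)$-module — whence $\gamma^*\inn_\xi\eta^\alpha=0$. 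Thus $\gamma^*\inn_\xi\d\Theta_L=\gamma^*\inn_\xi\bar{\Omega}=0$, exactly as in Theorem~\ref{theo:nh}.

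For the reverse inclusion, let $\gamma$ be a solution of $(\d\Theta_L,(I_0,I_1^{nh},I_1^{vak}),\langle\textup{Adm}\rangle^{B(\pi)})$ — in particular a holonomic section — and let $Y\in\Gamma(V(\pi))$ be arbitrary. Write $Y=X+(\inn_Y\eta^\alpha)R_\alpha$ with $X:=Y-(\inn_Y\eta^\alpha)R_\alpha$, so that $\inn_X\tau=0$ and $\inn_X\eta^\alpha=0$ for all $\alpha$. Expanding $\inn_Y\bar{\Omega}=\inn_Y\d\Theta_L+(\inn_Y\sigma_\alpha)\eta^\alpha-(\inn_Y\eta^\alpha)\sigma_\alpha$ and substituting $\inn_Y\d\Theta_L=\inn_X\d\Theta_L+(\inn_Y\eta^\alpha)\sigma_\alpha$, the $\sigma_\alpha$-terms cancel and the $\eta^\alpha$-term pulls back to zero, leaving $\gamma^*\inn_Y\bar{\Omega}=\gamma^*\inn_X\d\Theta_L$. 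It therefore suffices to prove $\gamma^*\inn_X\d\Theta_L=0$ whenever $X\in\Gamma(V(\pi))$ satisfies $\inn_X\tau=0=\inn_X\eta^\alpha$. For this I would produce, in the coordinates~\eqref{eq:formscoordinates}, the local frame $\widehat{\partial_{q^i}}:=\partial_{q^i}-b^\alpha_i\partial_{s^\alpha}$, $\partial_{v^i}$, $\widehat{\partial_{u^a}}:=\partial_{u^a}-c^\alpha_a\partial_{s^\alpha}$ of the rank-$(2n+r)$ distribution $\ker\tau\cap\bigcap_\alpha\ker\eta^\alpha$ (that these $2n+r$ fields lie in it uses the absence of a $\d v^i$ term in $\eta^\alpha$, i.e. compatibility condition~3, and their independence uses condition~2), verify from~\eqref{eq:adm1}–\eqref{eq:adm2} that $\widehat{\partial_{q^i}},\widehat{\partial_{u^a}}\in\textup{Adm}$, and recall that $\partial_{v^i}$ spans $\Gamma(\mathcal{C})$. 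Writing $X=f^i\widehat{\partial_{q^i}}+F^i\partial_{v^i}+f^a\widehat{\partial_{u^a}}$ with coefficients in $C^\infty(E)$, the $\widehat{\partial_{q^i}}$- and $\widehat{\partial_{u^a}}$-contributions to $\gamma^*\inn_X\d\Theta_L$ vanish because those fields lie in $\langle\textup{Adm}\rangle^{B(\pi)}$ and $\gamma$ is a solution, while the $\partial_{v^i}$-contribution vanishes because $\inn_Z\Theta_L=0$ and $\Lie_Z\Theta_L=(\Lie_Z\tfrac{\partial L}{\partial v^i})\,\kappa^i$ for $Z\in\Gamma(\mathcal{C})$, whence $\gamma^*\inn_Z\d\Theta_L=\gamma^*\Lie_Z\Theta_L=0$ by holonomy — the computation already carried out in Section~\ref{sec:holLag}. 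This gives $\gamma^*\inn_Y\bar{\Omega}=0$ for all $Y\in\Gamma(V(\pi))$, completing the argument.

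The step I expect to be delicate is the construction and verification of that local frame: one must confirm that $\widehat{\partial_{q^i}}$ and $\widehat{\partial_{u^a}}$ genuinely meet the somewhat involved admissibility conditions~\eqref{eq:adm1}–\eqref{eq:adm2} (their $q$-components are constant, which forces $\partial f^i/\partial v^j=0$, $\partial f^i/\partial u^a=c^\alpha_a\,\partial f^i/\partial s^\alpha$, and the prescribed $\partial_{v^i}$-component to vanish), and that the three families jointly span the common kernel of $\tau$ and the $\eta^\alpha$. Everything else is bookkeeping parallel to Theorems~\ref{theo:trans} and~\ref{theo:nh}.
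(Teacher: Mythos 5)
Your proof is correct and follows essentially the same route as the paper's: the easy inclusion via $\langle\textup{Adm}\rangle^{B(\pi)}\subset\Gamma(V(\pi))$ and the expansion of $\inn_\xi\bar{\Omega}$, then the decomposition $Y=X+(\inn_Y\eta^\alpha)R_\alpha$ followed by writing $X$ in the frame $\partial_{q^i}-b^\alpha_i\partial_{s^\alpha}$, $\partial_{u^a}-c^\alpha_a\partial_{s^\alpha}$, $\partial_{v^i}$ as an Adm-part plus a $\Gamma(\mathcal{C})$-part, with the latter killed by the Poincar\'e--Cartan computation on holonomic sections. Your extra care in verifying that the hatted frame fields satisfy \eqref{eq:adm1}--\eqref{eq:adm2} and that $\gamma^*(\inn_\xi\eta^\alpha)=0$ via $I_0$ only makes explicit what the paper leaves implicit.
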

\begin{proof}
Let $\gamma\in\text{Sol}(\bar{\Omega},(I_0,I_1^{nh},I_1^{vak}),\Gamma(V(\pi))$, and consider a vector field $\xi\in\langle\textup{Adm}\rangle^{B(\pi)}$. Since $\langle\textup{Adm}\rangle^{B(\pi)}\subset\Gamma(V(\pi))$:
$$
0=\gamma^*\inn_\xi\bar{\Omega}=\gamma^*\left[\inn_\xi\d \Theta_L+(\inn_\xi\sigma_\alpha)\eta^\alpha-(\inn_\xi\eta^\alpha)\sigma_\alpha\right]=\gamma^*\inn_\xi\d \Theta_L\,.
$$
Therefore,

    $$\text{Sol}(\d \Theta_L,(I_0,I_1^{nh},I_1^{vak}),\langle\textup{Adm}\rangle^{B(\pi)})\supset\text{Sol}(\bar{\Omega},(I_0,I_1^{nh},I_1^{vak}),\Gamma(V(\pi))\,.
    $$
    Each vector field $Y\in\Gamma(V(\pi))$ can be decomposed as
    $$
    Y=X+\inn_Y\eta^\alpha R_\alpha,
    $$
    where $X=Y-\inn_Y\eta^\alpha R_\alpha$ has the properties $\inn_X\tau=0$ and $\inn_X\eta^\alpha=0$ for all $\alpha=1,\dots,k$. A vector field with these properties has the local expression (using \eqref{eq:formscoordinates} for the local expression of $\eta^\alpha$):
\begin{align}
X&=f^i\frac{\partial}{\partial q^i}+F^i\frac{\partial}{\partial v^i}-\left(b^\alpha_if^i+c^\alpha_ af^a\right)\frac{\partial}{\partial s^\alpha}+f^a\frac{\partial}{\partial u^a}
\\
&=f^i\left(\frac{\partial}{\partial q^i}-b^\alpha_i\frac{\partial}{\partial s^\alpha}\right)+f^a\left(\frac{\partial}{\partial u^a}-c^\alpha_a\frac{\partial}{\partial s^\alpha}\right)+F^i\frac{\partial}{\partial v^i}\,.
\end{align}

Therefore, it is a linear combination of vector fields in Adm plus an element of $\Gamma(\mathcal{C})$. Namely, 
\begin{align}
    X=G_\mu\xi^\mu+\chi\,, 
\end{align}
for some $\xi^\mu\in$ Adm and $G_\mu\in C^\infty(E)$, with $\chi\in\Gamma(\mathcal{C})$. 
    Let $\gamma\in\text{Sol}(\d \Theta_L,(I_0,I_1^{nh},I_1^{vak}),\langle\textup{Adm}\rangle^{B(\pi)})$ and $Y\in\Gamma(V(\pi))$ :   
\begin{align}
 \gamma^*\inn_Y\bar{\Omega}&=\gamma^*[\inn_Y\d\Theta_L+(\inn_Y\sigma_\alpha)\eta^\alpha-\sigma_\alpha\inn_Y\eta^\alpha]
 \\
 &=\gamma^*[\inn_X\d\Theta_L+\inn_Y\eta^\alpha\inn_{R_\alpha}\d\Theta_L+(\inn_Y\sigma_\alpha)\eta^\alpha-\inn_Y\eta^\alpha\,\inn_{R_\alpha}\d\Theta_L ]=\gamma^*\inn_X\d\Theta_L
 \\
 &=\gamma^*\left[G_\mu\inn_{\xi^\mu}\d\Theta_L+\inn_\chi\d\Theta_L\right]
 \\ &=\left(\gamma^*G_\mu\right)\gamma^*\left[\inn_{\xi^\mu}\d\Theta_L\right]+\gamma^*\left[\chi(L)\d t+\chi\left(\frac{\partial L}{\partial v^i}\right)\kappa^i+\frac{\partial L}{\partial v^i}\inn_\chi\d \kappa^i\right]
 \\
 &=0.   
\end{align}

Therefore, $\gamma\in\text{Sol}(\bar{\Omega},(I_0,I_1^{nh},I_1^{vak}),\Gamma(V(\pi))$.

\end{proof}

\paragraph{Example: Regular action-dependent Lagrangians}\label{sec:regularlag}  Consider $E=J^1\rho\times\mathbb{R}$, with coordinates $(t,q^i,v^i,s)$. A (non-autonomous) action-dependent Lagrangian is a function $L:E\rightarrow \mathbb{R}$. Let $\mathcal{K}$ be a compact subset of $\mathbb{R}$. The Herglotz variational problem consists in finding holonomic critical sections $\gamma:I_\gamma\subset\mathbb{R}\rightarrow E$ of the action 

$$
\mathcal{A}(\mathcal{K},\gamma)=\int_\mathcal{K} \gamma^*(L\d t)
$$
which satisfy the nonholonomic constraint 
\begin{align}\label{eq:etanhvak}
    \eta=\d s+\left(\frac{\partial L}{\partial v^i}v^i-L\right) \d t-\frac{\partial L}{\partial v^i}\d q^i\,.
\end{align}
This constraint imposes the condition that the variable $s$ is the flow of action. In this case we have the vakonomic constraints given by the Cartan codistribution $I^{vak}_1=\langle\kappa^i\rangle$. We also have the nonholonomic constraint $I^{nh}_1=\langle\eta\rangle$.  The admissible variations are 
\begin{align}
    \textup{Adm}=\{\xi\in\mathfrak{X}(E)|\;\inn_\xi\tau=0\,, \Lie_Y\kappa\in \langle I^{vak}_1, I^{nh}_1\rangle\,,\forall\kappa\in I^{vak}_1\,, \inn_Y\eta=0\}
\end{align}

The constraints are compatible because $\inn_{\frac{\partial}{\partial v^i}}\eta=0$ for $i=1,\dots,n$, and 

\begin{align*}        \tau\wedge\left(\bigwedge_{\alpha=1}^k\eta^\alpha\right)\wedge\left(\bigwedge_{i=1}^n\kappa^i\wedge\d\kappa^i\right)=(-1)^n       \d t\wedge\left(\bigwedge_{\alpha=1}^k\d s^\alpha\right)\wedge\left(\bigwedge_{i=1}^n\d q^i\wedge\d v^i\right)\neq0\,.
    \end{align*}

Is it possible to find a vector field such that $\inn _{R_s}\eta=1$ and $\inn_{R_s}\tau=0$. For instance,
\begin{align}\label{eq:reeblagcoco}
R_s=\frac{\partial}{\partial s}-W^{ij}\frac{\partial^2L}{\partial s\partial v^j}\frac{\partial}{\partial v^i}\,,
\end{align}
which exists because $L$ is regular. Therefore, by theorem \ref{thm:vaknh1} and \ref{thm:base}, the Herglotz variational problem $(L\d t,(\emptyset,I_1^{nh},I_1^{vak}),\textup{Adm})$ has the same solutions as $(\d \Theta_L,(\emptyset,I_1^{nh},I_1^{vak}),\langle\textup{Adm}\rangle^{B(\pi)})$, with $\d \Theta_L$ proposed in \eqref{eq:thetalvaknh}.

With the choice of $R_s$ given in \eqref{eq:reeblagcoco} we have 
\begin{align}
   \sigma_s=\frac{\partial L}{\partial s}\d t\,,\quad{\bar{\Omega}}=\d \Theta_L+\sigma_s\wedge\eta\,.
\end{align}
Notice that $\d\Theta_L=-\d \eta$ and, therefore, ${\bar{\Omega}}=-\left(\d \eta -\sigma_s\wedge\eta\right)$. According to theorem \ref{thm:vaknh2}, the variational problem $(\d \Theta_L,(\emptyset,I_1^{nh},I_1^{vak}),\langle\textup{Adm}\rangle^{B(\pi)})$ is dynamically equivalent to $(\bar{\Omega},(\emptyset,I_1^{nh},I_1^{vak}),\Gamma(V(\pi))$.

Choose 
\begin{align}
    R_t=\frac{\partial}{\partial t}-W^{ij}\frac{\partial^2L}{\partial v^i\partial t}\frac{\partial}{\partial v^j}\,.
\end{align}
We have the identities
\begin{align}
    R_s(E_L)=-\frac{\partial L}{\partial s}\,,\quad R_t(E_L)=-\frac{\partial L}{\partial t} \,.
\end{align}
Then, the forms defined in \eqref{eq:formstrans} are
   \begin{align}
    \sigma_t&=\d E_L-R_t(E_L)\d t-R_s(E_L)\hat{\eta}\,.
    \\
    \omega&=\d \frac{\partial L}{\partial v^i}\wedge\d q^i\,.
\end{align}
where $\hat{\eta}=\d s-\frac{\partial L}{\partial v^i}\d q^i$.
The vector field
\begin{align}
X=-v^i\frac{\partial}{\partial q^i}-W^{ij}\left(\frac{\partial L}{\partial q^j}-v^k\frac{\partial^2 L}{\partial q^k\partial v^j}-L\frac{\partial^2L}{\partial s\partial v^i}+\frac{\partial L}{\partial s}\frac{\partial L}{\partial v^i}\right)\frac{\partial}{\partial v^i}-L\frac{\partial}{\partial s}
\end{align}
satisfies:
\begin{align}
    \inn_X\omega=\sigma_t\,,\quad\inn_{R_t-X}\tau=1\,,\quad \inn_{R_t-X}\eta=0\,,\quad \inn_{R_t-X}\kappa^i=0\,,\quad i=1,\dots,n\,.
\end{align}

Therefore, the hypotheses of theorem \ref{theo:trans} hold and the variational problem is dynamically equivalent to $(\bar{\Omega},(\emptyset,I_1^{nh},I_1^{vak}),\mathfrak{X}(E))$.

The equations for vector fields are \ref{dfn:solvector}:
\begin{align}
    \inn_{Z}{\bar{\Omega}}=0\,,\quad \inn_{Z}\eta=0\,,\quad \inn_{Z}\kappa^i=0\,,\,i=1,\dots,n\,,\quad \inn_{Z}\tau=1\,.
\end{align}
 The last two equations are equivalent to saying that $Z$ is a holonomic vector field. The other two equations are equivalent to the cocontact Lagrangian equations \cite{LGGMR-2022}:
\begin{align}
    \inn_{Z}\d \hat{\eta}&=\d E_L-R_s(E_L)\eta-R_t(E_L)\tau\,,
    \\
    \quad \inn_{Z}\hat{\eta}&=-E_L\,.
\end{align}
Thus, the Herglotz variational problem leads to cocontact geometry. Notice that the same equations will have been found if we had considered other $R_s$ and $R_t$. The fact that $(\tau,\hat{\eta})$ is a cocontact system allows us to select a specially nice $R_s$ with the extra property that $ \inn_{R_s}\d\hat\eta=0$.

\section{Absorption of Constraints}\label{sec:abs}

A clearer presentation of the geometry ought to be provided by a variational problem without constraints, that is
\begin{align}
    (\Omega,\emptyset,\mathfrak{X}(E))\, \text{ or }(\Omega,\emptyset,\Gamma(V(\pi)))\,.
\end{align}
In this situation, all information is contained in the $2$-form $\Omega$. This means that the constraints will be recovered dynamically as new equations of motion. The archetypical example is the technique of the Lagrange multipliers (see, for instance, theorem 2 in \cite{gracia_geometric_2003}). Another relevant example are regular Lagrangian systems, where the holonomy conditions are also recovered dynamically (this is another nice property of the Poincaré-Cartan $2$-form \eqref{eq:PoincareCartan}). That is no the case if the Lagrangian is singular.

Here we present a general procedure to construct a new form such that the constraints are recovered dynamically at the cost of adding new variables. This procedure is heavily inspired on \cite{QC-2025}.

\subsection{Absorption of the Holonomy Condition of Lagrangian Systems}

We will use the same notation as in section \ref{sec:holLag}.
Consider the first jet bundle $\pi:E=J^1\rho\rightarrow \mathbb{R}$ of the fiber bundle $\rho:\mathbb{R}\times Q\rightarrow \mathbb{R}$, with $\dim Q=n$. The Lagrangian is a function $L:J^1\rho\rightarrow \mathbb{R}$. The holonomy condition is imposed by implementing the Cartan codistribution of $J^1\rho$ as vakonomic constraints. Therefore, $I_0=\emptyset$ and $I_1^{vak}=\langle\kappa^i\rangle_{i=1,\dots,n}$.

Concatenating the results of section \ref{sec:holLag}, we have that the VP $(L\tau,(\emptyset,I^{vak}_1),\textup{Adm})$ is dynamically equivalent to the GVP $(\d \Theta_L,(\emptyset,I^{vak}_1),\Gamma(V(\pi)))$\,. 

To absorb the constraints $I^{vak}_1$, we will add $n$ new variables. Consider the trivial bundle $\pi':J^1\rho\times\mathbb{R}^n\rightarrow J^1\rho$. The new coordinates will be denoted $p_i$ and they are globally defined. We will denote $L'=\pi'^*L$, $\kappa'^i=\pi'^*\kappa^i$ and, abusing notation, $\tau=\pi'^*\tau$.

\begin{thm}\label{thm:absholo1} The GVP $(\Omega_\mathfrak{U},\emptyset,\Gamma(V(\pi'\circ\pi)))$ is dynamically $\pi'$-equivalent to the GVP
\\$(\d \Theta_L,(\emptyset,I^{vak}_1),\Gamma(V(\pi)))$, with
    \begin{align}\label{eq:thetalvaknh2}
        \Omega_\mathfrak{U}=\d L'\wedge \tau+\d (p_i\kappa'^i)\,.
    \end{align}
\end{thm}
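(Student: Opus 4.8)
The plan is to exhibit an explicit bundle map, namely the projection $\pi':J^1\rho\times\mathbb{R}^n\to J^1\rho$, and show that $\tilde{\pi'}$ maps solutions of $(\Omega_\mathfrak{U},\emptyset,\Gamma(V(\pi'\circ\pi)))$ bijectively onto solutions of $(\d\Theta_L,(\emptyset,I_1^{vak}),\Gamma(V(\pi)))$. The key observation is that $\Omega_\mathfrak{U}=\d L'\wedge\tau+\d(p_i\kappa'^i)$ expands, in adapted coordinates $(t,q^i,v^i,p_i)$, to
\begin{align*}
\Omega_\mathfrak{U}=\d L'\wedge\d t+\d p_i\wedge\kappa'^i+p_i\,\d\kappa'^i=\d p_i\wedge(\d q^i-v^i\d t)-\d p_i\wedge v^i\d t\cdots,
\end{align*}
but more usefully that, writing $\Theta_L=L\d t+\tfrac{\partial L}{\partial v^i}\kappa^i$, the form $\d\Theta_L$ pulled back along $\pi'$ differs from $\Omega_\mathfrak{U}$ precisely by the substitution $p_i\leftrightarrow \partial L/\partial v^i$. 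So the dynamics of $\Omega_\mathfrak{U}$ should force the new variables $p_i$ onto the submanifold $p_i=\partial L/\partial v^i$ (which is the image of the Legendre-type embedding), and there the equations reduce to those of $\d\Theta_L$ together with the holonomy constraint $\kappa^i$, now obtained dynamically.

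Concretely, I would proceed as follows. First, compute $\inn_Z\Omega_\mathfrak{U}=0$ for a vertical vector field $Z=\frac{\partial}{\partial t}+A^i\frac{\partial}{\partial q^i}+B^i\frac{\partial}{\partial v^i}+C_i\frac{\partial}{\partial p_i}$ with $\inn_Z\tau=1$ (a vector-field solution in the sense of Definition \ref{dfn:solvector}), and read off the resulting system. The $\frac{\partial}{\partial p_i}$-component of the equation gives $A^i-v^i=0$, i.e. the holonomy condition $\inn_Z\kappa'^i=0$; the $\frac{\partial}{\partial v^i}$-component gives $p_i=\partial L'/\partial v^i$ (using regularity is \emph{not} needed here — this is an algebraic consequence, since $\Omega_\mathfrak{U}$ contains the term $\d p_i\wedge\kappa'^i$ whose $\d v^i$-free part pairs $p_i$ against $\partial L/\partial v^i$ via $\d L'\wedge\d t$); the $\frac{\partial}{\partial q^i}$-component gives the remaining equation which, after substituting $p_i=\partial L/\partial v^i$ and the holonomy relation, becomes exactly $\gamma^*\inn_{\partial/\partial q^i}\d\Theta_L=0$. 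Thus every solution $\gamma$ of $(\Omega_\mathfrak{U},\ldots)$ lies on the submanifold $\{p_i=\partial L/\partial v^i\}$, and $\tilde{\pi'}(\gamma)=\pi'\circ\gamma$ is holonomic and satisfies $\gamma^*\inn_\xi\d\Theta_L=0$ for all $\xi\in\Gamma(V(\pi))$ — so it is a solution of the target GVP. Conversely, given a solution $\delta$ of $(\d\Theta_L,(\emptyset,I_1^{vak}),\Gamma(V(\pi)))$, its unique lift $\gamma$ with $p_i\circ\gamma=(\partial L/\partial v^i)\circ\delta$ is a solution of $(\Omega_\mathfrak{U},\ldots)$, by reversing the computation. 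Injectivity of $\tilde{\pi'}$ on solutions is immediate because the $p_i$ are determined by $\delta$.

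To make this rigorous at the level of local sections rather than vector fields (which is what Definition \ref{dfn:solsec} and Definition \ref{dfn:equal} require), I would run the same computation with $\gamma^*\inn_\xi\Omega_\mathfrak{U}=0$ ranging over $\xi\in\Gamma(V(\pi'\circ\pi))$, taking in turn $\xi=\frac{\partial}{\partial p_i}$, $\xi=\frac{\partial}{\partial v^i}$, $\xi=\frac{\partial}{\partial q^i}$. Testing against $\frac{\partial}{\partial p_i}$ yields $\gamma^*\kappa'^i=0$, i.e. $\pi'\circ\gamma$ is holonomic; testing against $\frac{\partial}{\partial v^i}$ yields $p_i\circ\gamma=(\partial L/\partial v^i)\circ\gamma$ (here one uses that on a holonomic section the $\d v^i$ terms survive and pin down $p_i$); testing against $\frac{\partial}{\partial q^i}$, after inserting those two facts, yields precisely $(\pi'\circ\gamma)^*\inn_{\partial/\partial q^i}\d\Theta_L=0$. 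Since $\langle\frac{\partial}{\partial q^i}\rangle$ generates $\Gamma(V(\pi))$ over $C^\infty(J^1\rho)$ modulo $\Gamma(\mathcal{C})=\langle\frac{\partial}{\partial v^i}\rangle$, and $\d\Theta_L$ annihilates $\Gamma(\mathcal{C})$ on holonomic sections, this gives all the equations of motion of the target. The converse lift and injectivity are as above.

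The main obstacle I anticipate is bookkeeping in the explicit expansion of $\Omega_\mathfrak{U}$ and verifying cleanly that the $\frac{\partial}{\partial v^i}$-test really does force $p_i=\partial L/\partial v^i$ \emph{without} a regularity hypothesis — one must be careful that the term $\d(p_i\kappa'^i)=\d p_i\wedge\kappa'^i+p_i\,\d\kappa'^i$ contributes $-\d p_i\wedge v^i\,\d t+p_i\,\d v^i\wedge\d t$ and that when contracted and pulled back along a holonomic section the $\d q^i$ and $\d v^i$ pieces separate correctly so that the coefficient of $\d v^i$ in $\gamma^*\inn_{\partial/\partial v^i}\Omega_\mathfrak{U}$ is exactly $(\partial L/\partial v^i - p_i)$. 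A secondary subtlety is checking that the map $\mu=\pi'$ is genuinely a bundle map over $\mathbb{R}$ (it is, since it commutes with the projections to $\mathbb{R}$) and that $\tilde{\mu}$ restricted to solutions is injective — both are straightforward once the solution set is identified with the graph $\{p=\partial L/\partial v\}$.
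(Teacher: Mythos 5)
Your proposal is correct and follows essentially the same route as the paper: test $\gamma'^*\inn_{\xi'}\Omega_\mathfrak{U}=0$ against the three families $\partial/\partial p_i$, $\partial/\partial v^i$, $\partial/\partial q^i$ to extract the holonomy condition, the Legendre identity $p_i=\partial L'/\partial v^i$, and the Euler--Lagrange equation, then project, lift back, and check injectivity via the determination of $p_i$. The only slip is cosmetic: the contraction $\inn_{\partial/\partial v^i}\Omega_\mathfrak{U}$ pulled back along a section pins $p_i$ down through its $\d t$-coefficient $(\partial L'/\partial v^i-p_i)$, not a $\d v^i$-coefficient, but this does not affect the argument.
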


\begin{proof} Let $\gamma'$ be a solution of $(\Omega_\mathfrak{U},\emptyset,\Gamma(V(\pi'\circ\pi)))$. The local expression of an element $\xi'\in\Gamma(V(\pi'\circ\pi)))$ is
\begin{align*}
    \xi'=f^i\frac{\partial}{\partial q^i}+g^i\frac{\partial}{\partial v^i}+G_i\frac{\partial}{\partial p_i}\,, 
\end{align*}
for arbitrary $f^i, g^i,G_i\in C^\infty(J^1\rho\times\mathbb{R}^n)$. Then, the equations of motion of $(\Omega_\mathfrak{U},\emptyset,\Gamma(V(\pi'\circ\pi)))$ are:
\begin{align*}
\gamma'^*\inn_{\xi'}\Omega_\mathfrak{U}=\gamma'^*\left[
f^i\left(\frac{\partial L'}{\partial q^i}\tau-\d p_i\right)
+g^i\left(\frac{\partial L'}{\partial v^i}\tau- p_i\tau\right)
+G_i\left(\kappa'^i\right)
\right]=0\,.
\end{align*}
Therefore, we have the equations:
\begin{align}
\gamma'^*\kappa'^i=0\,,\label{eq:abshol1}
\\\label{eq:abshol2}
\gamma'^*\left(\frac{\partial L'}{\partial v^i}- p_i\right)=0\,,
\\\label{eq:abshol3}
\gamma'^*\left(\frac{\partial L'}{\partial q^i}\tau-\d \left(\frac{\partial L'}{\partial v^i}\right)\right)=0\,.
\end{align}

Define $\gamma=\pi'\circ\gamma'$, which is a section of $\pi$. The section $\gamma$ is holonomic (that is, it satisfies the constraints) because, as a consequence of \eqref{eq:abshol1}, $\gamma^*\kappa^i=\gamma'^*\kappa'^i=0$. For a generic vector field $\xi\in\Gamma(V(\pi))$ with local expression:
\begin{align*}
    \xi=f^i\frac{\partial}{\partial q^i}+g^i\frac{\partial}{\partial v^i}\,, 
\end{align*}
the equations of motion of the system $(\d \Theta_L,(\emptyset,I^{vak}_1),\Gamma(V(\pi)))$ are
\begin{align}\label{eq:abshol4}
 \gamma^*\inn_\xi\d\Theta=\gamma^*\left[
f^i\left(\frac{\partial L}{\partial q^i}\tau-\d \frac{\partial L}{\partial v^i}\right)
\right]\,,
\end{align}
which vanishes due to \eqref{eq:abshol3}. In conclusion, $\gamma=\pi'\circ\gamma'$ is a solution of $(\d \Theta_L,(\emptyset,I^{vak}_1),\Gamma(V(\pi)))$.

Consider we have two solutions $\gamma'_1$ and $\gamma'_2$ of $(\Omega_\mathfrak{U},\emptyset,\Gamma(V(\pi'\circ\pi)))$, which map to the same section $\pi'\circ \gamma'_1=\pi'\circ \gamma'_2$. They can only differ on the coordinates $p_i$. Nevertheless, as a consequence of \eqref{eq:abshol2},
\begin{align*}
    \gamma_1'^*p_i= \gamma_1'^*\frac{\partial L'}{\partial v^i}
    =(\pi'\circ\gamma_1')^*\frac{\partial L}{\partial v^i}
    =(\pi'\circ\gamma_2')^*\frac{\partial L}{\partial v^i}
    = \gamma_2'^*\frac{\partial L'}{\partial v^i}
    =\gamma_2'^*p_i\,.
\end{align*}
Therefore, $\gamma'_1=\gamma'_2$ and $\hat{\pi}'$ is injective on Sol$(\Omega_\mathfrak{U},\emptyset,\Gamma(V(\pi'\circ\pi)))$.

Finally, if $\gamma$ is a solution of $(\d \Theta_L,(\emptyset,I^{vak}_1),\Gamma(V(\pi)))$, define $\gamma'$ as the section of $\pi'\circ\pi$ satisfying $\pi'\circ\gamma'=\gamma$ with $\gamma'^*p_i=\gamma^*\frac{\partial L}{\partial v^i}$. Since $\gamma$ is holonomic, $\gamma'$ satisfies \eqref{eq:abshol1}. By the definition of the $p_i$, it automatically satisfies \eqref{eq:abshol2}. Finally, equation \eqref{eq:abshol3} is satisfied as long as \eqref{eq:abshol4} vanishes, which it does because $\gamma$ is a solution.
\end{proof}

If we identify $T^*Q\approx Q\times\mathbb{R}^n$, we can interpret $p_i$ as the momenta. The resulting formalism is the unified Lagrangian-Hamiltonian formalism, also known as Skinner-Rusk formalism \cite{SR-83}. The relation between the Skinner-Rusk formalism and the Cartan codistribution was first noticed by Gotay \cite{GotayCartan}. He also proved that both variational problems are actually dynamically equivalent. Nevertheless, as far as I know, Capriotti was the first to spell out the relation in similar terms as the ones exposed here \cite{capriotti}.

\subsection{Absorption of Nonholonomic Constraints}

Consider a system with constraints given by $(I_1^{nh})$. Assume that $I_1^{nh}$ is co-orientable, with $I_1^{nh}=\langle \eta^\alpha\rangle_{\alpha=1,\dots,k}$. 
Let $R_\alpha$ be vector fields satisfying \eqref{eq:reeb}, and $\{\frac{\partial}{\partial y^i}\}$ a local base of vertical vector fields which vanish when contracted by any $\eta^\alpha$ (that is, a base of Adm as given in \eqref{eq:admNonholo}). Given a $2$-form $\Omega\in\Omega^2(E)$, remember from section \ref{sec:nonhol} that $\bar{\Omega}$ is given by
\begin{align}
    \bar{\Omega}=\Omega+\inn_{R_\alpha}\Omega\wedge\eta^\alpha\,.
\end{align} 

Following the results and notation of section \ref{sec:nonhol}, the GVP $(\Omega,(\emptyset,I_1^{nh}),\textup{Adm})$ is dynamically equivalent to the GVP $(\bar{\Omega},(\emptyset,I_1^{nh}),\Gamma(V(\pi)))$.\footnote{In this section we will not consider constraints given by functions, as it is well-known that they can be incorporated using Lagrange multipliers. Avoiding them will make the exposition tidier.}

To absorb the constraints, we will add $k$ variables. Consider the trivial bundle $\pi':E\times\mathbb{R}^{k}\rightarrow E$. The new coordinates will be denoted by $p_\alpha$, and they are globally defined. We will denote ${\Omega}'=\pi'^*{\Omega}$, $\eta'^\alpha=\pi'^*\eta^\alpha$ and $\tau'=\pi'^*\tau$. Moreover, $R'_\alpha$ and $\frac{\partial}{\partial y^i}'$ are the lifts by the $0$-section of $R_\alpha$ and $\frac{\partial}{\partial y^i}$, respectively.

\begin{thm}\label{thm:absnoholo} The GVP $(\Omega_\mathfrak{U},\emptyset,\Gamma(V(\pi'\circ\pi)))$ is an extension of the GVP
\\$(\bar{\Omega},(\emptyset,I_1^{nh}),\Gamma(V(\pi)))$, with
    \begin{align}\label{eq:thetalvaknh2}
        \Omega_\mathfrak{U}=\Omega'  +\d p_\alpha\wedge\eta'^\alpha\,,
    \end{align}
\end{thm}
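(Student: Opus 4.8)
The plan is to mimic closely the structure of the proof of Theorem \ref{thm:absholo1}, but now the map $\tilde{\pi}'$ need not be injective on solutions, so we only aim for an extension in the sense of Definition \ref{dfn:equal}. First I would compute the equation of motion $\gamma'^*\inn_{\xi'}\Omega_\mathfrak{U}=0$ for an arbitrary $\xi'\in\Gamma(V(\pi'\circ\pi))$. Writing $\xi'=\chi + G_\alpha\frac{\partial}{\partial p_\alpha}$, where $\chi$ is a vertical vector field on $E$ (lifted to $E\times\mathbb{R}^k$) and $G_\alpha\in C^\infty(E\times\mathbb{R}^k)$ are arbitrary, the $\d p_\alpha\wedge\eta'^\alpha$ term produces the two families of conditions: the coefficient of $G_\alpha$ gives $\gamma'^*\eta'^\alpha=0$ for all $\alpha$, and the remaining terms give $\gamma'^*\bigl(\inn_\chi\Omega' + (\inn_\chi\eta'^\alpha)\d p_\alpha\bigr)=0$ for all vertical $\chi$ on $E$.

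Next I would set $\gamma=\pi'\circ\gamma'$, a section of $\pi$. From $\gamma'^*\eta'^\alpha=0$ we get $\gamma^*\eta^\alpha=0$, so $\gamma$ satisfies the constraints $I_1^{nh}$. To check the equation of motion of $(\bar{\Omega},(\emptyset,I_1^{nh}),\Gamma(V(\pi)))$ I would test against a generic vertical vector field $Y\in\Gamma(V(\pi))$, decomposed as in the proof of Theorem \ref{theo:nh} as $Y=X+\inn_Y\eta^\alpha R_\alpha$ with $\inn_X\eta^\alpha=0$ (so $X$ is, locally, a combination of the $\frac{\partial}{\partial y^i}$). Using the definition $\bar{\Omega}=\Omega+\inn_{R_\alpha}\Omega\wedge\eta^\alpha$ and that $\gamma^*\eta^\alpha=0$, one reduces $\gamma^*\inn_Y\bar{\Omega}$ to $\gamma^*\inn_X\Omega$. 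Now I would recover this from the second family of equations above: take in that family a lift $\chi$ with $\inn_\chi\eta'^\alpha=0$ projecting to $X$; then $\gamma'^*\inn_\chi\Omega'=\gamma^*\inn_X\Omega=0$ since these $\chi$ kill the $\d p_\alpha$ term. Hence $\gamma$ is a solution, giving $\tilde{\pi}'(\mathrm{Sol}(\Omega_\mathfrak{U},\emptyset,\Gamma(V(\pi'\circ\pi))))\subset\mathrm{Sol}(\bar{\Omega},(\emptyset,I_1^{nh}),\Gamma(V(\pi)))$.

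For the reverse inclusion, given a solution $\gamma$ of $(\bar{\Omega},(\emptyset,I_1^{nh}),\Gamma(V(\pi)))$ I would lift it: I need to find functions $p_\alpha(t)$ along $\gamma$ so that the second family of equations holds for all vertical $\chi$ on $E$, not only those annihilating $\eta'^\alpha$. The constraint $\gamma^*\eta^\alpha=0$ is automatic. Testing against $\chi=R_\alpha$ (lifted) gives $\gamma'^*\bigl(\inn_{R_\alpha}\Omega' + \d p_\alpha\bigr)=0$, i.e. $\d(\gamma'^*p_\alpha) = -\gamma^*\inn_{R_\alpha}\Omega = -\gamma^*\sigma_\alpha$, a first-order ODE along $\gamma$ that always has a solution (determining $p_\alpha$ up to an additive constant — this is exactly where injectivity fails and we only get an extension). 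Any vertical $\chi$ on $E$ decomposes as $X + (\inn_\chi\eta^\alpha)R_\alpha$ with $\inn_X\eta^\alpha=0$, and for such $\chi$ the equation becomes $\gamma'^*\inn_X\Omega' + \inn_\chi\eta^\alpha\,\gamma'^*(\inn_{R_\alpha}\Omega'+\d p_\alpha)$; the first term vanishes because it equals $\gamma^*\inn_X\bar\Omega$ (using $\gamma^*\eta^\alpha=0$) which is zero as $\gamma$ solves the target GVP, and the second vanishes by the choice of $p_\alpha$. Finally $\gamma'^*\eta'^\alpha = \gamma^*\eta^\alpha = 0$. So $\gamma'$ is a solution mapping to $\gamma$, proving surjectivity of $\tilde{\pi}'$ onto $\mathrm{Sol}(\bar{\Omega},(\emptyset,I_1^{nh}),\Gamma(V(\pi)))$, hence the extension claim.

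The main obstacle is bookkeeping rather than conceptual: one must be careful that the equations extracted from $\gamma'^*\inn_{\xi'}\Omega_\mathfrak{U}=0$ really do separate into the $\eta'^\alpha$-constraints and the $\bar\Omega$-dynamics after projection, and in particular that testing against the lifted $R_\alpha$ is legitimate and yields the solvable ODE for $p_\alpha$ whose non-uniqueness is precisely why $\tilde\pi'$ is not injective. The co-orientability hypothesis is used exactly to guarantee the $R_\alpha$ exist and the decomposition $Y=X+\inn_Y\eta^\alpha R_\alpha$ is available.
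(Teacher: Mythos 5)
Your proposal is correct and follows essentially the same route as the paper: extract from $\gamma'^*\inn_{\xi'}\Omega_\mathfrak{U}=0$ the constraint equations $\gamma'^*\eta'^\alpha=0$, the evolution of $p_\alpha$ by testing against the lifted $R_\alpha$, and the $\bar\Omega$-dynamics by testing against lifts annihilating the $\eta'^\alpha$; then, for surjectivity, integrate $\gamma^*\sigma_\alpha$ along the (contractible) interval $I_\gamma$ to define $p_\alpha$, with the additive constant accounting for the failure of injectivity. The only slip is a sign: $\inn_\chi(\d p_\alpha\wedge\eta'^\alpha)=-(\inn_\chi\eta'^\alpha)\d p_\alpha$ for $\chi$ a lift, so the $p_\alpha$-equation is $\gamma'^*(\inn_{R_\alpha'}\Omega'-\d p_\alpha)=0$, i.e. $\d(\gamma'^*p_\alpha)=+\gamma^*\sigma_\alpha$; this is propagated consistently and does not affect the argument.
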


\begin{proof} Let $\gamma'$ be a solution of $(\Omega_\mathfrak{U},\emptyset,\Gamma(V(\pi'\circ\pi)))$. The local expression of an element $\xi'\in\Gamma(V(\pi'\circ\pi)))$ is
\begin{align*}
    \xi'=f^i\frac{\partial}{\partial y^i}'+F^\alpha R_\alpha'+G_\alpha\frac{\partial}{\partial p_\alpha}\,, 
\end{align*}
where $f^i, F^\alpha,G_\alpha\in C^\infty(E\times\mathbb{R}^k)$.

Then, the equations of motion of $(\Omega_\mathfrak{U},\emptyset,\Gamma(V(\pi'\circ\pi)))$ are:
\begin{align*}
\gamma'^*\inn_{\xi'}\Omega_\mathfrak{U}=\gamma'^*\left[
f^i\inn_{\frac{\partial}{\partial y^i}'}{\Omega}'
+F^\alpha\left(\inn_{R_\alpha'}\Omega'-\d p_\alpha\right)
+G_\alpha\eta'^\alpha
\right]=0\,.
\end{align*}
Therefore, we have the equations:
\begin{align}
\gamma'^*\eta'^\alpha=0\,,\label{eq:absnohol1}
\\\label{eq:absnohol2}
\gamma'^*\left(\inn_{R_\alpha'}\Omega'-\d p_\alpha\right)=0\,,
\\\label{eq:absnohol3}
\gamma'^*\left(\inn_{\frac{\partial}{\partial y^i}'}{\Omega}'\right)=0\,.
\end{align}

Define $\gamma=\pi'\circ\gamma'$, which is a section of $\pi$. The section $\gamma$ satisfies the constraints because, as a consequence of \eqref{eq:absnohol1}, $\gamma^*\eta^\alpha=\gamma'^*\eta'^\alpha=0$. For a generic vector field $\xi\in\Gamma(V(\pi))$ with local expression:
\begin{align*}
    \xi=f^i\frac{\partial}{\partial y^i}+F^\alpha R_\alpha\,, 
\end{align*}
the equations of motion of the system $(\bar{\Omega},(\emptyset,I_1^{nh}),\Gamma(V(\pi)))$ are
\begin{align}\label{eq:absnohol4}
 \gamma^*\inn_\xi\bar{\Omega}=\gamma^*\left(f^i\inn_{\frac{\partial}{\partial y^i}}{\Omega}\right)\,,
\end{align}
which vanishes due to \eqref{eq:absnohol3}. In conclusion, $\gamma=\pi\circ\gamma'$ is a solution of $(\bar{\Omega},(\emptyset,I_1^{nh}),\Gamma(V(\pi)))$.

Let $\gamma$ be a solution of $(\bar{\Omega},(\emptyset,I_1^{nh}),\Gamma(V(\pi)))$ with domain of definition $I_\gamma\subset\mathbb{R}$, which it is an open interval by definition. Define $\gamma'$ as the section of $\pi'\circ\pi$ satisfying $\pi'\circ\gamma'=\gamma$ and $\gamma'^*p_i$ defined as follows. The $1$-forms $\gamma^*\inn_{R_\alpha}\Omega$ are closed in $\mathbb{R}$ and, since $I_\gamma$ is contractible, there exist functions $G_\alpha$  such that $\d G_\alpha=\gamma^*\inn_{R_\alpha}\Omega$. Define $\gamma'^*p_\alpha(t)=G_\alpha(t)$.

Since $\gamma$ satisfies the constraints, $\gamma'$ satisfies \eqref{eq:absnohol1}. By the definition of the $p_\alpha$, it automatically satisfies \eqref{eq:absnohol2}. Finally, equation \eqref{eq:absnohol3} is satisfied as long as \eqref{eq:absnohol4} vanishes, which it does because $\gamma$ is a solution. Therefore, $\gamma'$ is a solution of $(\Omega_\mathfrak{U},\emptyset,\Gamma(V(\pi'\circ\pi)))$.

Consider we have two solutions $\gamma'_1$ and $\gamma'_2$ of $(\Omega_\mathfrak{U},\emptyset,\Gamma(V(\pi'\circ\pi)))$, which map to the same section $\pi'\circ \gamma'_1=\pi'\circ \gamma'_2$. They can only differ on the coordinates $p_\alpha$. As a consequence of \eqref{eq:absnohol2}, 
\begin{align*}
    \gamma_1'^*\d p_\alpha= \gamma_1'^*\inn_{R_\alpha'}\Omega'
    =(\pi'\circ\gamma_1')^*\inn_{R_\alpha'}\Omega'
    =(\pi'\circ\gamma_2')^*\inn_{R_\alpha'}\Omega'
    = \gamma_2'^*\inn_{R_\alpha'}\Omega'
    =\gamma_2'^*\d p_\alpha\,.
\end{align*}
Therefore, $\gamma'^*_1p_\alpha$ and $\gamma'^*_2p_\alpha$ may differ by a constant. In particular, the map $\hat{\pi}'$ is not injective on Sol$(\Omega_\mathfrak{U},\emptyset,\Gamma(V(\pi'\circ\pi)))$.
\end{proof}

\subsection{Absorption of the Holonomy Condition and Nonholonomic Constraints}

Consider a system with nonholonomic constraints given by $(I_1^{nh})$, and vakonomic cosntraints given by a Cartan codistribution $I^{vak}_1=\langle \kappa^i\rangle$.

To absorb the constraints, we will add $n+k$ variables. Consider the trivial bundle $\pi':J^1\rho\times N\times\mathbb{R}^{n+k}\rightarrow E=J^1\rho\times N$. The new coordinates will be denoted by $p_i$ and $p_\alpha$, and they are globally defined.

If the constraints are compatible (see section \ref{sec:nhvak}), then there exist $k$ vector fields $R_\alpha\in\mathfrak{X}(E)$ such that 
\begin{align}
\inn_{R_\alpha}\tau=0\,,\quad \inn_{R_\alpha}\eta^\beta=\delta^\beta_\alpha\,.
    \end{align}
    
We also have adapted coordinates $(t,q^i,v^i,s^\alpha, u^a)$ such that:
\begin{align}
    \tau=\d t\,,\quad \kappa^i=\d q^i-v^i\d t\,,\quad \eta^\alpha= \d s^\alpha+b_i^\alpha \d q^i+h^\alpha\d t +c^\alpha_a\d u^a\,,
\end{align}
for some functions $b^\alpha_ i,h^\alpha, c^\alpha_a\in C^\infty(E)$. Recall that, for a Lagrangian function $L:J^1\rho\times N\rightarrow\mathbb{R}$
\begin{align}
\bar{\Omega}=\d\left(L\d t+\frac{\partial L}{\partial v^i}\kappa^i\right)+\sigma_\alpha\wedge \eta^\alpha,
\end{align}
with $\sigma_\alpha=\inn_{R_\alpha}\d\left(L\d t+\frac{\partial L}{\partial v^i}\kappa^i\right)$.

\begin{thm}\label{thm:absnoholo} The GVP $(\Omega_\mathfrak{U},\emptyset,\Gamma(V(\pi'\circ\pi)))$ is an extension of the GVP
\\$(\bar{\Omega},(\emptyset,I_1^{nh},I_1^{vak}),\Gamma(V(\pi)))$, with
    \begin{align}\label{eq:thetalvaknh2}
        \Omega_\mathfrak{U}=\d L'\wedge\d t  +\d(p_ i\kappa'^i)+\d p_\alpha\wedge\eta'^\alpha\,.
    \end{align}
\end{thm}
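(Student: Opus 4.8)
The plan is to run, in combined form, the arguments behind Theorem~\ref{thm:absholo1} and the nonholonomic absorption construction above: work in the adapted coordinates $(t,q^i,v^i,s^\alpha,u^a,p_i,p_\alpha)$ of \eqref{eq:formscoordinates}, read off the equations of $(\Omega_\mathfrak{U},\emptyset,\Gamma(V(\pi'\circ\pi)))$ by contracting $\Omega_\mathfrak{U}$ with a convenient local frame of $\Gamma(V(\pi'\circ\pi))$, and match these with the equations of $(\bar\Omega,(\emptyset,I_1^{nh},I_1^{vak}),\Gamma(V(\pi)))$ once the constraints $\kappa^i$, $\eta^\alpha$ and the relation defining the $p_i$ have been recovered dynamically. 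Since all choices of $R_\alpha$ in \eqref{eq:reebnhvak} give dynamically equivalent systems, fix $R_\alpha=\partial/\partial s^\alpha$, which also satisfies $\inn_{R_\alpha}\kappa^i=0$ and $\inn_{R_\alpha}\d v^i=0$; write $R'_\alpha$ for its lift. A frame of $\Gamma(V(\pi'\circ\pi))$ is given by $\partial/\partial p_i$, $\partial/\partial p_\alpha$, $R'_\alpha$, the fields $\partial/\partial v^i\in\Gamma(\mathcal{C})$, and the $\textup{Adm}$-type fields $\hat q'_j:=\partial/\partial q^j-b^\alpha_j\,\partial/\partial s^\alpha$, $\hat u'_a:=\partial/\partial u^a-c^\alpha_a\,\partial/\partial s^\alpha$; the last three families, dropping the $p$-directions, form a frame of $\Gamma(V(\pi))$ (this is the decomposition used in the proof of Theorem~\ref{thm:vaknh2}).

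Let $\gamma'$ solve $(\Omega_\mathfrak{U},\emptyset,\Gamma(V(\pi'\circ\pi)))$ and put $\gamma:=\pi'\circ\gamma'$. Contracting $\Omega_\mathfrak{U}$ with $\partial/\partial p_i$ and $\partial/\partial p_\alpha$ gives $\inn_{\partial/\partial p_i}\Omega_\mathfrak{U}=\kappa'^i$ and $\inn_{\partial/\partial p_\alpha}\Omega_\mathfrak{U}=\eta'^\alpha$, hence $\gamma^*\kappa^i=0$ and $\gamma^*\eta^\alpha=0$: both families of constraints are recovered. Contracting with $\partial/\partial v^i$ gives $(\partial L'/\partial v^i-p_i)\,\d t$, hence $\gamma'^*p_i=\gamma'^*(\partial L'/\partial v^i)$: the $p_i$ are pinned down. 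Contracting with $R'_\alpha$ gives $\inn_{R'_\alpha}\Omega_\mathfrak{U}=(\partial L'/\partial s^\alpha)\,\d t-\d p_\alpha$, which coincides with $\pi'^*\sigma_\alpha-\d p_\alpha$ modulo a multiple of $\kappa'^i$; pulling back by $\gamma'$ yields $\d(\gamma'^*p_\alpha)=\gamma^*\sigma_\alpha$ as $1$-forms on $I_\gamma$. Finally, contracting with $\hat q'_j$ and $\hat u'_a$, and using $\gamma'^*p_i=\gamma'^*(\partial L'/\partial v^i)$ and $\gamma^*\kappa^i=0$, a direct computation gives the Skinner--Rusk-type identities
\begin{align*}
\gamma'^*\inn_{\hat q'_j}\Omega_\mathfrak{U}=\gamma^*\inn_{\hat q_j}\d\Theta_L\,,\qquad \gamma'^*\inn_{\hat u'_a}\Omega_\mathfrak{U}=\gamma^*\inn_{\hat u_a}\d\Theta_L\,.
\end{align*}

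For the forward inclusion, recall from $\bar\Omega=\d\Theta_L+\sigma_\alpha\wedge\eta^\alpha$, the compatibility condition $\mathcal{C}\subset\bigcap_\alpha\ker\eta^\alpha$, and $\inn_{\hat q_j}\eta^\alpha=\inn_{\hat u_a}\eta^\alpha=0$, that for a section already satisfying $\gamma^*\kappa^i=0$, $\gamma^*\eta^\alpha=0$ one has $\gamma^*\inn_{R_\alpha}\bar\Omega=\gamma^*[(\inn_{R_\alpha}\sigma_\beta)\eta^\beta]=0$ automatically, and $\gamma^*\inn_{\partial/\partial v^i}\d\Theta_L=0$ holds automatically on holonomic sections (it is a multiple of the $\kappa^j$); hence the only genuine equations of $(\bar\Omega,(\emptyset,I_1^{nh},I_1^{vak}),\Gamma(V(\pi)))$ are $\gamma^*\inn_{\hat q_j}\d\Theta_L=0$ and $\gamma^*\inn_{\hat u_a}\d\Theta_L=0$. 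By the identities above these are exactly the equations satisfied by $\gamma$, so $\gamma\in\text{Sol}(\bar\Omega,(\emptyset,I_1^{nh},I_1^{vak}),\Gamma(V(\pi)))$. Conversely, given such a $\gamma$ with domain an interval $I_\gamma$, define $\gamma'$ by $\pi'\circ\gamma'=\gamma$, $\gamma'^*p_i:=\gamma^*(\partial L/\partial v^i)$, and $\gamma'^*p_\alpha$ a primitive on $I_\gamma$ of the $1$-form $\gamma^*\sigma_\alpha$ (which exists, uniquely up to an additive constant, since $I_\gamma\subset\mathbb{R}$ is a contractible $1$-manifold); retracing the contractions shows $\gamma'$ solves $(\Omega_\mathfrak{U},\emptyset,\Gamma(V(\pi'\circ\pi)))$. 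Hence $\tilde\pi'(\text{Sol}(\Omega_\mathfrak{U},\emptyset,\Gamma(V(\pi'\circ\pi))))=\text{Sol}(\bar\Omega,(\emptyset,I_1^{nh},I_1^{vak}),\Gamma(V(\pi)))$; two solutions with the same $\pi'$-image agree in the $p_i$ but may differ by an additive constant in each $p_\alpha$, so $\tilde\pi'$ need not be injective, whence one obtains an extension rather than a dynamical $\pi'$-equivalence.

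The main obstacle is the Skinner--Rusk-type identity of the second paragraph, i.e.\ verifying that on the submanifold $\{p_i=\partial L/\partial v^i\}$ and along holonomic sections the added term $\d(p_i\kappa'^i)$ reproduces exactly the dynamics encoded in $\d\Theta_L$; everything else is the bookkeeping already carried out in Theorem~\ref{thm:absholo1} and in the nonholonomic absorption argument, including the global well-definedness of the $p_\alpha$ along a solution.
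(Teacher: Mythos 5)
Your proof is correct and follows essentially the same route as the paper: compute the equations of motion of $(\Omega_\mathfrak{U},\emptyset,\Gamma(V(\pi'\circ\pi)))$ in adapted coordinates, recover the constraints and the momenta $p_i=\partial L'/\partial v^i$, $\d p_\alpha=\sigma_\alpha$ dynamically, match the remaining equations with those of $(\bar\Omega,(\emptyset,I_1^{nh},I_1^{vak}),\Gamma(V(\pi)))$, and construct the converse lift by taking a primitive of $\gamma^*\sigma_\alpha$ on the interval $I_\gamma$, with the additive constant explaining why one only gets an extension. Your use of the adapted frame $\{\partial/\partial p_i,\partial/\partial p_\alpha,R'_\alpha,\partial/\partial v^i,\hat q'_j,\hat u'_a\}$ and the fixed choice $R_\alpha=\partial/\partial s^\alpha$ is a legitimate, slightly tidier bookkeeping of the same computation the paper performs with the raw coordinate fields.
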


\begin{proof} Let $\gamma'$ be a solution of $(\Omega_\mathfrak{U},\emptyset,\Gamma(V(\pi'\circ\pi)))$. The local expression of an element $\xi'\in\Gamma(V(\pi'\circ\pi)))$ is
\begin{align*}
    \xi'=f^i\frac{\partial}{\partial q^i}'
    + g^i \frac{\partial}{\partial v^i}'
    + F^\alpha \frac{\partial}{\partial s^\alpha}'
    + H^a\frac{\partial}{\partial u^a}'
    + G_i\frac{\partial}{\partial p_i}
    + G_\alpha\frac{\partial}{\partial p_\alpha}\,, 
\end{align*}
where $f^i, g^i,F^\alpha,H^a, G_i, G_\alpha\in C^\infty(J^1\rho\times N\times\mathbb{R}^{n+k})$. 

Then, the equations of motion of $(\Omega_\mathfrak{U},\emptyset,\Gamma(V(\pi'\circ\pi)))$ are:
\begin{align*}
\gamma'^*\inn_{\xi'}\Omega_\mathfrak{U}=&\gamma'^*\biggl[
f^i\left({\frac{\partial L'}{\partial q^i}}\d t - \d p_i-\inn_{\frac{\partial }{\partial q^i}'}\eta'^\alpha \d p_\alpha\right)
+g^\alpha\left({\frac{\partial L'}{\partial v^i}}\d t - p_i\d t\right)
\\
&
+F^\alpha\left(\frac{\partial L'}{\partial s^\alpha}\d t
-\d p_\alpha\right)
+H^a\left(\frac{\partial L'}{\partial u^a}-\inn_{\frac{\partial}{\partial u^a}'}\eta'^\alpha \d p_\alpha\right)
\\&
+G_i\kappa'^i
+G_\alpha\eta'^\alpha
\biggr]=0\,.
\end{align*}
Therefore, we have the equations:
\begin{align}
\gamma'^*\eta'^\alpha&=0\,,\label{eq:absvaknohol1}\,\quad  &\gamma'^*\kappa'^i=0\,,
\\
\gamma'^*\left(\frac{\partial L'}{\partial s^\alpha}\d t
-\d p_\alpha\right)&=0\,,\label{eq:absvaknohol4}\quad&
\gamma'^*\left({\frac{\partial L'}{\partial v^i}}\d t - p_i\d t\right)=0\,,
\\\label{eq:absvaknohol6}
\gamma'^*\left({\frac{\partial L'}{\partial q^i}}\d t - \d p_i-\inn_{\frac{\partial }{\partial q^i}'}\eta'^\alpha \d p_\alpha\right)&=0\,, &\gamma'^*\left(\frac{\partial L'}{\partial u^a}\d t-\inn_{\frac{\partial}{\partial u^a}'}\eta'^\alpha \d p_\alpha\right)=0\,.
\end{align}

Define $\gamma=\pi'\circ\gamma'$, which is a section of $\pi$. The section $\gamma$ satisfies the constraints as a consequence of \eqref{eq:absvaknohol1}. Due to \eqref{eq:absvaknohol4} we have that $ \gamma'^*p_i=\gamma^*\frac{\partial L}{\partial v^i}$ and $\gamma'^*\d p_\alpha=\gamma^*\frac{\partial L}{\partial s^\alpha}$. Moreover, $\gamma'^*\d p_\alpha=\gamma^*\sigma_\alpha$. Indeed, if we take $\xi'=R_\alpha'$:
\begin{align*}
0=\gamma'^*\inn_{R_\alpha'}\Omega_\mathfrak{U}=&\gamma'^*\inn_{R_\alpha'}\left(\pi'^*\d\Theta_L+\d\left(\left(p_i-\frac{\partial L'}{\partial v^i}\right)\kappa'^i\right)+\d p_\alpha\wedge\eta'^\alpha\right)=\gamma'^*\left(\pi'^*\inn_{R_\alpha}\d\Theta_L-\d p_\alpha\right)\,.
\end{align*}

For a generic vector field $\xi\in\Gamma(V(\pi))$ with local expression:
\begin{align*}
    \xi=f^i\frac{\partial}{\partial q^i}
    + g^i \frac{\partial}{\partial v^i}
    + F^\alpha \frac{\partial}{\partial s^\alpha}
    + H^a\frac{\partial}{\partial u^a}\,,  
\end{align*}
the equations of motion of the system $(\bar{\Omega},(\emptyset,I_1^{nh},I_1^{vak}),\Gamma(V(\pi)))$ are
\begin{align}\label{eq:absvaknohol7}
\gamma^*\inn_{\xi}\bar{\Omega}=\gamma^*\biggl[
f^i\left({\frac{\partial L}{\partial q^i}}\d t - \d \frac{\partial L}{\partial v^i}-\inn_{\frac{\partial}{\partial q^i}}\eta^\alpha\sigma_\alpha\right)
+F^\alpha\left(\frac{\partial L}{\partial s^\alpha}\d t-\sigma_\alpha\right)
+H^a\left(\frac{\partial L}{\partial u^a}-\inn_{\frac{\partial}{\partial u^a}}\eta^\alpha\sigma_\alpha\right)\biggr]=0\,.
\end{align}
Substituting \eqref{eq:absvaknohol4} into \eqref{eq:absvaknohol6} and using that $\gamma'^*\d p_\alpha=\gamma^*\sigma_\alpha$, we see that \eqref{eq:absvaknohol7} is satisfied. In conclusion, $\gamma=\pi\circ\gamma'$ is a solution of $(\bar{\Omega},(\emptyset,I_1^{nh},I_1^{vak}),\Gamma(V(\pi)))$.

Let $\gamma$ be a solution of $(\bar{\Omega},(\emptyset,I_1^{nh},I_1^{vak}),\Gamma(V(\pi)))$ with domain of definition $I_\gamma\subset\mathbb{R}$, which is an open interval by definition. Define $\gamma'$ as the section of $\pi'\circ\pi$ satisfying $\pi'\circ\gamma'=\gamma$, $\gamma'^*p_i=\gamma^*\frac{\partial L}{\partial v^i}$,
and $\gamma'^*p_\alpha$ defined as follows. The $1$-forms $\gamma^*\frac{\partial L}{\partial s^\alpha}\d t$ are closed in $\mathbb{R}$ and, since $I_\gamma$ is contractible, there exist functions $G_\alpha$  such that $\d G_\alpha=\gamma^*\frac{\partial L}{\partial s^\alpha}\d t$. Define $\gamma'^*p_\alpha(t)=G_\alpha(t)$.

Since $\gamma$ satisfies the constraints, $\gamma'$ satisfies \eqref{eq:absvaknohol1}. By the definition of the $p_\alpha$ and $p_i$, it automatically satisfies \eqref{eq:absvaknohol4}. From \eqref{eq:absvaknohol7} we derive the relation $\gamma'^*\d p_\alpha=\gamma^*\sigma_\alpha$. Finally, equations \eqref{eq:absvaknohol6} are satisfied as long as equations \eqref{eq:absvaknohol7} vanish, which they do because $\gamma$ is a solution. Therefore, $\gamma'$ is a solution of $(\Omega_\mathfrak{U},\emptyset,\Gamma(V(\pi'\circ\pi)))$.

Consider we have two solutions $\gamma'_1$ and $\gamma'_2$ of $(\Omega_\mathfrak{U},\emptyset,\Gamma(V(\pi'\circ\pi)))$, which map to the same section $\pi'\circ \gamma'_1=\pi'\circ \gamma'_2$. They can only differ on the coordinates $p_\alpha$. As a consequence of \eqref{eq:absvaknohol4}, 
\begin{align*}
    \gamma_1'^*\d p_\alpha
    =\gamma_2'^*\d p_\alpha\,.
\end{align*}
Therefore, $\gamma'^*_1p_\alpha$ and $\gamma'^*_2p_\alpha$ may differ by a constant. In particular, the map $\hat{\pi}'$ is not injective on Sol$(\Omega_\mathfrak{U},\emptyset,\Gamma(V(\pi'\circ\pi)))$.

\end{proof}

\section{Application: Singular Time-Dependent Lagrangians}\label{sec:singtime}

The cosymplectic approach to time-dependent Lagrangians takes place on $\pi:E=\mathbb{R}\times TQ\rightarrow \mathbb{R}$, which is a trivialization of the first jet bundle $J^1\rho$, of the fiber bundle $\rho:\mathbb{R}\times Q\rightarrow\mathbb{R}$ used in section \ref{sec:vak}. From a Lagrangian $L\in C^\infty(E)$ one can construct the Lagrangian $2$-form
\begin{align}
    \omega_L=-\d \frac{\partial L}{\partial v^i}\wedge\d q^i\,.
\end{align}

If $L$ is regular, the pair $(\omega_L,\tau)$ is a cosymplectic structure on $E$. Namely, $\tau$ and $\omega_L$ are closed and
\begin{align}
    \tau\wedge\omega_L^n\neq0\,.
\end{align}
Associated with any cosymplectic structure there is a unique vector field $R_t$ known as the Reeb vector field, such that~\cite{munoz-lecanda_geometry_2024}
\begin{align}
    \inn_{R_t}\tau=1\,,\quad \inn_{R_t}\omega_L=0\,.
\end{align}
The local expression of this vector field is \eqref{reeb:vak}. The corresponding field equations for a holonomic vector field $Z$ are:
\begin{align}\label{eq:eomcosym}
    \inn_Z\tau=1\,,\quad \inn_Z\omega_L=\d E_L- R_t(E_L)\tau\,.
\end{align}

When $L$ is not regular $(\omega_L,\tau)$ is a precosymplectic structure. That is, $\tau$ and $\omega_L$ are still closed, but 
\begin{align}
    \tau\wedge\omega_L^n=0\,.
\end{align}
In this case, if a Reeb vector field exists it is not unique. More significant, and sometimes overlooked in the literature, a Reeb vector field may not exist at all. For instance, if $n=1$ and $L=tv$, the Lagrangian $2$-form is $\omega_L=\d t\wedge\d q$. Clearly, $\ker \omega\subset \ker\tau$. In this situation equations \eqref{eq:eomcosym} cannot even be written down.

The discussion in sections \ref{sec:trans} and \ref{sec:vak} provides a solution for these cases. The variational principle for the Lagrangian $L=tv$ exists, and one can write down the Euler-Lagrange equations. The problem is that the precosymplectic structure $(\omega_L,\tau)$ is not the correct geometry to describe the dynamics. In section \ref{sec:trans} we have seen that it is enough to take any vector field $R_t$ such that $\inn_{R_t}\tau=1$. Then, the corresponding $\omega$ is derived from $L$ and also from $R_t$ (see  \eqref{eq:formstrans}). We can always take 
\begin{align}
    R_t=\frac{\partial}{\partial t}\,,
\end{align}
regardless of the regularity of the Lagrangian. Then,
\begin{align}
   \sigma_t&=-R_t(E_L)\tau+\d E_L+\frac{\partial^2 L}{\partial t\partial v^i}\d q^i\,,
   \\
  \omega &=\left[\d \left(\frac{\partial L}{\partial v^i}\right)-\frac{\partial^2 L}{\partial t\partial v^i} \d t\right]\wedge \d q^i=-\omega_L-\frac{\partial^2 L}{\partial t\partial v^i} \d t\wedge \d q^i.
\end{align}
By construction, $\inn_{R_t}\tau=1$ and $\inn_{R_t}\omega=0$. Therefore, $(\omega,\tau)$ is a precosymplectic structure with Reeb vector field for all Lagrangians. The field equations for a holonomic vector field $Z$ are:
\begin{align}
    \inn_Z\tau=1\,,\quad \inn_Z\omega=\d E_L- R_t(E_L)\tau+\Lie_{R_t}\left(\frac{\partial L}{\partial v^i}\d q^i\right)\,.
\end{align}
The existence of a solution of these equations is equivalent to the existence of a vertical vector field $X=Z-R_t$ such that the hypotheses of theorem \ref{theo:trans} hold. When no such solution exists, a constraint algorithm should be applied. The quantity $\frac{\partial L}{\partial v^i}\d q^i$ can be defined intrinsically as $\inn_J\d L$, where $J$ is the canonical endomorphism \cite{munoz-lecanda_geometry_2024}.

In conclusion, the dynamics of a time-dependent Lagrangian can always be recovered by a (pre)cosymplectic structure, but the $2$-form $\omega$ cannot always be the Lagrangian $2$-form $\omega_L$. The alternative pair $(\omega,\tau)$ with
\begin{align}
\omega=-\omega_L-\frac{\partial^2 L}{\partial t\partial v^i} \d t\wedge \d q^i=-\inn_J\d L+\left(\inn_{R_t}(\d\inn_J\d L)
\right)\wedge\tau
\end{align}
always has a Reeb vector field (is always precosymplectic) and provides the desired dynamics.

\section{Application: Singular Action-Dependent Lagrangians}\label{sec:singaction}

The Herglotz variational principle for regular Lagrangians is associated with contact geometry \cite{HerglotzGuenther}, and its nonautonomous counterpart to cocontact geometry \cite{LGGMR-2022}. Precontact and precocontact geometry describe some singular Lagrangians but not all \cite{LL-2019,LGGMR-2022}. In this section we will focus in those singular Lagrangians for which the standard procedure does not lead to pre(co)contact geometry (non-admissible Lagrangians, in the notation proposed in \cite{LGGMR-2022}). We will derive their dynamical equations geometrically and we will identify its geometry.

There are several equivalent definitions of a precontact structure. Here we present one characterization given in \cite{grabowska}. A \textbf{precontact} structure on a $(2n+1)$-dimensional manifold $\mathfrak{E}$ is given by the distribution $\ker(\bar\eta)$ of a $1$-form $\bar\eta$ such that $\d\bar\eta$ has rank $2r$ on $\ker(\bar\eta)$, with $0\leq r< n$. A precontact structure has a Reeb vector field, that is, $R\in\mathfrak{X}(\mathfrak{E})$ such that
\begin{align}\label{eq:reebcontact}
    \inn_R\bar\eta=1\,, \quad \inn_R\d\bar\eta=0\,,
\end{align}
although it is not unique, unlike in the contact case. Nevertheless, all Reeb vector fields lead to the same equations of motion \cite{LL-2019}, which are, for a vector field $Z\in\mathfrak{X}(\mathfrak{E})$ and a Hamiltonian $H\in C^\infty(\mathfrak{E})$:
\begin{align}\label{eq:eomcontact}
    \inn_Z\d \bar\eta=\d H-R(H)\bar\eta\,,\quad\inn_Z\bar\eta=-H\,.
\end{align}

Lagrangian (pre)contact systems on a manifold $\mathfrak{E}=TQ\times\mathbb{R}$ are given by a Lagrangian function $L:\mathfrak{E}\rightarrow \mathbb{R}$. The Hamiltonian is the Lagrangian energy $E_L$, with the same expression as \eqref{eq:lagrangianenergy}. Moreover, from the Lagrangian one can construct the 1-form~\cite{GGMRR-2019b}:
\begin{align}
    \eta_L=\d s-\frac{\partial L}{\partial v^i}\d q^i\,.
\end{align}
If the Lagrangian is regular, then $\eta_L$ defines a contact structure. If not, it may lead to a precontact structure, but not always. The prototypical Lagrangian that does not lead to a precontact structure is 
\begin{align}\label{eq:lagsingcontact}
L=vs\,,    
\end{align}
where $(q,v,s)$ are local coordinates of $\mathfrak{E}=T\mathbb{R}\times\mathbb{R}$. The corresponding 1-form is $\eta_L=\d s-s\d q$, and the kernels are the distributions generated at each point by
\begin{align}
    \ker \eta_L=\left\langle\frac{\partial}{\partial v},s\frac{\partial}{\partial s}+\frac{\partial}{\partial q}\right\rangle\,,\quad  \ker \d\eta_L=\left\langle\frac{\partial}{\partial v}\right\rangle\,.
\end{align}
The rank of $\ker \d\eta_L$ on $\ker\eta_L$ is $1$ and, thus, $\ker(\eta)$ is not a precontact structure. In particular, there does not exist a vector field satisfying \eqref{eq:reebcontact} for $\eta_L$. Therefore, it is not possible to write down the precontact equations \eqref{eq:eomcontact}. The situation for precocontact systems, that is, non-autonomous action-dependent Lagrangians, is similar.

\subsection{Geometric Derivation of the Herglotz-Euler-Lagrange Equations}

Consider $E=J^1\rho\times\mathbb{R}$, with local coordinates $(t,q^i,v^i,s)$. A non-autonomous action-dependent Lagrangian is a function $L:E\rightarrow \mathbb{R}$. As explained in section \ref{sec:nhvak}, the Herglotz variational problem is $(L\tau,(\emptyset,I^{nh}_1=\langle\eta\rangle,I_1^{vak}=\langle\kappa^i\rangle_{i=1,\dots,n}),\textup{Adm})$, where $\eta$ is
\begin{align}\label{eq:etanhvak2}
    \eta=\d s+\left(\frac{\partial L}{\partial v^i}v^i-L\right) \d t-\frac{\partial L}{\partial v^i}\d q^i=\eta_L+E_L\d t\,,
\end{align} 
 and $\kappa^i$ is the Cartan codistribution of $J^1\rho$ pullbacked to $E$. Due to theorems \ref{thm:base},  \ref{thm:vaknh1} and \ref{thm:vaknh2} it is dynamically equivalent to the GVP $    (\bar{\Omega},(\emptyset,I^{nh}_1,I_1^{vak}),\Gamma(V(\pi)))$, with
\begin{align}\label{eq:newthing}
    \bar{\Omega}=\d\Theta_L+\sigma_s\wedge\eta\,,\quad \Theta_L=L\d t+\frac{\partial L}{\partial v^i}\kappa^i\,,\quad \sigma_s=\inn_{R_s}\d\Theta_L\,.
\end{align}
Following the derivation of section \ref{sec:nhvak}, we see that it is not required that $\eta$ in \eqref{eq:etanhvak2} or $\eta_L$ are precontact. We only need that two vector fields exist, $R_t$ and $R_s$, such that
\begin{align}
    \inn_{R_t}\tau=1\,,\quad \inn_{R_s}\tau=0\,,\quad \inn_{R_s}\eta=1\,.
\end{align}

Indeed, consider a generic $R_s$ and $R_t$. In local coordinates $(t,q^i,v^i,s)$
\begin{align*}
    R_s=A^i\frac{\partial}{\partial q^i}+B^i\frac{\partial}{\partial v^i}+\left(1+A^i\frac{\partial L}{\partial v^i}\right)\frac{\partial}{\partial s}\,,
\end{align*}
with $A^i,B^i\in C^{\infty}(E)$. Then
\begin{align*}
    \sigma_s=\frac{\partial L}{\partial s}\d t+A^i\left(\frac{\partial L}{\partial q^i}\d t-\d\frac{\partial L}{\partial v^i}+\frac{\partial L}{\partial s}\frac{\partial L}{\partial v^i}\d t\right)+R_s\left(\frac{\partial L}{\partial v^i}\right)\kappa^i\,.
\end{align*}

A section $\gamma$ is solution of $(\bar{\Omega},(\emptyset,I^{nh}_1,I_1^{vak}),\Gamma(V(\pi)))$ if:
\begin{align}\label{eq:contactsing}
  \gamma^*\inn_\xi\bar{\Omega}=0\,, \forall{\xi}\in\Gamma(V(\pi))\,,\quad  \gamma^*\eta=0\,,\quad \gamma^*\kappa^i=0\,.  
\end{align}
A generic vector field $\xi\in\Gamma(V(\pi))$ has local expression:
\begin{align*}
    \xi=f^i\frac{\partial}{\partial q^i}
    + g^i \frac{\partial}{\partial v^i}
    + F\frac{\partial}{\partial s}\,,  
\end{align*}
with $f^i, g^i,F\in C^\infty(E)$. For $j=1,\dots,n$ define $\xi_j$ as the vector field $\xi_j\in\Gamma(V(\pi))$ with $f^j=1$, $f^i=0$ if $i\neq j$, $g^i=0$ and $F=\frac{\partial L}{\partial v^j}$. Then, if $\gamma$ satisfies the constraints,
\begin{align}\label{eq:singLagEq}
     0=\gamma^*\inn_{\xi_j}\bar{\Omega}=\gamma^*\left(\frac{\partial L}{\partial q^j}\d t-\d\frac{\partial L}{\partial v^j}+\frac{\partial L}{\partial s}\frac{\partial L}{\partial v^j}\d t\right)\,.
\end{align}
In particular, $\gamma^*\sigma_s=\gamma^*\frac{\partial L}{\partial s}\d t$. Finally, a lengthy but straightforward computations shows that, if $\gamma$ is a section that satisfies the constraints and \eqref{eq:singLagEq} then $\gamma^*\inn_\xi\bar{\Omega}=0$ for all vertical vector fields $\xi$.

Equations \eqref{eq:singLagEq}, together with the constraint $\gamma^*\eta=0$, are the Herglotz-Euler-Lagrange equations, derived geometrically for an arbitrary Lagrangian. Notice that the resulting equations do not depend on the $R_s$ or $R_t$ chosen, as expected from theorem \ref{thm:vaknh2}. This equations do not correspond to those of a precocontact or precosymplectic systems. In fact,
\begin{align}
    \d \bar\Omega=\d \sigma_s\wedge \eta-\sigma_s\d\eta\,.
\end{align}
If $\d\sigma_s=0$, corresponding to Lagrangians with a closed action dependence (see \cite{GLMR-2022}), and using the fact that $\d\eta=-\d\Theta_L$,
\begin{align}
    \d \bar\Omega=\sigma_s\wedge\bar\Omega\,.
\end{align}
Therefore, $\bar\Omega$ is locally conformally symplectic (l.c.s) (see \cite{CARINENA2025105418} for a recent presentation) if it is a regular $2$-form (for instance, if the Lagrangian is regular).  Nevertheless, some Lagrangians, including \eqref{eq:lagsingcontact}, do not have the property $\d \sigma_s=0$.

\subsection{Absorption of constraints}
 Consider the trivial bundle $\pi':J^1\rho\times \mathbb{R}\times\mathbb{R}^{n+1}\rightarrow J^1\rho\times \mathbb{R}$. The local coordinates are $(t,q^i,v^i,s,p_i,p_s)$. 
 From theorem \ref{thm:absnoholo}, the GVP $(\Omega_\mathfrak{U},\emptyset,\Gamma(V(\pi'\circ\pi)))$ is an extension of the GVP
\\$(\bar{\Omega},(\emptyset,I_1^{nh},I_1^{vak}),\Gamma(V(\pi)))$, with
    \begin{align}\label{eq:thetalvaknh2}
        \Omega_\mathfrak{U}=\d L'\wedge\d t  +\d(p_ i\kappa'^i)+\d p_s\wedge\eta'\,.
    \end{align}

The apostrophe denotes objects lifted to $J^1\rho\times \mathbb{R}\times\mathbb{R}^{n+1}$ via pull-back or zero-section. The dynamics of the system are completely encoded in the $2$-form $\Omega_\mathfrak{U}$. It is not closed in general:
\begin{align*}
    \d\Omega_\mathfrak{U}=-\d p_s\wedge \d \eta'=\d p_s\wedge\Omega_{\mathfrak{u}}-\d p_s\wedge \d \left(\left(p_i-\frac{\partial L}{\partial v^i}\right)\kappa'^i\right)\,.
\end{align*}

The last term disrupts the identification of $\Omega_\mathfrak{U}$ as l.c.s. However, this term is identically $0$ on solutions. This motivates the following reformulation. Define
\begin{align}
   \check\eta=\d s-p_i\kappa'^i- L'\d t,\,\quad \check\Omega=\d L'\wedge\d t  +\d(p_ i\kappa'^i)+\d p_s\wedge\check\eta=-\d \check\eta +\d p_s\wedge\check\eta\,.
\end{align}
\begin{thm}
    The GVP $(\Omega_\mathfrak{U},\emptyset,\Gamma(V(\pi'\circ\pi)))$ is dynamically equivalent to the GVP $(\check\Omega,\emptyset,\Gamma(V(\pi'\circ\pi)))$
\end{thm}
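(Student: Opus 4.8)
The plan is to play off the explicit relation between the two $2$-forms. Since $\eta'=\pi'^*\eta=\d s-L'\d t-\frac{\partial L'}{\partial v^i}\kappa'^i$ while $\check\eta=\d s-p_i\kappa'^i-L'\d t$, one has $\check\eta-\eta'=\left(\frac{\partial L'}{\partial v^i}-p_i\right)\kappa'^i$, and hence, subtracting the two definitions,
\begin{align}
\check\Omega-\Omega_\mathfrak{U}=\d p_s\wedge\left[\left(\frac{\partial L'}{\partial v^i}-p_i\right)\kappa'^i\right]\,.
\end{align}
Both GVPs live over the same bundle $\pi'\circ\pi$, carry no constraints, and have the same admissible variations $\Gamma(V(\pi'\circ\pi))$; so by Definition~\ref{dfn:solsec} it is enough to show that if a section $\gamma'$ of $\pi'\circ\pi$ solves one of them, then $\gamma'^*\inn_\xi(\check\Omega-\Omega_\mathfrak{U})=0$ for every $\xi\in\Gamma(V(\pi'\circ\pi))$, for then $\gamma'^*\inn_\xi\check\Omega=\gamma'^*\inn_\xi\Omega_\mathfrak{U}$ and the two solution sets coincide.

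First I would contract the difference form. For any $\xi\in\Gamma(V(\pi'\circ\pi))$,
\begin{align}
\inn_\xi(\check\Omega-\Omega_\mathfrak{U})=(\inn_\xi\d p_s)\left(\frac{\partial L'}{\partial v^i}-p_i\right)\kappa'^i-\left(\frac{\partial L'}{\partial v^i}-p_i\right)(\inn_\xi\kappa'^i)\,\d p_s\,,
\end{align}
so every term carries the function $\frac{\partial L'}{\partial v^i}-p_i$ as a factor. Next I would note that $\frac{\partial}{\partial v^i}\in\Gamma(V(\pi'\circ\pi))$ and that, since $\kappa'^i$, $\eta'$ and $\check\eta$ are all annihilated by $\frac{\partial}{\partial v^i}$ while $\d(p_j\kappa'^j)=\d p_j\wedge\kappa'^j-p_j\,\d v^j\wedge\d t$, a direct computation gives the \emph{same} $1$-form for both systems:
\begin{align}
\inn_{\frac{\partial}{\partial v^i}}\Omega_\mathfrak{U}=\inn_{\frac{\partial}{\partial v^i}}\check\Omega=\left(\frac{\partial L'}{\partial v^i}-p_i\right)\d t\,.
\end{align}
Hence, for a solution $\gamma'$ of either $(\Omega_\mathfrak{U},\emptyset,\Gamma(V(\pi'\circ\pi)))$ or $(\check\Omega,\emptyset,\Gamma(V(\pi'\circ\pi)))$, taking $\xi=\frac{\partial}{\partial v^i}$ forces $\gamma'^*\!\left[\left(\frac{\partial L'}{\partial v^i}-p_i\right)\d t\right]=0$; since $\gamma'^*\d t=\hat\tau$ is nowhere zero, this yields $\gamma'^*\!\left(\frac{\partial L'}{\partial v^i}-p_i\right)=0$ for all $i$ (for $\Omega_\mathfrak{U}$ this is exactly the equation of motion obtained in the proof of Theorem~\ref{thm:absnoholo}).

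With that equation in hand the conclusion is immediate: applying $\gamma'^*$ to the contracted difference form and using $\gamma'^*\!\left(\frac{\partial L'}{\partial v^i}-p_i\right)=0$ kills both terms, so $\gamma'^*\inn_\xi\check\Omega=\gamma'^*\inn_\xi\Omega_\mathfrak{U}$ for every $\xi\in\Gamma(V(\pi'\circ\pi))$. If $\gamma'$ solves $(\Omega_\mathfrak{U},\emptyset,\Gamma(V(\pi'\circ\pi)))$ the right side vanishes, hence so does the left, and $\gamma'$ solves $(\check\Omega,\emptyset,\Gamma(V(\pi'\circ\pi)))$; the reverse implication is identical. Therefore $\text{Sol}(\Omega_\mathfrak{U},\emptyset,\Gamma(V(\pi'\circ\pi)))=\text{Sol}(\check\Omega,\emptyset,\Gamma(V(\pi'\circ\pi)))$. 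There is no serious obstacle here: the argument rests entirely on the observation that the two $2$-forms differ by $\d p_s\wedge\mu\,\kappa'$ with $\mu=\frac{\partial L'}{\partial v^i}-p_i$, and that this very $\mu$ is forced to vanish along every solution of either system by the $\frac{\partial}{\partial v^i}$-component of the equations of motion; the only point needing a little care is verifying that this component is literally the same $1$-form for $\Omega_\mathfrak{U}$ and $\check\Omega$, which holds because the difference form is annihilated upon contraction with $\frac{\partial}{\partial v^i}$.
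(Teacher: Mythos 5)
Your proof is correct and rests on the same key observation as the paper's: both systems of equations of motion force $\gamma'^*\bigl(p_i-\frac{\partial L'}{\partial v^i}\bigr)=0$ (via the $\frac{\partial}{\partial v^i}$-contraction), after which the two sets of equations coincide. Your packaging of this as ``the difference $\check\Omega-\Omega_\mathfrak{U}=\d p_s\wedge\bigl[\bigl(\frac{\partial L'}{\partial v^i}-p_i\bigr)\kappa'^i\bigr]$ is annihilated along solutions'' is a slightly tidier presentation of the same argument the paper carries out by writing both full coordinate systems of equations and comparing them.
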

\begin{proof}
    Let $\gamma$ be a section of $\pi'\circ\pi$ and $\xi\in\Gamma(V(\pi'\circ\pi))$ with the local expression
\begin{align*}
    \xi=f^i\frac{\partial}{\partial q^i}
    + g^i \frac{\partial}{\partial v^i}
    + F \frac{\partial}{\partial s}
    + G_i\frac{\partial}{\partial p_i}
    + G\frac{\partial}{\partial p_s}\,, 
\end{align*}
where $f^i, g^i,F, G_i, G\in C^\infty(J^1\rho\times \mathbb{R}\times\mathbb{R}^{n+1})$. 
Then, the equations of motion of $(\Omega_\mathfrak{U},\emptyset,\Gamma(V(\pi'\circ\pi)))$ are:
\begin{align*}
\gamma^*\inn_{\xi}\Omega_\mathfrak{U}=&\gamma'^*\biggl[
f^i\left({\frac{\partial L'}{\partial q^i}}\d t - \d p_i+\frac{\partial L'}{\partial v^i}\d p_s\right)
+g^i\left({\frac{\partial L'}{\partial v^i}}\d t - p_i\d t\right)
\\
&
+F\left(\frac{\partial L'}{\partial s}\d t
-\d p_s\right)+G_i\kappa'^i
+G\eta'
\biggr]=0\,.
\end{align*}
On the other hand, the equations of motion of
$(\check\Omega,\emptyset,\Gamma(V(\pi'\circ\pi)))$ are:
\begin{align*}\label{eq:34}
\gamma^*\inn_{\xi}\check\Omega=&\gamma'^*\biggl[
f^i\left({\frac{\partial L'}{\partial q^i}}\d t - \d p_i+p_i\d p_s\right)
+g^i\left({\frac{\partial L'}{\partial v^i}}\d t - p_i\d t\right)
\\
&
+F\left(\frac{\partial L'}{\partial s}\d t
-\d p_s\right)+G_i\kappa'^i
+G\check\eta
\biggr]=0\,.
\end{align*}
Both systems of equations impose that $\gamma^*\frac{\partial L'}{\partial v^i}=\gamma^*p_i$, which makes the systems of equations equal. Therefore, they have the same solutions.
\end{proof}

The $2$-form $\check\Omega$ satisfies that $\d\check\Omega=\d p\wedge\check\Omega$. I would be l.c.s as defined in \cite{CARINENA2025105418} if $\check\Omega$ where regular. A simple computation in coordinates shows that it is regular when $p_i\neq\frac{\partial L'}{\partial v^i}$ and singular otherwise. Another difference is that it is a non-autonomous formulation, where the Lagrangian is already embedded in $\check\Omega$. This changes how the dynamics are defined\footnote{The Hamiltonian vector field $X$ as defined in \cite{CARINENA2025105418} with the same dynamics as the GVP $(\check\Omega,\emptyset,\Gamma(V(\pi'\circ\pi)))$ is the one related to the function $F=0$ and such that $\inn_X\d t=1$.}. Having said that, we will analyse $\check\Omega$ with the apprach presented in \cite{CARINENA2025105418} for l.c.s.

The $2$-form $\check\Omega$ is of ``first kind'' because there exists a vector field $U=\frac{\partial}{\partial p_s}$, called the Lee vector field, with the properties\footnote{In \cite{CARINENA2025105418} they denote $-\d p$ as $\eta$.}:
\begin{align}
 \Lie_U\check\Omega=0\,, \quad\inn_U\d p=1\,.   
\end{align}

In particular, it is $\d_{-\d p_S}$-exact:
\begin{align*}
    \d_{-\d p_s}(-\check\eta)=-\d\check\eta+\d p_s\wedge\check\eta=\check\Omega\,.
\end{align*}

Due to it being singular, there does not exists a characteristic vector field $\Gamma_{\check\Omega}$ with the property that $\inn_{\Gamma_{\check\Omega}}\check\Omega=-\d p$. Therefore, $\check\Omega$ conforms to a non-autonomous singular version of l.c.s of first kind which do not induce a bicontact structure $(\d p_s,\check\eta)$ (see \cite{bicontact,CARINENA2025105418}).

We can make explicit the failure of $\check{\eta}$ being (pre)contact, steaming form $(\d p_s,\check\eta)$ not being a bicontact structure due to $\check\Omega$ not being regular. Lets consider a generic $R$
\begin{align*}
R= f\frac{\partial}{\partial t}+ f^i\frac{\partial}{\partial q^i}
    + g^i \frac{\partial}{\partial v^i}
    + F \frac{\partial}{\partial s}
    + G_i\frac{\partial}{\partial p_i}
    + G\frac{\partial}{\partial p_s}\,. 
\end{align*}

If $\inn_R\check\eta=1$, then $F=1+p_if^i-f(p_iv^i-L')$. Then
\begin{align*}
\inn_R\d\check\eta&=-\left(f^i{\frac{\partial L'}{\partial q^i}}+g^i\left(\frac{\partial L'}{\partial v^i}-p_i\right)+F\frac{\partial L'}{\partial s}-G_iv^i\right)\d t-\left(G_i-f\frac{\partial L'}{\partial q^i}\right)\d q^i
\\&
-f\left(p_i-\frac{\partial L'}{\partial v^i}\right)\d v^i
+f\frac{\partial L'}{\partial s}\d s
+\left(f^i-fv^i\right)\d p_i=0\,.
\end{align*}
The problematic condition is the one corresponding to $\d t$. Using the other conditions and $\inn_R\check\eta=1$:
\begin{align*}
 fv^i{\frac{\partial L'}{\partial q^i}}+g^i\left(\frac{\partial L'}{\partial v^i}-p_i\right)+\frac{\partial L'}{\partial s}+\frac{\partial L'}{\partial s}fL'-f{\frac{\partial L'}{\partial q^i}}v^i=g^i\left(\frac{\partial L'}{\partial v^i}-p_i\right)+\frac{\partial L'}{\partial s}=0\,,   
\end{align*}
which has no solution if $\frac{\partial L'}{\partial v^i}-p_i=0$ and $\frac{\partial L'}{\partial s}\neq0$.
Notwithstanding, the approach to the Herglotz variational principal presented in this paper is more similar to multicontact \footnote{Multicontact geometry was intriduced in \cite{LGMRR-2022b} and generalized in \cite{bracketsmultico}. A survey and comparison of related structures can be found in \cite{SurveyAdFT}. For a more hands-on presentation, see \cite{deLeon:2024ztn}.}. In multicontact geometry, apart of asking that $\inn_R\check\eta=1$ and $\inn_R\d t=0$, it is only ask that $\inn_R\d\check\eta$ is $(\pi'\circ\pi)$-semibasic. The vector fields that satisfy this conditions are
\begin{align}\label{eq:reebsfinal}
    R= \frac{\partial}{\partial s}+g^i \frac{\partial}{\partial v^i}
    + G\frac{\partial}{\partial p_s}\,,
\end{align}
for arbitrary $g^i, G\in C^\infty(J^1\rho\times\mathbb{R}\times\mathbb{R}^{n+1})$. Namely, $\check\eta$ is not precontact in general, but it is premulticontact.

So far the discussion has been valid for any action-dependent Lagrangian $L$. What sets apart the non-admissible Lagrangians like \eqref{eq:lagsingcontact} from the admissible ones? One of the equations of motion \eqref{eq:34} is $p_i-\frac{\partial L'}{\partial v^i}=0$. This has to be understood as a constraint that defines a submanifold $\iota:\mathcal{S}\hookrightarrow  J^1\rho\times\mathbb{R}\times\mathbb{R}^{n+1}$ where solutions can exists. The differential forms $\check\Omega$ and $\check\eta$ can be transported to $\mathcal{S}$ using the smooth embedding $\iota$. Nonetheless, $\iota^*\check\eta$ may not be premulticontact on $\mathcal{S}$. That is the case for the Lagrangian \eqref{eq:lagsingcontact}.

If any of the possible Reeb vector field \eqref{eq:reebsfinal} is tangent to the constraint surface, then $\iota^*\check\eta$ is premulticontact. Locally, such a vector exists if there exists functions $g^i\in C^\infty(J^1\rho\times\mathbb{R}\times\mathbb{R}^{n+1})$ such that
\begin{align}\label{eq:reebsfinal2}
    g^i\frac{\partial^2L}{\partial v^i\partial v^j}+\frac{\partial^2L}{\partial s\partial v^j}=0\,.
\end{align}

This is not a necessary condition. For instance, consider $n=2$ with coordinates $(t,q^a,q^b,v^a,v^b,s,p_a,p_b,p_s)$ and the Lagrangian $L=sv^a+v^b$. Equation \eqref{eq:reebsfinal2} is never satisfied but 
\begin{align*}
    \iota^*\check\eta=\d s-s\d q^a-\d q^b\,
\end{align*}
has a Reeb vector field, $R=-\frac{\partial}{\partial q^b}$.

The constraints absorbed include the holonomy constraints. It is reasonable that the resulting structures are similar to the Lagrangian-Hamiltonian unified approach developed for contact mechanics in \cite{deLeon:2020wdu}. From that formalism a similar discussion about the singular action-dependent Lagrangians can be made. 

\section{Summary and Outlook}

\hspace{\parindent} We have presented a constructive method to find a geometric structure that recover the solutions of a variational problem. To do so, we have introduced the concept of geometric variational problem, a rather general idea which encompasses different geometrical structures describing dynamical systems on the same footing. This definition facilitates the comparison between them and other variational problems. We have shown how to identify a particular geometric structure from a variational problem in concrete situations characterized by how the constraints select the admissible variations. This includes the holonomy condition understood as a vakonomik constraint, and nonholonomic constraints, as a generalization of Herglotz's variational principle. In particular, we recovered the cases of symplectic, cosymplectic, contact and cocontact geometry, for Hamiltonian and regular Lagrangian systems. Therefore, the method agrees with previous results when applied to well-known situations.

We have applied the method to singular non-autonomous Lagrangian systems. We have shown that the resulting geometry is always (pre)cosymplectic, although it is not always possible to use the Lagrangian $2$-form. We have proposed a modification of the $2$-form that leads to a (pre)cosymplectic structure for any Lagrangian.

We have also applied the method to singular action-dependent Lagrangians, and shown that it does not always lead to precontact geometry. The dynamics can be recovered from a more general geometry \eqref{eq:newthing}, similar to contact geometry but without demanding the existence of a Reeb vector field.

We also present a method to incorporate the constraints as equations of motion. When applied to Hamilton's variational principle it leads to the Skinner-Rusk unified formalism. For action-dependent Lagrangians, the method leads to a non-autonomous and non-regular version of locally conformally symplectic geometry. Non admissible Lagrangians, like \eqref{eq:lagsingcontact}, are explained as those that, even though they start as premulticontact structures, they are not premulticontact on the constraint surface.

\paragraph{Outlook.} The method presented here is a first step to solving the problem of identifying the correct geometry of a dynamical system. To arrive at a satisfactory answer significant advances are still needed.

First of all, the method does not provide uniqueness. On the contrary, there are steps where a vector field has to be chosen, leading to different geometric objects. Nevertheless, all of them lead to the same dynamics and are different instances of the same kind of geometric structure. This is closer to the notion of equivalent Lagrangians, or Lepage equivalent forms.  In this regard, the techniques presented here can be adapted to detect equivalent geometric structures. Notwithstanding, these results do not preclude the existence of qualitatively different geometric structures with the same dynamics. The question of what are the intrinsic geometric objects that univocally characterize a dynamical system remains open.

Secondly, the results hinge on the use of globally defined $1$-forms and vector fields. The study of contact and $k$-contact geometries shows that it is more insightful to work with distributions \cite{GGKM-2024,deLucas:2024kei}. An approach using distributions could extend the result to more general constraints. Of special interest are $k$-contact distributions \cite{deLucas:2024kei}, which include the Cartan distribution and are related to nonholonomic constraints. A better understanding of these distributions, including weaker versions, will enable the generalization of the results presented here.

The method of absorbing the constraints leads to an important observation. All different geometric structures considered lead to the same kind of geometric structure: the dynamics is given by the kernel of a $2$-form. The subtleties of the different kinds of geometric structures is encoded in the particular evolution of some variables. When adding the variable, the interpretation appears clear, as we already expect that the dynamical evolution of the new variable to be special. But, for consistency, we should be aware of the contrary: any variable of a system whose evolution is special in the same way, should be considered an indication that the system of study is, in fact, an extension of a more fundamental one. Examples of this particular dynamics are constraints and auxiliary variables (they do not appear in the equations). The identification of these variables is closely related to the counting degrees of freedom used in physics (The method is credited to Dirac. For a more geometrical approach see \cite{Gracia:1988xp}, and for a recent development and discussion see \cite{ErrastiDiez:2023gme}).

A similar observation can be made regarding the processes of symplectification, contactification and reconstruction. Of particular interest are results like \cite{GGKM-2024,BLMP-2020}, where the presence of a particular distribution or submanifold leads to the identification of a particular kind of dynamics inside a bigger system. 

A takeaway message from the results presented here is that the constraints can change the underlying geometry of the system. In this regard it will be interesting to understand the relation between the results presented here and nonholonomic mechanics.
This is an alternative way of implementing nonholonomic constraints in dynamical systems, which gives less emphasis to the variations (see \cite{dLdD-1996}, and \cite{LLLM-2023,JL-2025} for recent developments). 

Finally, the application of the method to Herglotz's variational principle leads to a new geometric structure \eqref{eq:newthing}, similar to contact geometry but for a generic $1$-form $\eta$. This structure is a reduction of a non-autonomous non-regular version of l.c.s. geometry. Structures adjacent to l.c.s appear recurrently when studying contact geometry \cite{bracketsmultico,GGKM-2024}. To better characterize the dynamics arising from non-holonomic constraints (in particular, contact mechanics), it will be beneficial to develop l.c.s geometry for non-autonomous and singular systems.

\section*{Acknowledgements}
This paper is dedicated to the memory of M.C. Muñoz-Lecanda. He introduced me to the concept of nonholonomic constraints and noticed their importance in contact mechanics. 
This is the key insight that vertebrates this work. 
His inquisitive questions prompted me to think more deeply about geometric mechanics. 
The ensuing critical pondering has been instrumental in setting the problem in the terms presented in this work.

The author wishes to express his gratitude to the referees, whose valuable feedback and observations have significantly improved the manuscript.

The author acknowledges financial support of the Ministerio de Ciencia, Innovación y Universidades (Spain), PID2021-125515NB-21.


\begin{thebibliography}{99}


\bibitem{AM-78}
R.~Abraham, and J.E.~Marsden,
{\it Foundations of Mechanics} ($2nd$ ed.),
Benjamin--Cummings, Redwood City CA, 1987.
(ISBN-10: 080530102X).

\bibitem{Atiyah}
M.F. Atiyah,
``Convexity and commuting Hamiltonians'',
{\sl Bull. London Math.
Soc.} \textbf{14} (1), (1982).
(\url{ https://doi.org/10.1112/blms/14.1.1})

\bibitem{Bell:2025kxx}
 C. Bell, and D. Sloan
``Dynamical Similarity in Multisymplectic Field Theory,''
(\url{https://arXiv:2509.16099})

\bibitem{bicontact}
D.E. Blair, G.D. Ludden, and K. Yano,
``Geometry of complex manifolds similar to the Calabi-Eckmann manifolds'',
{\sl J. Differ. Geom.} \textbf{9}, (1974), 263–274.

\bibitem{Bravetti2017}
A.~Bravetti,
\newblock {``Contact Hamiltonian dynamics: the concept and its use''},
\newblock {\sl Entropy} {\bf 19}(10) (2017) 535.
(\url{https://doi.org/10.3390/e19100535})

\bibitem{BLMP-2020}
A. Bravetti, M. de Le\'on, J.C. Marrero, and E. Padr\'on,
``Invariant measures for contact Hamiltonian systems: symplectic sandwiches with contact bread'',
{\sl J. Phys. A: Math. Gen.} {\bf 53}(45) (2020) 455205.
(\url{https://doi.org/10.1088/1751-8121/abbaaa})

\bibitem{capriotti}
 S. Capriotti, 
``Unified formalism for Palatini gravity'',
{\sl  Int. J. Geom. Meth. Mod. Phys.} {\bf 15}(3) (2018) 1850044.
(\url{https://doi.org/10.1142/S0219887818500445})

\bibitem{CARINENA2025105418}
J.F. Cariñena, and P. Guha
``Lichnerowicz-Witten differential, symmetries and locally conformal symplectic structures''
{\sl J. Geom. Phys.} {\bf 210} (2025) 105418.
(\url{https://doi.org/10.1016/j.geomphys.2025.105418})

\bibitem{C1981} M. Crampin, “On the differential geometry of the Euler--Lagrange equations, and
the inverse problem of Lagrangian dynamics”, 
{\sl J. Phys. A: Math. Gen. } {\bf 14}(1981)  2567–
2575.
(\url{https://doi.org/10.1088/0305-4470/14/10/012})

\bibitem{CLLL}
L. Colombo, M. de León, M. Lainz and A. López-Gordón,
``Liouville-Arnold theorem for contact Hamiltonian systems''
(\url{https://arxiv.org/abs/2302.12061})


\bibitem{dLdD-1996}
M. de León, and D.M. de Diego
``On the geometry of non‐holonomic Lagrangian systems'',
{\sl  J. Math. Phys.} {\bf 37}(7) (1996) 3389--3414.
(\url{https://doi.org/10.1063/1.531571})


\bibitem{LeSaVi2016}
M.~de~Le\'on, M.~Salgado, and S.~Vilari\~no,
\emph{Methods of Differential Geometry in Classical Field Theories: $k$-Symplectic and $k$-Cosymplectic Approaches},
World Scientific, Hackensack, 2016.
(\url{http://doi.org/10.1142/9693})

\bibitem{deLeon:2016vms}
M.~de Le\'on, and C.~Sard\'on,
``Cosymplectic and contact structures for time-dependent and dissipative Hamiltonian systems,''
J. Phys. A \textbf{50} (2017) no.25, 255205.
(\url{https://doi.org/10.1088/1751-8121/aa711d})

\bibitem{LL-2018}
M.~{de Le{\'o}n}, and M.~{Lainz--Valc{\'a}zar},
``Contact Hamiltonian systems'',
{\sl J. Math. Phys.} {\bf 60}(10) (2019) 102902.
(\url{https://doi.org/10.1063/1.5096475})

\bibitem{LL-2019}
M. de León, and M. Lainz,
``Singular Lagrangians and precontact Hamiltonian Systems"
{\sl Int. J. Geom. Methods Mod. Phys.} {\bf 16} (2019).
(\url{https://doi.org/	10.1142/S0219887819501585})

\bibitem{deLeon:2020wdu}
M.~de Le{\'o}n, J.~Gaset, M.~Lainz, X.~Rivas, and N.~Rom{\'a}n-Roy,
``Unified Lagrangian-Hamiltonian Formalism for Contact Systems,''
Fortsch. Phys. \textbf{68} (2020) no.8, 2000045
(\url{https://oi:10.1002/prop.202000045})

\bibitem{LLM-2021}
M. de León, M. Lainz, and M. C. Muñoz-Lecanda,
``The Herglotz Principle and
Vakonomic Dynamics'', In F. Nielsen and F. Barbaresco, editors
{\sl Geometric Science of Information} {\bf 12829} of {\sl Lecture Notes in Computer Science} (2021) 183-190.
(\url{https://doi.org/10.1007/978-3-030-80209-7_21})

\bibitem{LGL-2021}
M. de Le\'on, J. Gaset, and M. La\'inz, 
``Inverse problem and equivalent contact systems'', 
{\sl J. Geom. Phys.} {\bf  176} (2022) 104500. 
(\url{https://doi.org/10.1016/j.geomphys.2022.104500})

\bibitem{LGGMR-2022}
M.~de~Le{\'{o}}n, J. Gaset, X. Gr\`acia, M.C. Mu\~noz-Lecanda, and X. Rivas,
``Time-dependent contact mechanics'',
{\sl Monatsh. Math.} {\bf 201}:1149--1183 (2023).
(\url{https://doi.org/10.1007/s00605-022-01767-1})

\bibitem{LGMRR-2022b}
M.~de~Le{\'{o}}n, J. Gaset, M.C. Mu\~noz-Lecanda, X. Rivas, and N. Rom\'an-Roy,
``Multicontact formulation for non-conservative field theories'',
\newblock {\sl J. Phys. A: Math. Theor.} {\bf 56}(2) (2023) 025201.
(\url{https://doi.org/10.1088/1751-8121/acb575})

\bibitem{LLM-2023}
M. de León, M. Lainz, and M.C. Muñoz-Lecanda,
``Optimal control, contact dynamics
and Herglotz variational problem'',
{\sl J. Nonlinear Sci.} {\bf 33}(1) (2023) 9.
(\url{https://doi.org/10.1007/s00332-022-09861-2})

\bibitem{LLLM-2023}
M. de León, M. Lainz, A. López-Gordón, and J.C. Marrero,
``A new perspective on nonholonomic brackets and Hamilton–Jacobi theory'',
{\sl J. Geom. Phys.}\textbf{ 198}, (2024) 105116
(\url{https://doi.org/10.1016/j.geomphys.2024.105116})


\bibitem{deLeon:2024ztn}
M.~de Le\'on, J.~Gaset Rif\`a, M.C.~Mu\~noz-Lecanda, X.~Rivas, and N.~Rom\'an-Roy,
``Practical Introduction to Action-Dependent Field Theories,''
{\sl Fortsch. Phys.} (2025).
(\url{https://doi.org/10.1002/prop.70000})

\bibitem{bracketsmultico}
M.~de León, R. Izquierdo-López, and X.~Rivas,
``Brackets in multicontact geometry and multisymplectization,''
\url{https://arXiv.2505.13224}

\bibitem{deLucas:2024kei}
J.~de Lucas, X.~Rivas, and T.~Sobczak,
``Foundations on k-contact geometry,''
(\url{https://arxiv.org/abs/2409.11001})

\bibitem{DH-2009}
E. Dyer, and K. Hinterbichler,
``Boundary Terms, Variational Principles and Higher Derivative Modified Gravity,''
Phys. Rev. D \textbf{79} (2009), 024028
(\url{doi:10.1103/PhysRevD.79.024028})


\bibitem{echeverria-enriquez_geometry_1996}
A. Echeverría-Enríquez, M.C. Muñoz-Lecanda, and N. Román-Roy,
``Geometry of Lagrangian first-order classical field theories",
{\sl Fortsch. Phys.} {\bf 44}, 235--280, (1996).
(\url{https://doi.org/10.1002/prop.2190440304})

\bibitem{ErrastiDiez:2023gme}
V.~Errasti D\'\i{}ez, M.~Maier, and J.A.~M\'endez-Zavaleta,
``Constraint characterization and degree of freedom counting in Lagrangian field theory,''
{\sl Phys. Rev. D} \textbf{109} (2024) no.2, 2
(\url{doi:10.1103/PhysRevD.109.025010})

\bibitem{PLG-1974}
P.L. García
``The Poincaré-Cartan invariant in the calculus of variations'', 
{\sl Symp. Math.} {\bf XIV} (1974) 219-246

\bibitem{GGMRR-2019b}
J.~{Gaset}, X.~{Gr\`acia}, M.C.~{Mu\~noz--Lecanda}, X.~{Rivas}, and
  N.~{Rom\'an--Roy},
``New contributions to the Hamiltonian and Lagrangian contact formalisms for dissipative mechanical systems and their symmetries'',
{\sl Int. J. Geom. Meth. Mod. Phys.} {\bf 17}(6) (2020) 2050090.
(\url{https://doi.org/10.1142/S0219887820500905})


\bibitem{GGMRR-2019}
J.~{Gaset}, X.~{Gr\`acia}, M.C.~{Mu\~noz-Lecanda}, X.~{Rivas}, and
  N.~{Rom\'an-Roy},
``A contact geometry framework for field theories with dissipation'',
\newblock {\sl Ann. Phys.} {\bf 414} (2020) 168092.
(\url{https://doi.org/10.1016/j.aop.2020.168092})

\bibitem{GGMRR-2020}
J.~{Gaset}, X.~{Gr\`acia}, M.C.~{Mu\~noz-Lecanda}, X.~{Rivas}, and N.~{Rom\'an-Roy},
``A $k$-contact Lagrangian formulation for nonconservative field theories'',
{\sl Rep. Math. Phys.} {\bf 87}(3) (2021) 347--368.
(\url{https://doi.org/10.1016/S0034-4877(21)00041-0})

\bibitem{GRL-2023}
J. Gaset, X. Rivas, and A. López-Gordón,
``Symmetries, conservation and dissipation in time-dependent contact systems'', 
{\sl Fortsch. Phys.} {\bf 71}(8-9) (2023) 2300048.
(\url{https://doi.org/10.1002/prop.202300048})

\bibitem{GLMR-2022}
J. Gaset, M. La\'inz, A. Mas, and X. Rivas,
``The Herglotz variational principle for dissipative field theories'', 
{\sl Geom. Mech.} {\bf 1}(2) (2024) 153--178.
(\url{https://doi.org/10.1142/S2972458924500060})

\bibitem{SurveyAdFT}
J. Gaset Rifà, X.~Rivas, and N. Román-Roy
``A survey on geometric frameworks for action-dependent classical field theories and their relationship''
(\url{https://doi.org/10.48550/arXiv.2505.13224})

\bibitem{GBVP}
F. Gay-Balmaz
``Clebsch Variational Principles in Field Theories and Singular Solutions of Covariant Epdiff Equations''
{\sl Rep. Math. Phys} {\bf 1}(71), 2 (2013) 231--277.
(\url{https://doi.org/10.1016/S0034-4877(13)60032-4})

\bibitem{GotayCartan}
 M.J. Gotay, 
``An exterior differential system approach to the Cartan form, symplectic geometry and mathematical physics''. 
{\sl Actes du Coll\`oque de G\'eom\'etrie Symplectique et Physique Math\'ematique en l'honneur de Jean-Marie Souriau}, (Aix-en-Provence, France,1990), 1991, 160--188.

\bibitem{grabowska}
K. Grabowska, and J. Grabowski,
``Reductions: precontact versus presymplectic",
{\sl 	Ann. Mat. Pura Appl.} {\bf 202}, 2803--2839 (2023).
(\url{https://doi.org/10.1007/s10231-023-01341-y})

\bibitem{GGKM-2024}
K. Grabowska, J. Grabowski, M. Kuś, and G. Marmo, 
``Contactifications: a Lagrangian description of compact Hamiltonian systems'', 
{\sl J. Phys. A: Math.Theor.} {\bf  57} (2024) 395204.
(\url{https://doi.org/10.1088/1751-8121/ad75d8}).

\bibitem{Gracia:1988xp}
X.~Gràcia, and J.M.~Pons,
``Gauge Generators, Dirac's Conjecture and Degrees of Freedom for Constrained Systems,''
{\sl Annals Phys.} \textbf{187} (1988), 355.
(\url{https://doi.org/10.1016/0003-4916(88)90153-4})


\bibitem{gracia_geometric_2003}
X. Gràcia, J. Marín-Solano, and M.C. Muñoz-Lecanda,
``Some geometric aspects of variational calculus in constrained systems",
{\sl Rep. Math. Phys.} {\bf 51}, 127--148 (2003).
(\url{https://doi.org/10.1016/S0034-4877(03)80006-X})

\bibitem{Griffiths}
P.A. Griffiths,
``Exterior Differential Systems and the Calculus of Variations'', 
Birkhäuser Boston, MA. (1983). (\url{https://doi.org/10.1007/978-1-4615-8166-6})

\bibitem{HerglotzGuenther}
R.B. Guenther, C.M. Guenther, and J.A. Gottsch,
``The Herglotz Lectures on contact Transformations and Hamiltonian System'',
Lecture Notes in Nonlinear analysis,
Juliusz Schauder Center for Nonlinear Analysis, Nicholas Copernicus University.

\bibitem{GMPS-2015}
V. Guillemin, E. Miranda, A.R. Pires, and G. Scott
``Toric Actions on b-Symplectic Manifolds",
{\sl Int. Math. Res. Not.} {\bf 2015}(14), (2025) 5818–5848.
(\url{https://doi.org/10.1093/imrn/rnu108})

\bibitem{JL-2025}
V.M. Jimenez, and M. de León,
``The nonholonomic bracket on contact mechanical systems",
{\sl J. Geom. Phys.} {\bf 213}, (2025) 105484.
(\url{https://doi.org/10.1016/j.geomphys.2025.105484})

\bibitem{Krupka}
D. Krupka,
''Introduction to global variational geometry", Amsterdam: Atlantis Press, (2015). (\url{doi:10.2991/978-94-6239-073-7})

\bibitem{Lichnerowicz}
A. Lichnerowicz,
“Les variétés de Jacobi et leurs algèbres de Lie associées,”
{\sl J. Math. Pures Appl.}. Neuvième Série \textbf{57} (1978),  453–488.

\bibitem{Margalef-Bentabol:2020teu}
J.~Margalef-Bentabol, and E.J.S.~Villase\~nor,
``Geometric formulation of the Covariant Phase Space methods with boundaries,''
{\sl Phys. Rev. D} \textbf{103} (2021) no.2, 025011
(\url{https://doi.org/10.1103/PhysRevD.103.025011})


\bibitem{MW}
J. Marsden, A. Weinstein,
``Reduction of symplectic manifolds with symmetry'', {\sl Rep. Math. Phys.}, {\bf 5}(1) (1974) 121--130. 
(\url{https://doi.org/10.1016/0034-4877(74)90021-4})

\bibitem{MO-2023}
E. Miranda, C. Oms,
``Contact structures with singularities: From local to global'', {\sl J. Geom. Phys.}, {\bf 192} (2023) 104957. 
(\url{https://doi.org/10.1016/j.geomphys.2023.104957})

\bibitem{munoz-lecanda_geometry_2024}
M.C.~Muñoz-Lecanda, and N. Román-Roy,
``Geometry of Mechanics",
(\url{https://arxiv.org/abs/2401.12650})

\bibitem{QC-2025}
G. Quijón, and S. Capriotti
``A Hamiltonian formalism for general variational problems, with applications to first order gravity with basis",  	(\url{https://arxiv.org/abs/2407.03991})


\bibitem{Roman-Roy:2005vwe}
N.~Roman-Roy,
``Multisymplectic Lagrangian and Hamiltonian formalism of first-order classical field theories,''
{\sl SIGMA}\textbf{ 5} (2005), 100.
(\url{https://doi.org/10.3842/SIGMA.2009.100})


\bibitem{SR-83}
R. Skinner, and R. Rusk,
``Generalized Hamiltonian dynamics I: Formulation on $T^*Q\otimes TQ$'', 
{\sl J. Math. Phys.} {\bf 24}(11) 2589-2594 (1983) .
(\url{https://doi.org/10.1063/1.525654})

\bibitem{Sloan:2024kzb}
D.~Sloan,
``Dynamical similarity in field theories'',
{\sl Class. Quant. Grav.}, 
 \textbf{42}, (2025) 045001.
(\url{https://doi.org/ 10.1088/1361-6382/ada714})

\bibitem{TTW-2000}
B. Tsang, S.W. Taylor, and G.C. Wake,
``Variational methods for boundary value problems''
{\sl J. Appl. Math. Decis. Sci.} {\bf 4}(2), 193--204 (2000).
 (\url{http://eudml.org/doc/121735})

\end{thebibliography}

\let\emph\oldemph

\bibliographystyle{abbrv}

\end{document}